\title{Stochastic Window Mean-Payoff Games}
\author[L. Doyen]{Laurent Doyen\lmcsorcid{0000-0003-3714-6145}}[a]
\author[P. Gaba]{Pranshu Gaba\lmcsorcid{0009-0000-8012-780X}}[b]
\author[S. Guha]{Shibashis Guha\lmcsorcid{0000-0002-9814-6651}}[b]
\address{CNRS \& LMF, ENS Paris-Saclay, Gif-sur-Yvette, France}
\email{ldoyen@lmf.cnrs.fr}
\address{Tata Institute of Fundamental Research, Mumbai, India}
\email{pranshu.gaba@tifr.res.in, shibashis@tifr.res.in}
\thanks{This work is partially supported by the Indian Science and Engineering Research Board (SERB) grant SRG/2021/000466 and by the Indo-French Centre for the Promotion of Advanced Research (IFCPAR)}
\keywords{stochastic games, finitary objectives, mean-payoff, reactive synthesis}
\tikzset{
    ->,
    >=latex,
    node distance=1.7cm,
    every state/.style={thick, fill=gray!20, minimum size=3pt, inner sep=3pt},
    random/.style={diamond, thick, fill=gray!20, inner sep=2.5pt},
    square/.style={regular polygon,regular polygon sides=4, thick, fill=gray!20, inner sep=1.5pt},
    initial text=$ $,
}
\theoremstyle{definition}
\newtheorem{construction}[thm]{Construction}
\begin{document}

\begin{abstract}
    Stochastic two-player games model systems with an environment that is both adversarial and stochastic.
    The adversarial part of the environment is modeled by a player (\(\PlayerAdversary\)) who tries to prevent the system (\(\PlayerMain\)) from achieving its objective. 
    We consider finitary versions of the traditional mean-payoff objective, replacing the long-run average of the payoffs by payoff average computed over a finite sliding window.
    Two variants have been considered: in one variant, the maximum window length is fixed and given, while in the other, it is not fixed but is required to be bounded. 
    For both variants, we present complexity bounds and algorithmic solutions for computing strategies for \(\PlayerMain\) to ensure that the objective is satisfied with positive probability, with probability~$1$, or with probability at least~$p$, regardless of the strategy of \(\PlayerAdversary\).
    The solution crucially relies on a reduction to the special case of non-stochastic two-player games.
    We give a general characterization of prefix-independent objectives for which this reduction holds.
    The memory requirement for both players in stochastic games is also the same as in non-stochastic games by our reduction.
    Moreover, for non-stochastic games, we improve upon the upper bound for the memory requirement of \(\PlayerMain\) and upon the lower bound for the memory requirement of \(\PlayerAdversary\).
\end{abstract}

\maketitle

\section{Introduction}%
We consider two-player turn-based stochastic games played on graphs. 
Games are a central model in computer science, in particular for the verification and synthesis of reactive systems~\cite{LNCS2500,CH12,FV97}. 
A stochastic game is played by two players on a graph with stochastic transitions,  where the players represent the system and the adversarial environment, while the objective represents the functional requirement that the synthesized  system aims to satisfy with a probability $p$ as large as possible. 
The vertices of the graph are partitioned into system, environment, and probabilistic vertices.
A stochastic game is played in infinitely many rounds, starting by initially placing a token on some vertex. 
In every round, if the token is on a system or an environment vertex, then the owner of the vertex chooses a successor vertex; if the token is on a probabilistic vertex, then the successor vertex is chosen according to a given probability distribution. 
The token moves to the successor vertex, from where the next round starts.
The outcome is an infinite sequence of vertices, which is winning for the system if it satisfies the given objective.
The associated \emph{quantitative satisfaction problem} is to decide, given a threshold $p$, whether the system can win with probability at least~$p$. 
The \emph{almost-sure problem} is the special case where $p=1$, and the \emph{positive problem} is to decide whether the system can win with positive probability.
The almost-sure and the positive problems are referred to as the \emph{qualitative satisfaction problems}.
When the answer to these decision problems is ${\sf Yes}$, it is useful to construct a winning strategy for the system that can be used as a model for an implementation that ensures the objective be satisfied with the given probability.

Traditional objectives in stochastic games are $\omega$-regular such as reachability, safety, and parity objectives~\cite{CH12}, or quantitative such as mean-payoff objectives~\cite{EM79,ZP96}.
For example, a parity objective may specify that every request of the environment  is eventually granted by the system, and a mean-payoff objective may specify  the long-run average power consumption of the system.
A well-known drawback of parity and mean-payoff objectives is that only the long-run behaviour of the system may be specified~\cite{AH94,CHH09A,HTWZ15}, allowing weird transient behaviour: for example, a system may grant all its requests but with an unbounded response time; or a system with long-run average power consumption below some threshold may exhibit arbitrarily long (but finite) sequences with average power consumption above the threshold.
This limitation has led to considering finitary versions of those objectives~\cite{CHH09A,KPV09,CDRR15}, where the sequences of undesired transient behaviours must be of fixed or bounded length.
Window mean-payoff objectives~\cite{CDRR15} are quantitative finitary objectives
that strengthen the traditional mean-payoff objective: the satisfaction of a window mean-payoff objective implies the satisfaction of the standard mean-payoff objective.
Given a length \(\WindowLength \geq 1\), the fixed window mean-payoff objective (\(\FWMPL\)) is satisfied if except for a finite prefix, from every point in the play, there exists a window of length at most \(\WindowLength\) starting from that point such that the mean payoff of the window is at least a given threshold.
In the bounded window mean-payoff objective (\(\BWMP\)), it is sufficient that there exists some length \(\WindowLength\) for which the \(\FWMPL\) objective is satisfied.

\paragraph*{Contributions.}
We present algorithmic solutions for stochastic games with 
window mean-payoff objectives, and show that the positive and almost-sure problems are solvable in polynomial time for \(\FWMPL\) (Theorem~\ref{thm:posfwmp_algorithm_correctness}), 
and in \(\NP \cap \coNP\) for \(\BWMP\) (Theorem~\ref{thm:quantitative_bwmp}).
The complexity result for the almost-sure problem entails that the quantitative satisfaction problem is in \(\NP \cap \coNP\) (for both the fixed and bounded version), using standard techniques for solving stochastic games with prefix-independent objectives~\cite{CHH09}.
Note that the \(\NP \cap \coNP\) bound for the quantitative satisfaction problem matches the special case of reachability objectives in simple stochastic games~\cite{Con92}, and thus would require a major breakthrough to be improved.

As a consequence, using the \(\FWMPL\) objective instead of the standard mean-payoff objective provides a stronger guarantee on the system, and even with a polynomial complexity for the positive and the almost-sure problems (which is not known for mean-payoff objectives), and at no additional computational cost for the  quantitative satisfaction problem. 
The solution relies on a reduction to non-stochastic two-player games (stochastic games without probabilistic vertices). 
The key result is to show that in order to win positively from some vertex of the game graph, it is necessary to win from some vertex of the non-stochastic game obtained by transforming all probabilistic vertices into adversarial vertices.
This condition, which we call the sure-almost-sure ({\sf SAS}) property (Definition~\ref{def:prop}), was used to solve finitary Streett objectives~\cite{CHH09}.
We follow a similar approach and generalize it to prefix-independent objectives satisfying the {\sf SAS} property (Theorem~\ref{thm:twop_to_stochastic}). 
The bounds on the memory requirement of optimal strategies of \(\PlayerMain\) can also be derived from the key result, and are the same as optimal bounds for non-stochastic games.
For \(\FWMPL\) and \(\BWMP\) objectives in particular, we show that the memory requirement of \(\PlayerAdversary\) is also no more than the optimal memory required by winning strategies in non-stochastic games.

As solving a stochastic game with a prefix-independent objective \(\Objective\) reduces to solving non-stochastic games with objective \(\Objective\) and showing that \(\Objective\) satisfies the \(\SAS\) property, we show that the \(\FWMPL\) and \(\BWMP\) objectives satisfy the \(\SAS\) property (Lemma~\ref{lem:fwmp-sas}, Lemma~\ref{lem:bwmp-sas}) and rely on the solution of non-stochastic games with these objectives \cite{CDRR15} to complete the reduction.

We improve the memory bounds for optimal strategies of both players in non-stochastic games. 
It is stated in~\cite{CDRR15} that \(\abs{\Vertices} \cdot \WindowLength\) memory suffices for both players, where \(|V|\) and \(\WindowLength\) are the number of vertices and the window length respectively. 
In \cite[Theorem 2]{BHR16a}, the bound is loosened to \(\bigO(w_{\max} \cdot \WindowLength^2)\) and \(\bigO(w_{\max} \cdot \WindowLength^2 \cdot \abs{\Vertices})\) for \(\PlayerMain\) and \(\PlayerAdversary\) respectively, where \(w_{\max}\) is the maximum absolute payoff in the graph, as the original tighter bounds~\cite{CDRR15} were stated without proof.
Since the payoffs are given in binary, this is exponential in the size of the input.
In contrast, we tighten the bounds stated in~\cite{CDRR15}.
We show that for \(\PlayerMain\), memory \(\WindowLength\) suffices (Theorem~\ref{thm:MemoryPlayer1}), and improve the bound on memory of \(\PlayerAdversary\) strategies as follows.
We compute the set \(W\) of vertices from which \(\PlayerAdversary\) can ensure that the mean payoff remains negative for \(\WindowLength\) steps, as well as the vertices from which \(\PlayerAdversary\) can ensure that the game reaches \(W\).
These vertices are identified recursively on successive subgames of the original input game.
If $k$ is the number of recursive calls, then we show that \(k \cdot \WindowLength\) memory suffices for \(\PlayerAdversary\) to play optimally (Theorem~\ref{thm:memoryPlayer2}).
Note that \(k \le \abs{\Vertices}\). 
We also provide a lower bound on the memory size for \(\PlayerAdversary\).
Given \(\WindowLength \ge 2\), for every \(k \ge 1\), we  construct a graph with a set \(V\) of vertices such that \(\PlayerAdversary\) requires at least \(k+1=\frac{1}{2}(|V|-\WindowLength+3)\) memory to play optimally (Theorem~\ref{thm:player-two-memory-lower-bound-g-k-l}).
This is an improvement over the result in~\cite{CDRR15} which showed that memoryless strategies do not suffice for \(\PlayerAdversary\). 
Our result is quite counterintuitive since given an open window (a window in which every prefix has a total weight less than \(0\)) that needs to be kept open for another \(j \le \WindowLength\) steps from a vertex \(v\), one would conjecture that it is sufficient for a \(\PlayerAdversary\)~winning strategy to choose an edge from \(v\) that leads to the minimum payoff over paths of length \(j\).
Thus for every \(j\), \(\PlayerAdversary\) should choose a fixed edge and hence memory of size \(\WindowLength\) should suffice.
However, we show that this is not the case.

To the best of our knowledge, this work leads to the first study of stochastic games with finitary quantitative objectives.

\paragraph*{Related work.}
Window mean-payoff objectives were first introduced in~\cite{CDRR15} for non-stochastic games, where solving \(\FWMPL\) was shown to be in \(\PTime\) and \(\BWMP\) in \(\NP \cap \coNP\).
These have also been studied for Markov decision processes (MDPs) in~\cite{BDOR20,BGR19}.
In~\cite{BDOR20}, a threshold probability problem has been studied, while in~\cite{BGR19}, the authors studied the problem of maximising the expected value of the window mean-payoff.
Positive, almost-sure, and quantitative satisfaction of \(\BWMP\) in MDPs are in \(\NP \intersection \coNP\) \cite{BDOR20}, the same as in non-stochastic games.

Parity objectives can be viewed as a special case of mean-payoff~\cite{Jurdzinski98}. 
A bounded-window parity objective has been studied in~\cite{Horn07,CHH09A} where a play satisfies the objective if from some point on, there exists a bound \(\WindowLength\) such that from every state with an odd priority a smaller even priority occurs within at most \(\WindowLength\) steps.
Non-stochastic games with bounded window parity objectives can be solved in \(\PTime\)~\cite{Horn07,CHH09A}.
Stochastic games with bounded window parity objectives have been studied in~\cite{CHH09} where the positive and almost-sure problems are in \(\PTime\) while the quantitative satisfaction problem is in \(\NP \cap \coNP\).
A fixed version of the window parity objective has been studied for two-player games and shown to be \(\PSpace\)-complete~\cite{WZ16}.
Another window parity objective has been studied in~\cite{BHR16} for which both the fixed and the bounded variants have been shown to be in \(\PTime\) for non-stochastic games.
The threshold problem for this objective has also been studied in the context of MDPs, and both fixed and bounded variants are in \(\PTime\)~\cite{BDOR20}.
Finally, synthesis for \emph{bounded} eventuality properties in LTL is 2-\(\EXPTime\)-complete~\cite{KPV09}.

\paragraph*{Outline.}
In Section~\ref{sec:preliminaries}, we provide the necessary technical preliminaries.
In Section~\ref{sec:window_mean_payoff}, we define window mean-payoff objectives and state relevant decision problems for stochastic games.
In Section~\ref{sec:memory-non-stochastic-games}, we give improved bounds on the memory requirements of the players' strategies for non-stochastic games with window mean-payoff objectives.
In Section~\ref{sec:reducing_stochastic_games_to_two_player_games}, we define a property of prefix-independent objectives that allows one to solve stochastic games by reducing them to non-stochastic games.
Finally, as an application, in Section~\ref{sec:reducing_stochastic_window_mean_payoff_games}, we provide solutions to problems in stochastic games with window mean-payoff objectives.

\section{Preliminaries}%
\label{sec:preliminaries}

\paragraph*{Probability distributions.}
A \emph{probability distribution} over a finite nonempty set \(A\) is a function  \(\Prob \colon A \to [0,1] \) such that \(\sum_{a \in A} \Prob(a) = 1\).  
The \emph{support} of a probability distribution \(\Prob\) over \(A\), denoted by \(\Support{\Prob}\), is the set of all elements $a$ in \(A\) such that \(\Prob(a) > 0\).
We denote by \(\DistributionSet{A}\) the set of all probability distributions over \(A\).
For the algorithmic and complexity results,  we assume that probabilities are given as rational numbers.

\paragraph*{Stochastic games.}
We consider two-player turn-based zero-sum stochastic games (or simply, stochastic games in the sequel). 
The two players are referred to as \(\PlayerMain\) and \(\PlayerAdversary\).
A \emph{stochastic game} is a weighted directed graph \(\Game = \tuple{(\Vertices, \Edges), (\VerticesMain, \VerticesAdversary, \VerticesRandom),  \ProbabilityFunction, \PayoffFunction}\), where:
\begin{itemize}
    \item \((\Vertices, \Edges)\) is a directed graph with 
    a set \(\Vertices\) of vertices and a set  \(\Edges \subseteq \Vertices \times \Vertices\) of directed edges such that
    for all vertices \(\Vertex \in \Vertices\), 
    the set \(\OutNeighbours{\Vertex} = \{ \Vertex' \in \Vertices \suchthat (\Vertex, \Vertex') \in \Edges\}\) 
    of out-neighbours of \(\Vertex\) is nonempty, i.e., \(\OutNeighbours{\Vertex} \ne \emptyset\) (no deadlocks).
    A stochastic game is said to be finite if~\(\Vertices\) is a finite set, and infinite otherwise. Unless mentioned otherwise, stochastic games considered in this paper are finite;
    \item \((\VerticesMain, \VerticesAdversary, \VerticesRandom)\) is a partition of \(\Vertices\). The vertices in \(\VerticesMain\) belong to \(\PlayerMain\), the vertices in~\(\VerticesAdversary\) belong to \(\PlayerAdversary\), and the vertices in \(\VerticesRandom\) are called probabilistic vertices;
    \item \(\ProbabilityFunction \colon \VerticesRandom \to \DistributionSet{\Vertices}\) is a \emph{probability function} that describes the behaviour of probabilistic vertices in the game. 
    It maps the probabilistic vertices \(\Vertex \in \VerticesRandom\) to a probability distribution \(\ProbabilityFunction(\Vertex)\) over the set \(\OutNeighbours{\Vertex}\) of out-neighbours of \(\Vertex\) such that \(\ProbabilityFunction(\Vertex)(\Vertex') > 0\) for all \(\Vertex' \in \OutNeighbours{\Vertex}\) (i.e., all neighbours have nonzero probability);
    \item  \(\PayoffFunction \colon \Edges \to \Rationals\) 
    is a \emph{payoff function} assigning a rational payoff to every edge in the game.
\end{itemize}
\noindent %FIXED indent
Stochastic games are played in rounds. 
The game starts by initially placing a token on some vertex.
At the beginning of a round, if the token is on a vertex~\(\Vertex\),
and \(\Vertex \in \Vertices[i]\) for \(i \in \{\Main, \Adversary\}\), 
then \(\Player{i}\) chooses an out-neighbour \(\Vertex' \in \OutNeighbours{\Vertex}\);
otherwise \(\Vertex \in \VerticesRandom\), and an out-neighbour \(v' \in \OutNeighbours{\Vertex}\) is chosen with probability \(\ProbabilityFunction(\Vertex)(v')\). 
\(\PlayerMain\) receives from \(\PlayerAdversary\) the amount \(\PayoffFunction(\Vertex, \Vertex')\) given by the payoff function, 
and the token moves to \(v'\) for the next round. 
This continues ad infinitum resulting in an infinite sequence \(\Play = \Vertex[0] \Vertex[1] \Vertex[2] \dotsm \in \Vertices^\omega\) such that \((\Vertex[i], \Vertex[i+1]) \in \Edges\) for all \(i \ge 0\).

A stochastic game with \(\VerticesRandom = \emptyset\) is called a non-stochastic two-player game, 
a stochastic game with \(\VerticesAdversary = \emptyset\) is called a Markov decision process (MDP), 
a stochastic game with  \(\VerticesAdversary = \VerticesRandom = \emptyset\) is called a one-player game, 
and a stochastic game with \(\VerticesMain = \VerticesAdversary = \emptyset\) is called a Markov chain. 
We use pronouns ``she/her'' for \(\PlayerMain\) and ``he/him'' for \(\PlayerAdversary\). 
\figurename~\ref{fig:swmp_example} shows an example of a stochastic game. 
In figures, \(\PlayerMain\) vertices are shown as circles, \(\PlayerAdversary\) vertices as boxes, and probabilistic vertices as diamonds. 

\begin{figure}[t]
    \centering
    \begin{tikzpicture}
        \node[random, draw] (v1) {\(\Vertex[1]\)};
        \node[random, draw, left of=v1, xshift=-13mm] (v2) {\(\Vertex[2]\)};
        \node[random, draw, right of=v1, xshift=+13mm] (v3) {\(\Vertex[3]\)};
        \node[state, below left of=v2] (v4) {\(\Vertex[4]\)};
        \node[state, below right of=v2] (v5) {\(\Vertex[5]\)};
        \node[square, draw, below left of=v3] (v6) {\(\Vertex[6]\)};
        \node[state, below right of=v3] (v7) {\(\Vertex[7]\)};
        \node[state, left of=v2, xshift=-13mm] (v8) {\(\Vertex[8]\)};
        \node[state, left of=v8] (v9) {\(\Vertex[9]\)};
        \draw 
              (v1) edge[above, pos=0.3] node{\(\EdgeValues{0}{.3}\)} (v2)
              (v1) edge[above, pos=0.3] node{\(\EdgeValues{0}{.7}\)} (v3)
              (v2) edge[] node[above, xshift=-2mm]{\(\EdgeValues{0}{.1}\)} (v4)
              (v2) edge[auto] node{\(\EdgeValues{0}{.9}\)} (v5)
              (v3) edge[bend right, above left] node[yshift=-2mm]{\(\EdgeValues{0}{.2}\)} (v6)
              (v3) edge[auto] node{\(\EdgeValues{0}{.8}\)} (v7)
              (v4) edge[loop left] node{\(\EdgeValues{0}{}\)} (v4)
              (v5) edge[loop right] node{\(\EdgeValues{-1}{}\)} (v5)
              (v6) edge[bend right, below right] node{\(\EdgeValues{-1}{}\)}  (v3)
              (v6) edge[loop left] node{\(\EdgeValues{0}{}\)} (v6)
              (v7) edge[loop right] node{\(\EdgeValues{0}{}\)}(v7)
              (v8) edge[above, pos=0.3] node{\(\EdgeValues{0}{}\)} (v2)
              (v8) edge[bend right, above] node{\(\EdgeValues{-1}{}\)} (v9)
              (v9) edge[bend right, below] node{\(\EdgeValues{0}{}\)} (v8)
        ;
    \end{tikzpicture}
    \caption{A stochastic game. 
    \(\PlayerMain\) vertices are denoted by circles, \(\PlayerAdversary\) vertices are denoted by boxes, and probabilistic vertices are denoted by diamonds.
    The payoff for each edge is shown in red and probability distribution out of probabilistic vertices is shown in blue.}
    \label{fig:swmp_example}
\end{figure}

\paragraph*{Plays and prefixes.}
A \emph{play} in \(\Game\) is an infinite sequence \(\Play = \Vertex[0] \Vertex[1] \cdots \in \Vertices^{\omega}\) of vertices  such that \((\Vertex[i], \Vertex[i+1]) \in \Edges\) for all \(i \ge 0\). 
We denote by \(\occ(\Play)\) the set of vertices in \(\Vertices\) that occur at least once in \(\Play\), and by  \(\inf(\Play)\) the set of vertices in \(\Vertices\) that occur infinitely often in \(\Play\).
For \(i < j\), we denote by \(\PlayInfix{i}{j}\) the \emph{infix} \(\Vertex[i] \Vertex[i+1] \dotsm \Vertex[j]\) of \(\Play\). 
Its length is \( \abs{\PlayInfix{i}{j}} = j - i\), the number of edges. 
We denote by \(\PlayPrefix{j}\) the finite \emph{prefix} \(\Vertex[0] \Vertex[1] \dotsm \Vertex[j]\) of \(\Play\), and by \(\PlaySuffix{i}\) the infinite \emph{suffix} \(\Vertex[i] \Vertex[i+1] \ldots \) of \(\Play\). 
We denote by \(\PlaySet{\Game}\) and \(\PrefixSet{\Game}\) the set of all plays and the set of all prefixes in \(\Game\) respectively; the symbol \(\Game\) is omitted when it can easily be derived from the context. 
We denote by \(\Last{\Prefix}\) the last vertex of a prefix \(\Prefix \in \PrefixSet{\Game}\). 
We denote by \(\PrefixSet[i]{\Game}\) (\(i \in \{\Main, \Adversary\}\)) the set of all prefixes \(\Prefix\) such that \(\Last{\Prefix} \in \Vertices[i]\). 
The \emph{cone} at \(\Prefix\) is the set \(\Cone{\Prefix} \define \{\Play \in \PlaySet{\Game} \suchthat \Prefix \text{ is a prefix of } \Play \}\), the set of all plays having \(\Prefix\) as a prefix.

\paragraph*{Objectives.}
An \emph{objective} \(\Objective\) is a Borel-measurable subset of \(\PlaySet{\Game}\)~\cite{BK08}.
A play \(\Play \in \PlaySet{\Game}\) \emph{satisfies} an objective \(\Objective\) if \(\Play \in \Objective\). 
In a (zero-sum) stochastic game \(\Game\) with objective \(\Objective\), the objective of \(\PlayerMain\) is \(\Objective\), and the objective of \(\PlayerAdversary\) is the complement set \(\ComplementObjective{\Objective} = \PlaySet{\Game} \setminus \Objective\). 
Given \(T \subseteq\Vertices\), we define some qualitative objectives:
\begin{itemize}
    \item the \emph{reachability} objective \(\ReachObj_{\Game}(T) \define \{ \Play \in \PlaySet{\Game} \suchthat T \intersection \occ(\Play) \ne \emptyset \}\), the set of all plays that visit a vertex in \(T\),
    \item the dual \emph{safety} objective \(\SafeObj_{\Game}(T) \define \{ \Play \in \PlaySet{\Game} \suchthat \occ(\Play) \subseteq T \}\),
    the set of all plays that never visit a vertex outside \(T\),
    \item the \emph{\Buchi} objective \(\textsf{\Buchi}_{\Game}(T) \define \{ \Play \in \PlaySet{\Game} \suchthat T \intersection \inf(\Play) \ne \emptyset \}\), the set of all plays that visit a vertex in~\(T\) infinitely often, and
    \item the dual \emph{co\Buchi} objective \(\textsf{co\Buchi}_{\Game}(T) \define \{ \Play \in \PlaySet{\Game} \suchthat \inf(\Play) \subseteq T \}\), the set of all plays that eventually only visit vertices in \(T\). 
\end{itemize}
\noindent %FIXED indent
An objective \(\Objective\) is \emph{closed under suffixes} if for all plays \(\Play\) satisfying \(\Objective\), 
all suffixes of \(\Play\) also satisfy \(\Objective\), that is,  \(\PlaySuffix{j} \in \Objective\) for all \(j \ge 0\). 
An objective \(\Objective\) is \emph{closed under prefixes} if for all plays \(\Play \) satisfying \(\Objective\), for all prefixes \(\Prefix\) such that the concatenation \(\Prefix \cdot \Play\) is a play in \(\Game\), i.e., \(\Prefix \cdot \Play \in \PlaySet{\Game}\), we have that \(\Prefix \cdot \Play \in \Objective\).
An objective \(\Objective\) is \emph{prefix-independent} if it is closed under both prefixes and suffixes.
An objective \(\Objective\) is closed under suffixes if and only if the complement objective \(\ObjectiveBar\) is closed under prefixes. 
Thus, an objective \(\Objective\) is prefix-independent if and only if its complement \(\ObjectiveBar\) is prefix-independent. 
The reachability objective is closed under prefixes, the safety objective is closed under suffixes, and the \Buchi\ and co\Buchi\ objectives are closed under both prefixes and suffixes.

\paragraph*{Strategies.}
A \emph{strategy} for \(\PlayerI \in \{\Main, \Adversary\}\) in a game \(\Game\) is a function  \(\Strategy[i]: \PrefixSet[i]{\Game} \to \DistributionSet{\Vertices}\) that maps prefixes ending in a vertex \(\Vertex \in \Vertices[i]\) to a probability distribution over the out-neighbours of \(\Vertex\). 
That is, a strategy prescribes a randomized move for the player, taking into account the history seen so far. 
A strategy \(\Strategy[\iLabel]\) is \emph{deterministic} (or \emph{pure}) if for all prefixes \(\Prefix \in \PrefixSet[i]{\Game}\), the support \(\Support{\Strategy[\iLabel](\Prefix)}\) is a singleton, that is, \(\Strategy[\iLabel](\Prefix)\) is a single vertex with probability~\(1\).
A deterministic strategy \(\Strategy[\iLabel]\) can be viewed as a function \(\Strategy[i]: \PrefixSet[i]{\Game} \to \Vertices\).
Unless mentioned otherwise, the strategies considered in this paper are deterministic.

The set of all strategies of Player \(i \in \{\Main, \Adversary\}\) in the game \(\Game\) is denoted by \(\StrategySet[\Game]{i}\), or \(\StrategySet{i}\) when \(\Game\) is clear from the context.
Strategies can be realized as the output of a (possibly infinite-state) Mealy machine.
A \emph{Mealy machine} is a transition system with transitions labeled by a pair of symbols: one from the input alphabet and one from an output alphabet. 
For each state \(q\) of the Mealy machine and every letter \(a\) of the input alphabet, there is exactly one transition defined from state \(q\) on reading the letter \(a\). 
Formally, a Mealy machine \(M\) is a tuple \((Q, q_0, \Sigma_i, \Sigma_o, \Delta, \delta)\) 
where 
\(Q\) is the set of states of \(M\) (the memory of the induced strategy), \(q_0 \in Q\) is the initial state,
\(\Sigma_i\) is the input alphabet, 
\(\Sigma_o\) is the output alphabet,
\(\Delta \colon Q \times \Sigma_i \to Q\) is a transition function that reads the current state of~\(M\) and an input letter and returns the next state of  \(M\), and \(\delta \colon Q \times \Sigma_i \to \Sigma_o \) is an output function that reads the current state of
\(M\) and an input letter and returns an output letter.

The transition function \(\Delta\) can be extended to a function \(\hat{\Delta} \colon Q \times \Sigma_i^+ \to Q\) that reads words and can be defined inductively by \(\hat{\Delta}(q, a) = \Delta(q, a)\) and \(\hat{\Delta}(q, x \cdot a) = \Delta(\hat{\Delta}(q, x), a) \), for \(q \in Q\), \(x \in \Sigma_i^+\), and \(a \in \Sigma_i\).
The output function~\(\delta\) can be also be similarly extended to a function  \(\hat{\delta} \colon Q \times \Sigma_i^+ \to \Sigma_o\) on words and can be defined inductively by \(\hat{\delta}(q, a) = \delta(q, a)\) and \(\hat{\delta}(q, x \cdot a) = \delta(\hat{\Delta}(q, x), a) \), for \(q \in Q\), \(x \in \Sigma_i^+\), and \(a \in \Sigma_i\).

A player's strategy can be defined by a Mealy machine whose input and output alphabets are \(\Vertices\) and \(\Vertices \union \{\epsilon\}\) respectively. 
For \(i \in \{\Main, \Adversary\}\), a strategy \(\Strategy[i]\) of \(\PlayerI\) can be defined by a Mealy machine \((Q, q_0, V, V \union \{\epsilon\}, \Delta, \delta)\) as follows: 
Given a prefix \(\Prefix \in \PrefixSet[i]{\Game}\) ending in a \(\PlayerIDash\) vertex, the strategy \(\Strategy[i]\) defined by a Mealy machine is \(\Strategy[i](\Prefix) = \hat{\delta}(q_0, \Prefix)\).
Intuitively, in each turn, if the token is on a vertex~\(v\) that belongs to \(\PlayerI\) for \(i \in \{\Main, \Adversary\}\), then~\(v\) is given as input to the Mealy machine, and the Mealy machine outputs the successor vertex of~\(v\) that \(\PlayerI\) must choose. 
Otherwise, the token is on a vertex \(v\) that either belongs to \(\PlayerI\)'s opponent or is a probabilistic vertex, in which case, the Mealy machine outputs the symbol \(\epsilon\) to denote that \(\PlayerI\) cannot decide the successor vertex of \(v\).
The \emph{memory size} of a strategy \(\Strategy[i]\) is the smallest number of states a Mealy machine defining \(\Strategy[i]\) can have. 
A strategy $\Strategy[\iLabel]$ is \emph{memoryless} if $\Strategy[\iLabel](\Prefix)$ only depends on the last element of the prefix~\(\Prefix\), that is for all prefixes \(\Prefix, \Prefix' \in \PrefixSet[\iLabel]{\Game}\) if \(\Last{\Prefix} = \Last{\Prefix'}\), then \(\Strategy[\iLabel](\Prefix) = \Strategy[\iLabel](\Prefix')\).
Memoryless strategies can be defined by Mealy machines with only one state. 

A strategy \emph{profile} \(\Profile = (\Strategy[\Main], \Strategy[\Adversary])\) is a pair of strategies \(\Strategy[\Main] \in \StrategySet[\Game]{\Main}\) and \(\Strategy[\Adversary] \in \StrategySet[\Game]{\Adversary}\). 
A play \(\Play = \Vertex[0] \Vertex[1] \dotsm\) is \emph{consistent} with a strategy \(\Strategy[i] \in \StrategySet{i}\) (\(i \in \{\Main, \Adversary\}\)) if for all \(j \geq 0\), we have that if \(\Vertex[j] \in \Vertices[i]\), then \(\Vertex[j+1] = \Strategy[i](\PlayPrefix{j})\). 
A play \(\Play\) is an \emph{outcome} of a profile \(\Profile = (\Strategy[\Main], \Strategy[\Adversary])\) if it is consistent with both \(\Strategy[\Main]\) and \(\Strategy[\Adversary]\). 
We denote by \(\Pr{\Strategy[\Main], \Strategy[\Adversary]}{\Game, \Vertex}{\Objective}\) the probability that an outcome of the profile \((\Strategy[\Main], \Strategy[\Adversary])\) in \(\Game\) with initial vertex \(\Vertex\) satisfies \(\Objective\). 
First, we define this probability measure over cones inductively as follows. 
If \(\abs{\Prefix} = 0\), then \(\Prefix \) is just a vertex \(\Vertex[0]\), and 
\(\Pr{\Strategy[\Main], \Strategy[\Adversary]}{\Game, \Vertex}{\Cone{\Prefix}}\) is $1$ if $\Vertex = \Vertex[0]$, and $0$ otherwise. 
For the inductive case \(\abs{\Prefix} > 0\), 
there exist \(\Prefix' \in \PrefixSet{\Game}\) and \(\Vertex' \in \Vertices\) such that \(\Prefix = \Prefix' \cdot \Vertex'\), and we have
\begin{equation*}
    \Pr{\Strategy[\Main], \Strategy[\Adversary]}{\Game, \Vertex}{\Cone{\Prefix' \cdot \Vertex'}} = 
    \begin{cases}
        \Pr{\Strategy[\Main], \Strategy[\Adversary]}{\Game, \Vertex}{\Cone{\Prefix'}}  \cdot \ProbabilityFunction(\Last{\Prefix'})(\Vertex') & \text{if } \Last{\Prefix'} \in \VerticesRandom,\\
        \Pr{\Strategy[\Main], \Strategy[\Adversary]}{\Game, \Vertex}{\Cone{\Prefix'}}   
        & \text{if } \Last{\Prefix'} \in \Vertices[i] \text{ and } \Strategy[i](\Prefix') = \Vertex', \\
        0 & \text{otherwise.}\\
    \end{cases}
\end{equation*}
It is sufficient to define \(\Pr{\Strategy[\Main], \Strategy[\Adversary]}{\Game, \Vertex}{\Objective}\) on cones in \(\Game\) since a measure defined on cones extends to a unique measure on \(\PlaySet{\Game}\) by Carath\'{e}odory's extension theorem~\cite{Billingsley86}.

\paragraph*{Non-stochastic two-player games.}
A stochastic game without probabilistic vertices (that is, with \(\VerticesRandom = \emptyset\)) is called a \emph{non-stochastic two-player game}
(or simply, non-stochastic game in the sequel). 
In a non-stochastic game \(\Game\) with objective $\Objective$, a strategy 
\(\Strategy[\iLabel]\) is \emph{winning} for \(\PlayerI\) (\(\iLabel \in \{\Main, \Adversary\}\)) if
every play in \(\Game\) consistent with \(\Strategy[\iLabel]\) satisfies the objective \(\Objective\).
A vertex \(\Vertex \in \Vertices\) is \emph{winning} for \(\Player{\iLabel}\) in \(\Game\) if \(\Player{\iLabel}\) has a winning strategy in \(\Game\) when the initial vertex is \(\Vertex\).
The set of vertices in \(\Vertices\) that are winning for \(\Player{\iLabel}\) in \(\Game\) is the \emph{winning region} of \(\Player{\iLabel}\) in \(\Game\), denoted \(\TwoPlayerWinningRegion[\Game]{\iLabel}{\Objective}\). 
If a vertex $v$ belongs to the winning region of Player $i$ ($i \in \{1,2\}$), then Player $i$ is said to play \emph{optimally} from $v$ if they follow a winning strategy.
By fixing a strategy \(\Strategy[i]\) of \(\Player{i}\) in a non-stochastic game \(\Game\) we obtain a (possibly infinite) one-player game \(\Game^{\Strategy[i]}\) with vertices \(\Vertices \times Q\), where \(Q\) is the set of states of a Mealy machine defining \(\Strategy[\iLabel]\).

\paragraph*{Subgames.}
Given a stochastic game \(\Game = \tuple{(\Vertices, \Edges), (\VerticesMain, \VerticesAdversary, \VerticesRandom),  \ProbabilityFunction, \PayoffFunction}\), a subset \(\Vertices' \subseteq \Vertices\) of vertices \emph{induces} a subgame if $(i)$
every vertex  \(\Vertex' \in \Vertices'\) has an outgoing edge in $\Vertices'$, that is
\(\OutNeighbours{\Vertex'} \cap \Vertices' \neq \emptyset\), and $(ii)$
every probabilistic vertex \(\Vertex' \in \VerticesRandom \intersection \Vertices'\) has all 
outgoing edges in~$\Vertices'$, that is \(\OutNeighbours{\Vertex'} \subseteq \Vertices'\).
The induced \emph{subgame} is 
\(\tuple{(\Vertices',  \Edges'),
(\VerticesMain \intersection \Vertices', \VerticesAdversary \intersection \Vertices', \VerticesRandom \intersection \Vertices'), \ProbabilityFunction',
\PayoffFunction'}\), where \(\Edges' = \Edges \intersection (\Vertices' \times \Vertices')\), and \(\ProbabilityFunction'\) and \(\PayoffFunction'\) are restrictions of \(\ProbabilityFunction\) and \(\PayoffFunction\) respectively to \((\Vertices', \Edges')\).
We denote this subgame by \(\Game \restriction \Vertices'\).
Let \(\Objective\) be an objective in the stochastic game \(\Game\). We define the restriction of \(\Objective\) to a subgame~\(\Game'\) of \(\Game\) 
to be the set of all plays in \(\Game'\) satisfying \(\Objective\), that is, the set \(\PlaySet{\Game'} \intersection \Objective\).

\paragraph*{Satisfaction probability.}
A strategy \(\Strategy[\Main]\) of \(\PlayerMain\) is \emph{winning} with probability \(p\) from an initial vertex \(\Vertex\) in \(\Game\) for objective \(\Objective\) if \(\Pr{\Strategy[\Main], \Strategy[\Adversary]}{\Game, \Vertex}{\Objective} \ge p\) for all strategies \(\Strategy[\Adversary]\) of \(\PlayerAdversary\).
A strategy \(\Strategy[\Main]\) of \(\PlayerMain\) is \emph{positive} winning (resp., \emph{almost-sure} winning) from \(v\) for \(\PlayerMain\) in~\(\Game\) with objective \(\Objective\) if 
\(\Pr{\Strategy[\Main], \Strategy[\Adversary]}{\Game, \Vertex}{\Objective} > 0\) (resp., \(\Pr{\Strategy[\Main], \Strategy[\Adversary]}{\Game, \Vertex}{\Objective} = 1\))
for all strategies \(\Strategy[\Adversary]\) of \(\PlayerAdversary\).
We refer to positive and almost-sure winning as \emph{qualitative} satisfaction of~\(\Objective\), while for arbitrary $p \in [0,1]$, we call it \emph{quantitative} satisfaction. 
We denote by \(\PosWinningRegion[\Game]{\Main}{\Objective}\) (resp., by \(\ASWinningRegion[\Game]{\Main}{\Objective}\)) the positive (resp., almost-sure) winning region of \(\PlayerMain\), i.e., the set of all vertices in \(\Game\) from which \(\PlayerMain\) has a positive (resp., almost-sure) winning strategy for \(\Game\) with objective~\(\Objective\). 
If a vertex $v$ belongs to the positive (resp., almost-sure) winning region of \(\PlayerMain\), then \(\PlayerMain\) is said to play \emph{optimally} from $v$ if she follows a positive (resp., almost-sure) winning strategy from $v$.
We omit analogous definitions for \(\PlayerAdversary\).

\paragraph*{Positive attractors and traps.}
The \(\PlayerIDash\)~\emph{positive attractor} (\(i \in \{\Main, \Adversary\}\)) to \(T \subseteq \Vertices\), denoted \(\PosAttr{\iLabel}{T}\), is the set  \(\PosWinningRegion[\Game]{\iLabel}{\ReachObj(T)}\) of vertices in \(\Vertices\) from which \(\PlayerI\) can ensure that the token reaches a vertex in \(T\) with positive probability.
It can be computed as the least fixed point  of the operator $\lambda x. \PosPre_{\iLabel}(x) \cup T$ where $\PosPre_1(x) = \{\Vertex \in \VerticesMain \cup \VerticesRandom \mid \OutNeighbours{\Vertex} \cap x \neq \emptyset\} \cup  \{\Vertex \in \VerticesAdversary \mid \OutNeighbours{\Vertex} \subseteq x \}$  is the positive predecessor operator for \(\PlayerMain\),  and $\PosPre_2$ is defined analogously for \(\PlayerAdversary\).
Intuitively $\PosPre_{\iLabel}(x)$ is the set of vertices from which \(\PlayerI\) has a strategy to ensure with positive probability that the vertex in the next round is in $x$.
It is possible to compute the positive attractor in \(\bigO(\abs{\Edges})\) time~\cite{CH08}.
It is easy to derive from the computation of \(\PosAttr{\iLabel}{T}\) a memoryless strategy for \(\Player{\iLabel}\) that ensures the positive satisfaction of $\ReachObj(T)$ from vertices in \(\PosAttr{\iLabel}{T}\).
We call such a strategy a \emph{positive-attractor strategy} of \(\Player{\iLabel}\). 
Given a set \(T\), we denote the standard notion of an attractor to \(T\) from the literature by \(\Attr{\iLabel}{T}\).
In non-stochastic games, a positive-attractor to the set \(T\) is the same as a standard attractor to \(T\).

A \emph{trap} for \(\PlayerMain\) is a set \(T \subseteq \Vertices\) such that  for every vertex \(\Vertex \in T\), if $\Vertex \in \VerticesMain \cup \VerticesRandom$, then $\OutNeighbours{\Vertex} \subseteq T$, and if $\Vertex \in \VerticesAdversary$, then $\OutNeighbours{\Vertex} \cap T \neq \emptyset$, that is $\PosPre_{\Main}(\Vertices \setminus T) = \emptyset$. 
In other words, from every vertex \(\Vertex \in T\), \(\PlayerAdversary\) can ensure (with probability~$1$) that the game  never leaves~\(T\), moreover using a memoryless strategy.
A trap for \(\PlayerAdversary\) can be defined analogously. 

\begin{rem}\label{rem:trap-pref-ind}
    Let \(\Game\) be a non-stochastic game with objective \(\Objective\) for \(\PlayerMain\). 
    If \(\Objective\) is closed under suffixes, then the winning region of \(\PlayerMain\) is a trap for \(\PlayerAdversary\). 
    As a corollary, if~\(\Objective\) is prefix-independent, then the winning region of \(\PlayerMain\) is a trap for \(\PlayerAdversary\) and the winning region of \(\PlayerAdversary\) is a trap for \(\PlayerMain\).
\end{rem}

\section{Window mean payoff}%
\label{sec:window_mean_payoff}
We consider two types of window mean-payoff objectives, introduced in~\cite{CDRR15}:  
\((i)\) \emph{fixed window mean-payoff} objective (\(\FWMPL\)) in which a window length \(\WindowLength \geq 1\) is given,  
and \((ii)\) \emph{bounded window mean-payoff} objective (\(\BWMP\)) in which for every play, we need a bound on window lengths. 
We define these objectives below.

For a play \(\Play\) in a stochastic game \(\Game\), the \emph{total payoff} of an infix \(\PlayInfix{i}{i+n} = \Vertex[i] \Vertex[i+1] \dotsm \Vertex[i+n]\) is the sum of the payoffs of the edges in the infix and  is defined as \(\mathsf{TP}(\PlayInfix{i}{i+n}) = \sum_{k=i}^{i+n-1} \PayoffFunction(\Vertex[k], \Vertex[k+1])\). 
The \emph{mean payoff} of an infix \(\PlayInfix{i}{i+n} \)  is the average of the payoffs of the edges in the infix and  is defined as \(\mathsf{MP}(\PlayInfix{i}{i+n}) = \frac{1}{n} \mathsf{TP}(\PlayInfix{i}{i+n}) = \sum_{k=i}^{i+n-1} \frac{1}{n} \PayoffFunction(\Vertex[k], \Vertex[k+1])\). 
The mean payoff of a play $\pi$ is defined as \(\mathsf{MP}(\Play) = \displaystyle{\liminf_{n \rightarrow \infty}}\;\mathsf{MP}(\PlayInfix{0}{n})\).

Given a window length \(\WindowLength \geq 1\), a play \(\Play = \Vertex[0] \Vertex[1] \dotsm \)  in \(\Game\) satisfies the \emph{fixed window mean-payoff objective} \(\FWMPL[\Game]\) if  from every position after some point, it is possible to start an infix of length at most $\WindowLength$ with a nonnegative mean payoff.
Formally, 
\begin{equation*}
    \FWMPL[\Game] = \{ \Play \in \PlaySet{\Game} \mid \exists k \geq 0 \cdot \forall i \ge k \cdot \exists j \in \PositiveSet{\WindowLength}: \mathsf{MP}(\Play(i, i + j)) \ge 0\}. 
\end{equation*} 
We omit the subscript \(\Game\) when it is clear from the context. 
In this definition, there is no loss of generality in considering mean-payoff threshold $0$ rather than some \(\Threshold \in \Rationals\): consider the game \(\Game'\) obtained by subtracting \(\Threshold\) from every edge payoff in \(\Game\), and the mean payoff of any infix of a play in \(\Game\) is at least \(\Threshold\) if and only if its mean payoff in \(\Game'\) is nonnegative. 
Moreover, observe that the mean payoff of an infix is nonnegative if and only if the total-payoff of the infix is nonnegative.

Note that when \(\WindowLength = 1\), the \(\FWMP(1)\) and \(\overline{\FWMP(1)}\) (i.e., the complement of \(\FWMP(1)\)) objectives reduce to co\Buchi\ and \Buchi\ objectives respectively. 
To see this, let \(T\) be the set of all vertices \(v \in \Vertices\) such that either \(v \in \VerticesMain\) and all out-edges of \(v\) have a negative payoff, or \(v \in \VerticesAdversary\) and at least one out-edge of \(v\) has a negative payoff. 
Then, a play satisfies the \(\overline{\FWMP(1)}\) objective if and only if it satisfies the \textsf{\Buchi}\((T)\) objective. 
The following properties of \(\FWMPL\) have been observed in~\cite{CDRR15}.
The fixed window mean-payoff objective provides a robust and conservative approximation of the 
traditional mean-payoff objective, defined as the set of plays with nonnegative 
mean payoff: for all window lengths \(\WindowLength\), if a play $\Play$ satisfies \(\FWMPL[\Game]\), then it has a nonnegative mean payoff.
Since $\WindowLength \leq \WindowLength'$ implies $\FWMPObj[\Game]{\WindowLength} \subseteq \FWMPObj[\Game]{\WindowLength'}$, more precise approximations of mean payoff can be obtained by increasing the window length. 
In all plays satisfying \(\FWMPL\), there exists a suffix that can be decomposed into infixes of length at most $\WindowLength$, each with a nonnegative mean payoff. 
Such a desirable robust property is not guaranteed by the classical mean-payoff objective, where infixes of unbounded lengths may have negative mean payoff.

As defined in~\cite{CDRR15}, given a play \(\Play = \Vertex[0] \Vertex[1] \cdots\) and \(0 \le i<  j\), we say that the window \(\Play(i, j)\) is \emph{open} if the total-payoff of \(\Play(i,k)\) is negative for all \(i < k \le j\). 
Otherwise, the window is \emph{closed}.
Given \(j > 0\), we say a window is open at \(j\) if there exists an open window \(\Play(i, j)\) for some \(i < j\). 
The window starting at position \(i\) \emph{closes} at position \(j\) if \(j\) is the first position after \(i\) such that the total payoff of \(\Play(i, j)\) is nonnegative. 
If the window starting at \(i\) closes at \(j\), then for all \(i \le k < j\), the windows \(\Play(k, j)\) are closed. 
This property is called the \emph{inductive property of windows}.
A play \(\Play\) satisfies \(\FWMPL\) if and only if, from some point on, every window in \(\Play\) closes within at most \(\WindowLength\) steps.

We also consider the bounded window mean payoff objective \(\BWMP[\Game]\). 
We omit the subscript \(\Game\) when it is clear from the context.
A play \(\Play\) satisfies the \(\BWMP\) objective if there exists a window length \(\WindowLength \ge 1\) for which \(\Play\) satisfies \(\FWMPL\). Formally,
\begin{align*}
    \BWMP[\Game] 
    & = \{ \Play \in \PlaySet{\Game} \suchthat \exists \WindowLength \ge 1 : \Play \in \FWMPL \}.
\end{align*}
Equivalently, a play \(\Play\) does not satisfy \(\BWMP\) if and only if for every suffix of \(\Play\), for all \(\WindowLength \ge 1\), the suffix contains an open window of length \(\WindowLength\).
Note that both \(\FWMPL\) for all \(\WindowLength \ge 1\) and \(\BWMP\) are prefix-independent objectives.

\paragraph*{Decision problems.}
Given a game \(\Game\), an initial vertex \(\Vertex \in \Vertices\), a rational threshold  \(p \in [0, 1]\), and an objective \(\Objective\)  (that is either \(\FWMPL\) for a given window length \(\WindowLength \ge 1\), or \(\BWMP\)),
consider the problem of deciding:
\begin{itemize} 
    \item  \emph{Positive satisfaction of \(\Objective\)}:
    if \(\PlayerMain\) positively wins \(\Objective\) from \(\Vertex\), i.e., if \(\Vertex \in \PosWinningRegion[\Game]{\Main}{\Objective}\).
    \item  \emph{Almost-sure satisfaction of \(\Objective\)}:
    if \(\PlayerMain\) almost-surely wins \(\Objective\) from \(\Vertex\), i.e.,
    if \(\Vertex \in \ASWinningRegion[\Game]{\Main}{\Objective}\).
    \item  \emph{Quantitative satisfaction  of \(\Objective\)}
    (also known as \emph{quantitative value problem}~\cite{CHH09}):
    if \(\PlayerMain\) wins \(\Objective\) from \(\Vertex\) with probability at least \(p\), i.e., if \(\sup_{\Strategy[\Main] \in \StrategySet{\Main}} \inf_{\Strategy[\Adversary] \in \StrategySet{\Adversary}} \Pr{\Strategy[\Main], \Strategy[\Adversary]}{\Game, \Vertex}{\Objective} \ge p\).
\end{itemize}
\noindent %FIXED indent
Note that these three problems coincide for non-stochastic games.
As considered in previous works~\cite{CDRR15,BGR19,BDOR20}, the window length $\WindowLength$ is usually small (typically $\WindowLength \leq \abs{V}$), and therefore we assume that $\WindowLength$ is given in unary (while the payoff on the edges is given in binary).

\paragraph*{Determinacy.}
From determinacy of  Blackwell games~\cite{Mar98}, stochastic games with window mean-payoff objectives as defined above are determined, i.e., the largest probability with which \(\PlayerMain\) is winning and the largest probability with which \(\PlayerAdversary\) is winning add up to \(1\).

\paragraph*{Algorithms for non-stochastic window mean-payoff games.}
To compute the positive and almost-sure winning regions for \(\PlayerMain\) for \(\FWMPL\), we recall intermediate objectives defined in~\cite{CDRR15}.
The \emph{good window} objective \(\GW_{\Game}(\WindowLength)\) consists of all plays \(\Play\) in \(\Game\) such that the window opened at the first position in the play closes in at most \(\WindowLength\) steps:
\[\GW_{\Game}(\WindowLength) = \{\Play \in \PlaySet{\Game} \mid \exists j \in \PositiveSet{\WindowLength} : \mathsf{MP}(\Play(0,j)) \ge 0 \}. \]
The \emph{direct fixed window mean-payoff} objective \(\DirFWMP_{\Game}(\WindowLength)\) consists of all plays \(\Play\) in \(\Game\) such that from every position in \(\Play\), the window closes in at most \(\WindowLength\) steps:
\[\DirFWMP_{\Game}(\WindowLength) = \{\Play \in \PlaySet{\Game} \mid \forall i \ge 0 :  \Play(i, \infty) \in \GW_{\Game}(\WindowLength) \}. \]
The \(\FWMPL[\Game]\) objective can be expressed in terms of \(\DirFWMP_{\Game}(\WindowLength)\):
\[\FWMPL[\Game] = \{\Play \in \PlaySet{\Game} \mid \exists k \ge 0 :  \PlaySuffix{k} \in \DirFWMP_{\Game}(\WindowLength) \}. \]

We refer to Algorithm 1, 2, and 3 from \cite{CDRR15} shown below with the same numbering. 
They compute the winning regions for \(\PlayerMain\) for the \(\FWMPL\), \(\DirFWMP(\WindowLength)\), and \(\GW(\WindowLength)\) objectives in non-stochastic games respectively.
\cite[Algorithm~2 and Algorithm~3]{CDRR15} contain subtle errors for which the fixes are known~\cite{BHR16a,Hautem18}.
In fact, a related objective that is a combination of the good window and reachability objectives was studied in~\cite{Hautem18} and~\cite{BHR16a} from which the correct algorithms can be derived.
For completeness, we include below counterexamples for the versions in~\cite{CDRR15}, along with the correct algorithms and brief explanations of correctness.

\begin{algorithm}[t]
    \caption{\(\TwoPlayerFWMP(\Game, \WindowLength)\) \cite[Algorithm~1]{CDRR15}}\label{alg:fixed_wmp}
    \begin{algorithmic}[1]
        \Require{\(\Game = \tuple{(\Vertices, \Edges), (\VerticesMain, \VerticesAdversary, \emptyset), \PayoffFunction}\), the non-stochastic game, and \(\WindowLength \geq 1\), the window length}
        \Ensure{The set of vertices in \(\Vertices\) from which \(\PlayerMain\) wins \(\FWMPL\)}
        \State \(W_{d} \assign \TwoPlayerDirectWMP(\Game, \WindowLength)\) \label{alg_line:fixed_wmp:direct}
        \If{\(W_d = \emptyset\)}
            \State \Return{\(\emptyset\)} \label{alg_line:fixed_wmp:termination}
        \Else
            \State \(A \assign \Attr{\Main}{W_d}\) \label{alg_line:fixed_wmp:attractor}
            \State \Return \(A \union \TwoPlayerFWMP(\Game \restriction (V \setminus A), \WindowLength)\)
        \EndIf
    \end{algorithmic}
\end{algorithm}
\begin{algorithm}[t]
    \caption{\(\TwoPlayerDirectWMP(\Game, \WindowLength)\)}\label{alg:direct_wmp}
    \begin{algorithmic}[1]
        \Require{\(\Game = \tuple{(\Vertices, \Edges), (\VerticesMain, \VerticesAdversary, \emptyset), \PayoffFunction}\) the non-stochastic game, and \(\WindowLength \geq 1\), the window length}
        \Ensure{The set of vertices in \(V\) from which \(\PlayerMain\) wins \(\DirFWMP(\WindowLength)\)}
        \State \(W_{gw} \assign \GoodWin(\Game, \WindowLength) \)\label{alg_line:direct_wmp:gw}
        \If{\(W_{gw} = \Vertices\) or \(W_{gw} = \emptyset\)}
            \State \Return \(W_{gw}\)\label{alg:direct_wmp:gw_call}
        \Else
            \State \(A \assign \Attr{\Adversary}{\Vertices \setminus W_{gw}}\) \label{alg:direct_wmp:attr}
            \State \Return \(\TwoPlayerDirectWMP(\Game \restriction (W_{gw} \setminus A), \WindowLength)\)
        \EndIf
    \end{algorithmic}
\end{algorithm}

\begin{algorithm}[t]
    \caption{\(\GoodWin(\Game, \WindowLength)\)  }\label{alg:good_win}
    \begin{algorithmic}[1]
    \Require{\(\Game = \tuple{(\Vertices, \Edges), (\VerticesMain, \VerticesAdversary, \emptyset), \PayoffFunction}\) the non-stochastic game, and \(\WindowLength \geq 1\), the window length}
    \Ensure{The set of vertices in \(V\) from which \(\PlayerMain\) wins \(\GW(\WindowLength)\)}
        \ForAll {\(\Vertex \in \Vertices\)}
        \State \(C_0(\Vertex) \assign 0 \)\label{alg_line:goodwin_c_0}
        \ForAll{ \(i \in \{1, \ldots, \WindowLength\}\) }
            \State \(C_{i}(\Vertex) \assign -\infty\)
        \EndFor
    \EndFor
    \ForAll{\(i \in \{1, \ldots, \WindowLength\}\)}
        \ForAll {\(\Vertex \in \VerticesMain\)}
        \State \(C_i(\Vertex) \assign \max_{(\Vertex,\Vertex') \in \Edges}\{\max\{\PayoffFunction(\Vertex,\Vertex'), \PayoffFunction(\Vertex,\Vertex') + C_{i-1}(\Vertex')\}\}\)
        \Comment{In~\cite{CDRR15}, \( \PayoffFunction(\Vertex,\Vertex') + C_{i-1}(\Vertex')\) was used instead of \(\max\{\PayoffFunction(\Vertex,\Vertex'), \PayoffFunction(\Vertex,\Vertex') + C_{i-1}(\Vertex')\}\). }%
        \label{alg_line:goodwin_c_i}\label{alg_line:goodwin_c_i_max}
        \EndFor
        \ForAll {\(\Vertex \in \VerticesAdversary\)}
            \State  \(C_i(\Vertex) \assign \min_{(\Vertex,\Vertex') \in \Edges}\{\max\{\PayoffFunction(\Vertex,\Vertex'), \PayoffFunction(\Vertex,\Vertex') + C_{i-1}(\Vertex')\}\}\)
            \Comment{In~\cite{CDRR15},
            \( \PayoffFunction(\Vertex,\Vertex') + C_{i-1}(\Vertex')\) was used
            instead of \(\max\{\PayoffFunction(\Vertex,\Vertex'), \PayoffFunction(\Vertex,\Vertex') + C_{i-1}(\Vertex')\}\). }
            \label{alg_line:goodwin_c_i_min}
        \EndFor
    \EndFor
     \State \(W_{gw} \assign \{\Vertex \in \Vertices \suchthat C_\WindowLength(\Vertex) \ge 0 \}\)  \label{alg_line:final_condition} \Comment{In~\cite{CDRR15}, \(W_{gw}\) was defined as \(\{\Vertex \in \Vertices \mid \exists i \in \{1, 2, \ldots \WindowLength\}, \ C_i(\Vertex) \ge 0 \}\) instead because \(C_i(v)\) had a different definition in~\cite{CDRR15}.
     }
    \State \Return{\(W_{gw}\)}
    \end{algorithmic}
\end{algorithm}

\begin{figure}[t]
    \begin{subfigure}[b]{0.45\textwidth}
        \centering
        \scalebox{0.8}{
            \begin{tikzpicture}
                \node[square, draw] (s1) {\(v_1\)};
                \node[state, above right of=s1, xshift=3mm, yshift=-3mm] (s2) {\(v_2\)};
                \node[state, right of=s2] (s3) {\(v_3\)};
                \node[state, below right of=s1, xshift=3mm, yshift=3mm] (s4) {\(v_4\)};
                \node[state, right of=s4] (s5) {\(v_5\)};
                \draw (s1) edge[left, above] node{\(\EdgeValues{1}{}\)} (s2)
                      (s1) edge[right, above] node{\(\EdgeValues{1}{}\)} (s4)
                      (s2) edge[bend left, above] node{\(\EdgeValues{1}{}\)} (s3)
                      (s3) edge[bend left, below] node{\(\EdgeValues{-1}{}\)} (s2)
                      (s4) edge[bend left, above] node{\(\EdgeValues{-1}{}\)} (s5)
                      (s5) edge[bend left, below] node{\(\EdgeValues{-1}{}\)} (s4);
            \end{tikzpicture}
        }
        \caption{A counterexample for computing winning region for \(\DirFWMP(\WindowLength)\) \cite[Algorithm~2]{CDRR15} with \(\WindowLength = 2\).}
        \label{fig:dirfwmp_counterexample}
    \end{subfigure}
    \hfill
    \begin{subfigure}[b]{0.45\textwidth}
        \centering
        \scalebox{0.8}{
            \begin{tikzpicture}[node distance=1.5cm]
                \node[state, draw] (v1) {\(\Vertex[1]\)};
                \node[square, draw, right of=v1] (v2) {\(\Vertex[2]\)};
                \node[state, right of=v2] (v3) {\(\Vertex[3]\)};
                \node[state, right of=v3] (v4) {\(\Vertex[4]\)};
                \node[state, below of=v2] (v5) {\(\Vertex[5]\)};
                \node[state, right of=v5] (v6) {\(\Vertex[6]\)};
                \node[state, right of=v6] (v7) {\(\Vertex[7]\)};
                \node[state, right of=v7] (v8) {\(\Vertex[8]\)};
                \draw 
                      (v1) edge[auto] node{\(\EdgeValues{-1}{}\)} (v2)
                      (v2) edge[auto] node{\(\EdgeValues{-1}{}\)} (v3)
                      (v2) edge[left] node{\(\EdgeValues{2}{}\)} (v5)
                      (v3) edge[auto] node{\(\EdgeValues{3}{}\)} (v4)
                      (v4) edge[loop right] node{\(\EdgeValues{0}{}\)} (v4)
                      (v5) edge[auto] node{\(\EdgeValues{-2}{}\)} (v6)
                      (v6) edge[auto] node{\(\EdgeValues{0}{}\)} (v7)
                      (v7) edge[auto] node{\(\EdgeValues{2}{}\)} (v8)
                      (v8) edge[loop right] node{\(\EdgeValues{0}{}\)} (v8)
                      ;
            \end{tikzpicture}
        }
        \caption{A counterexample for computing winning region for \(\GW(\WindowLength)\)  \cite[Algorithm~3]{CDRR15} with \(\WindowLength = 3\).
        }
        \label{fig:gw_counterexample}
    \end{subfigure}
    \caption{Counterexamples for algorithms in~\cite{CDRR15}}
\end{figure}

\paragraph*{Description of algorithms from~\cite{CDRR15}.}
Algorithm~\ref{alg:fixed_wmp}~\cite[Algorithm~1]{CDRR15}  computes the winning region of \(\PlayerMain\) for the \(\FWMPL\) objective. 
First (Line~\ref{alg_line:fixed_wmp:direct}) it computes the winning region $W_d$ for \(\PlayerMain\) for the \(\DirFWMP(\WindowLength)\) objective (using Algorithm~\ref{alg:direct_wmp}). 
If \(W_d\) is empty, then it is easy to show that the winning region for \(\PlayerMain\) for objective \(\FWMPL\) is also empty, and the algorithm terminates (Line~\ref{alg_line:fixed_wmp:termination}).
Otherwise, all vertices in the \(\PlayerMainDash\) attractor of $W_d$ (Line~\ref{alg_line:fixed_wmp:attractor}) are also winning (as \(\FWMPL\) is closed under prefixes), and the remaining states (i.e., the complement of the attractor) induce a smaller subgame, which can be solved (recursively) by the same algorithm. 

Algorithm~\ref{alg:direct_wmp} computes the winning region of \(\PlayerMain\) for the \(\DirFWMP(\WindowLength)\) objective. 
It does so by first computing the region $V\setminus W_{gw}$ from which \(\PlayerMain\) cannot win the good window objective \(\GW(\WindowLength)\) (Line~\ref{alg:direct_wmp:gw_call}). If \(\PlayerMain\) does not win the \(\GW(\WindowLength)\) objective, then she does not win the \(\DirFWMP(\WindowLength)\) objective either, and thus, all vertices in \(V \setminus W_{gw}\) are losing for \(\PlayerMain\).
If \(V \setminus W_{gw}\) is empty, that is, if $W_{gw} = V$, then \(\PlayerMain\) wins the \(\GW(\WindowLength)\) objective from every vertex, and it is easy to see that \(\PlayerMain\) also wins 
the \(\DirFWMP(\WindowLength)\) objective from every vertex.
Otherwise, the \(\PlayerAdversaryDash\) attractor to $V\setminus W_{gw}$ is also losing for \(\PlayerMain\). The remaining states (i.e., the complement of \(A\)) induce a smaller subgame, which can be solved (recursively) by the same algorithm.

Algorithm~\ref{alg:good_win} computes the winning region of \(\PlayerMain\) for the good window  objective \(\GW(\WindowLength)\), that is, the 
set of vertices from which \(\PlayerMain\) can close the window within at most \(\WindowLength\) steps. The algorithm uses dynamic programming to compute, for all \(v \in V\) and all lengths \(i \in \{ 1, \ldots, \WindowLength\}\), the largest payoff \(C_i(v)\) that  \(\PlayerMain\) can ensure from \(v\) within at most \(i\) steps. 
The winning region for \(\GW(\WindowLength)\) for \(\PlayerMain\) consists of all vertices \(v\) such that \(C_{\WindowLength}(v) \ge 0\).

\paragraph*{Correctness of Algorithm~\ref{alg:direct_wmp}.}
We show the correctness of Algorithm~\ref{alg:direct_wmp}, that is, we show that this algorithm correctly computes \(\TwoPlayerWinningRegion[\Game]{\Main}{\DirFWMP(\WindowLength)}\), the winning region for \(\PlayerMain\) for the \(\DirFWMP(\WindowLength)\) objective.
The proof makes use of the fact that \(\DirFWMP(\WindowLength) \subseteq \GW(\WindowLength)\), that is, if \(\PlayerMain\) does not win \(\GW(\WindowLength)\) from a vertex \(v \in V\), then she also does not win \(\DirFWMP(\WindowLength)\) from \(v\).

The algorithm successively finds vertices that are losing for \(\PlayerMain\) for the \(\GW(\WindowLength)\) objective, removes them, and recurses on the rest of the game graph. 
In Line~\ref{alg_line:direct_wmp:gw}, we have \(W_{gw} = \TwoPlayerWinningRegion[\Game]{\Main}{\GW(\WindowLength)}\), the winning region for \(\PlayerMain\) for the \(\GW(\WindowLength)\) objective.
\begin{itemize}
    \item  If \(W_{gw} = \emptyset\), then \(\PlayerAdversary\) wins \(\overline{\GW(\WindowLength)}\) from all vertices in \(V\), and therefore, \(\PlayerAdversary\) also wins \(\overline{\DirFWMP(\WindowLength)}\) from all vertices in \(V\).
    Hence, \(\TwoPlayerWinningRegion[\Game]{\Main}{\DirFWMP(\WindowLength)} = \emptyset\).
    \item Otherwise, if \(W_{gw} = V\), then \(\PlayerMain\) wins \(\GW(\WindowLength)\) from all vertices in \(G\). For all vertices \(v \in V\), starting from \(v\), \(\PlayerMain\) can ensure that the window starting at \(v\) closes in at most \(\WindowLength\) steps. 
    When the window starting at \(v\) closes, suppose the token is on a vertex \(v'\).
    By the inductive property of windows, all windows that opened after \(v\) are also closed by the time the token reaches \(v'\).
    Now, since \(v' \in \TwoPlayerWinningRegion[\Game]{\Main}{\GW(\WindowLength)}\), \(\PlayerMain\) can ensure that the window starting at \(v'\) also closes in at most \(\WindowLength\) steps. 
    In this manner, \(\PlayerMain\) closes every window in at most \(\WindowLength\) steps, resulting in an outcome that is winning for the \(\DirFWMP(\WindowLength)\) objective.
    We get that \(\TwoPlayerWinningRegion[\Game]{\Main}{\DirFWMP(\WindowLength)} = V\).
    \item Finally, suppose \(\emptyset \subsetneq W_{gw} \subsetneq V\). 
    Starting from \(V \setminus W_{gw}\), \(\PlayerAdversary\) wins the \(\overline{\GW(\WindowLength)}\) objective, and hence, also the  \(\overline{\DirFWMP(\WindowLength)}\) objective.
    Therefore, no vertex in \(V \setminus W_{gw}\) belongs to \(\TwoPlayerWinningRegion[\Game]{\Main}{\DirFWMP(\WindowLength)}\).
    Moreover, consider the \(\PlayerAdversary\) attractor \(A\) to \(V \setminus W_{gw}\) (Line~\ref{alg:direct_wmp:attr}). 
    Starting from a vertex in this attractor, \(\PlayerAdversary\) can follow a memoryless strategy to eventually reach \(V \setminus W_{gw}\). 
    Once the token reaches \(V \setminus W_{gw}\), \(\PlayerAdversary\) can ensure that a window remains open for \(\WindowLength\) steps, resulting in \(\PlayerMain\) losing. 
    Hence, no vertex in the attractor belongs to \(\TwoPlayerWinningRegion[\Game]{\Main}{\DirFWMP(\WindowLength)}\) either.
    The complement of this \(\PlayerAdversary\) attractor is a trap for \(\PlayerAdversary\) and induces a subgame.
    For all vertices \(v\) in this subgame, if \(\PlayerMain\) wins from \(v\) in the subgame, then she also wins from \(v\) in the original game as she can mimic a winning strategy from the subgame while also ensuring that the token never leaves the subgame. 
    Conversely, for all vertices \(v\) in the subgame, if \(\PlayerMain\) does not win from \(v\) in the subgame, then she also does not win from \(v\) in the original game. This is because if the token remains in the subgame forever, then \(\PlayerAdversary\) wins, and if the token ever leaves the subgame, then as discussed above, the token enters the \(\PlayerAdversary\) attractor, and \(\PlayerAdversary\) wins.
    Thus, for all vertices \(v\) in the subgame, \(\PlayerMain\) wins from \(v\) in the subgame if and only if she wins from \(v\) in the original game.
    Hence, the winning region for \(\PlayerMain\) in the subgame is equal to the winning region for \(\PlayerMain\) in the original game, and hence, the algorithm recurses on this subgame.
\end{itemize}

\paragraph*{Correctness of Algorithm~\ref{alg:good_win}.}
The following characterization of \(C_i(v)\) holds:
\begin{itemize}
    \item There exists a strategy \(\Strategy[\Main]\) of \(\PlayerMain\) such that for all strategies \(\Strategy[\Adversary]\) of \(\PlayerAdversary\), there exists \(1 \le j \le i\) such that in the outcome of the strategy profile \((\Strategy[\Main], \Strategy[\Adversary])\) with initial vertex \(v\), the total payoff in the first \(j\) steps of the outcome is at least \(C_i(v)\);
    \item  For all strategies \(\Strategy[\Main]\) of \(\PlayerMain\), there exists a  strategy \(\Strategy[\Adversary]\) of \(\PlayerAdversary\) such that for all \(1 \le j \le i\), the total payoff of the first \(j\) steps in the outcome of \((\Strategy[\Main], \Strategy[\Adversary])\) with initial vertex \(v\) is at most \(C_i(v)\).
\end{itemize}
\noindent %FIXED indent
A monotonicity property follows from the above characterization, namely that for all \(v \in \Vertices\), if $1 \leq i \leq j \leq \WindowLength$, then $C_i(v) \leq C_j(v)$, which can also easily be established from  Algorithm~\ref{alg:good_win}.
Note that monotonicity does not hold for $i=0$ as $C_0(v) = 0$ for all \(v \in V\), but we may have $C_1(v) < 0$
(e.g., if all outgoing edges from $v$ have negative weight). It also follows from this characterization that  $C_i(v) \geq 0$ if
\(\PlayerMain\) wins from $v$ for objective $\GW(i)$.

We now show the correctness of Algorithm~\ref{alg:good_win} to compute $C_i(\cdot)$, by induction on \(i \in \{ 1, \ldots, \WindowLength\}\). 
The base case \(i = 1\) holds since $C_0(v) =  0$ for all \(v \in V\) and the maximum possible payoff from $v$ in one step is \(C_1(v) = \max_{(v,v')\in \Edges}\{\PayoffFunction(v, v')\}\) if \(v \in \VerticesMain\) is a vertex of \(\PlayerMain\), and \(C_1(v) = \min_{(v,v')\in \Edges}\{\PayoffFunction(v,v')\}\) if \(v \in \VerticesAdversary\) is a vertex of \(\PlayerAdversary\). 
For the induction step \(i \ge 2\), assume that 
 \(C_{i-1}(v)\) is correctly computed by the algorithm
for all \(v \in V\), as the maximum payoff that \(\PlayerMain\) can ensure from $v$ in at least $1$ and at most $i-1$ steps.
Then, from a vertex \(v\) and if the edge $(v,v')$ is chosen (either by  \(\PlayerMain\) or by  \(\PlayerAdversary\)), 
the maximum payoff that \(\PlayerMain\) can ensure in at least $1$ and at most $i$ steps is either
$\PayoffFunction(v, v')$ (in $1$ step) or $\PayoffFunction(v, v') + C_{i-1}(v')$ (in at least $1+1=2$ steps and at most $1+i-1=i$ steps), whichever is greater.
Hence if $v \in \VerticesMain$ is a vertex of \(\PlayerMain\), then $C_i(v)$ is the maximum such value across the out-neighbours $v'$ of $v$, and if $v \in \VerticesAdversary$ is a vertex of \(\PlayerAdversary\), then $C_i(v)$ is the minimum, as in Line~\ref{alg_line:goodwin_c_i_max} and  Line~\ref{alg_line:goodwin_c_i_min} of the algorithm.
Finally by the characterization of $C_i(v)$, \(\PlayerMain\) wins from $v$ for the \(\GW(\WindowLength)\) objective if $C_{\WindowLength}(v) \geq 0$ (Line~\ref{alg_line:final_condition}), which by the monotonicity property, is equivalent to the condition $\exists 1 \leq i \leq \WindowLength: C_i(v) \geq 0$ used in \cite{CDRR15}.

\paragraph*{Counterexample for~\cite[Algorithm~2]{CDRR15}.}
The version of Algorithm~2 in~\cite{CDRR15} does not compute (and does not remove) the  \(\PlayerAdversaryDash\) attractor \(\Attr{\Adversary}{\Vertices \setminus W_{gw}}\) to the winning region of \(\PlayerAdversary\) for the good-window objective (Line~\ref{alg:direct_wmp:attr}).
However, it is easy to see that \(\PlayerAdversary\) can spoil the good-window objective from \(A\), not only from $\Vertices \setminus W_{gw}$. 
This may lead to incorrectly classifying some losing states as being winning (for \(\PlayerMain\)). 
Consider the non-stochastic game shown in \figurename~\ref{fig:dirfwmp_counterexample} with \(\WindowLength = 2\). 
The vertices \(\Vertex[4]\) and \(\Vertex[5]\) are losing for \(\PlayerMain\) for \(\DirFWMP(\WindowLength)\), and since \(\Vertex[1] \in \VerticesAdversary\), the vertex \(\Vertex[1]\) is also losing for \(\PlayerMain\). The remaining vertices \(\{\Vertex[2], \Vertex[3]\}\) are winning for \(\PlayerMain\). 
After computing the winning region of \(\PlayerAdversary\) for the good-window objective, which is  \(\{\Vertex[4], \Vertex[5]\}\), the winning region in the
subgame induced by \(V \setminus \{\Vertex[4], \Vertex[5]\} =  \{\Vertex[1], \Vertex[2], \Vertex[3]\}\) is \(\{\Vertex[1], \Vertex[2], \Vertex[3]\}\),
which is returned as the winning region for \(\DirFWMP(\WindowLength)\), instead of 
\(\{\Vertex[2], \Vertex[3]\}\).

\paragraph*{Counterexample for~\cite[Algorithm~3]{CDRR15}.}
The version of Algorithm~3 in~\cite{CDRR15} differs at Line~\ref{alg_line:goodwin_c_i_max} and  Line~\ref{alg_line:goodwin_c_i_min}, and we show that it does not compute the winning region for the good-window objective.

Consider the non-stochastic game \(\Game\) shown in \figurename~\ref{fig:gw_counterexample} with \(\WindowLength = 3\). 
The vertex \(\Vertex[1]\) is winning for \(\GW(3)\) since the window closes in three steps irrespective of the successor chosen by \(\PlayerAdversary\) from \(\Vertex[2]\). 
If \(\PlayerAdversary\) chooses \(\Vertex[3]\) from \(\Vertex[2]\), the window closes in three steps, whereas if \(\PlayerAdversary\) chooses \(\Vertex[5]\) from \(\Vertex[2]\), the window closes in two steps.
However, Algorithm~3 in~\cite{CDRR15} does not include \(\Vertex[1]\) in the winning region for \(\GW(3)\). 
This is because for every vertex \(\Vertex\) in the game, it computes for all \(1 \le i \le \WindowLength\), the value \(C_i(\Vertex)\) as the best payoff that \(\PlayerMain\) can ensure from \(v\) in \emph{exactly} \(i\) steps, instead of at most $i$ steps. 
The algorithm includes a vertex \(\Vertex\) in the winning region for \(\GW(\WindowLength)\) if at least one of the \(C_i(\Vertex)\) is nonnegative. 
In our example, the best payoff that \(\PlayerMain\) can ensure from \(\Vertex[1]\) in exactly one step is \(-1\), in exactly two steps is \(-2\) (corresponding to the prefix \(\Vertex[1] \Vertex[2] \Vertex[3])\), and in exactly three steps is \(-1\) (corresponding to the prefix \(\Vertex[1]\Vertex[2]\Vertex[5]\Vertex[6])\).
Thus, for all \(1 \le i \le 3\), the value \(C_i(\Vertex[1])\) is negative. 
This example shows that it is possible for \(\PlayerMain\) to ensure a nonnegative payoff in at most~\(\WindowLength\) steps despite the worst payoff in exactly \(i\) steps being negative for all \(i \in \{1, \ldots, \WindowLength\}\).

\section{Memory requirement for non-stochastic window mean-payoff games}
\label{sec:memory-non-stochastic-games}
The memory requirement  for  winning strategies 
of both \(\PlayerMain\) and \(\PlayerAdversary\)
in non-stochastic games with objective \(\FWMPL\) 
is claimed to be $\bigO(\abs{\Vertices} \cdot \WindowLength)$ without proof~\cite[Lemma~7]{CDRR15}. 
Further, the bounds are ``correctly stated'' as
\(\bigO(w_{\max} \cdot \WindowLength^2)\) and \(\bigO(w_{\max} \cdot \WindowLength^2 \cdot \abs{\Vertices})\) for \(\PlayerMain\) and \(\PlayerAdversary\) respectively, where \(w_{\max}\) is the maximum absolute payoff in the graph~\cite[Theorem 2]{BHR16a}.
We improve upon these bounds and show that memory of size~\(\WindowLength\) suffices for a winning strategy of \(\PlayerMain\). 
Furthermore, a formal argument for memory requirement for \(\PlayerAdversary\) strategies is missing in~\cite{CDRR15} which we provide here.

We show constructions of Mealy machines \(M_{\Main}^{\TwoP}\) and \(M_\Adversary^{\TwoP}\) (with at most \(\WindowLength\) and \(\abs{\Vertices} \cdot \WindowLength\) states respectively) that define winning strategies  \(\Strategy[\Main]^{\TwoP}\) and \(\Strategy[\Adversary]^{\TwoP}\) of \(\PlayerMain\) and \(\PlayerAdversary\) respectively, showing upper bounds on the memory requirements for both players.
We also present a family of games with arbitrarily large state space
where \(\PlayerAdversary\) is winning and all his winning strategies   require at least \(\frac{1}{2} (\abs{\Vertices} - \WindowLength)+3\) memory,
while it was only known that memoryless strategies are not sufficient for \(\PlayerAdversary\)~\cite{CDRR15}.

The paper~\cite{CDRR15} also has results on the analysis of the \(\BWMP\) objective. 
It has been shown that solving \(\BWMP\) for non-stochastic games is in \(\NP \intersection \coNP\),
and memoryless strategies suffice for \(\PlayerMain\), whereas \(\PlayerAdversary\) may need infinite memory strategies to play optimally. 

\vspace{3em}
\subsection{Memory requirement for Player 1 for FWMP objective}

\paragraph*{Upper bound on memory requirement for Player 1.}
We show that memory of size \(\WindowLength\) suffices for winning strategies of \(\PlayerMain\) for the \(\DirFWMP(\WindowLength)\) objective (Lemma~\ref{lem:DirFWMP-memory}), which is in turn used to show that the same memory also works for the \(\FWMPL\) objective (Theorem~\ref{thm:MemoryPlayer1}).
\begin{lem}%
\label{lem:DirFWMP-memory}
    If \(\PlayerMain\) wins in a non-stochastic game with objective \(\DirFWMP(\WindowLength)\), then \(\PlayerMain\) has a winning strategy with memory of size $\WindowLength$.
\end{lem}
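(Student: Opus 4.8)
The plan is to exploit the structure revealed by the $C_i(\cdot)$ values computed in Algorithm~\ref{alg:good_win} to build a Mealy machine whose memory tracks only how long the currently open window has been open. The key intuition is that, by the inductive property of windows, to win $\DirFWMP(\WindowLength)$ \PlayerMain\ only needs to keep closing windows within $\WindowLength$ steps, and while closing a window she need not remember \emph{which} vertex the window opened at, but only how many steps have elapsed since it opened (equivalently, how many steps remain before the deadline of $\WindowLength$). Hence I would use a memory of size $\WindowLength$, with states $\{0, 1, \ldots, \WindowLength - 1\}$ (or equivalently counting down from $\WindowLength$), where state $t$ records that the current window has been open for $t$ steps.

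First I would fix, for each vertex $v$ in the winning region and each remaining budget $j \in \{1, \ldots, \WindowLength\}$, a positional \emph{good-window} move: the edge $(v, v')$ witnessing $C_j(v) \ge 0$ in the characterization of $C_i$, i.e.\ the move by which \PlayerMain\ guarantees that the window closes within $j$ further steps. From the characterization stated just before the lemma, whenever $v$ is in $\TwoPlayerWinningRegion[\Game]{\Main}{\DirFWMP(\WindowLength)}$ we have $C_{\WindowLength}(v) \ge 0$ (since the winning region for $\DirFWMP(\WindowLength)$ is contained in that for $\GW(\WindowLength)$ and is a trap for \PlayerAdversary, so play stays inside it). Second, I would define the Mealy machine $M_{\Main}^{\TwoP}$: its state is the elapsed time $t$ of the window currently being tracked; on reading the current vertex $v$ it outputs the good-window move for $v$ with remaining budget $\WindowLength - t$; the transition updates $t$ to $t+1$ if the window is still open after the move, and \emph{resets} $t$ to $0$ as soon as the tracked window closes, at which point \PlayerMain\ begins tracking the window that opens at the new current vertex.

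The correctness argument proceeds by showing that along any consistent play, every window closes within $\WindowLength$ steps, which by the reformulation $\DirFWMP(\WindowLength) = \{\Play \mid \forall i \ge 0 : \PlaySuffix{i} \in \GW(\WindowLength)\}$ gives the objective. The crucial point is that, by the inductive property of windows (stated in Section~\ref{sec:window_mean_payoff}), whenever the tracked window opened at position $p$ closes at position $q \le p + \WindowLength$, \emph{all} windows opened at positions $p \le k < q$ are simultaneously closed; thus by only ever tracking the oldest still-open window and closing it in time, \PlayerMain\ automatically closes every window in time. Because the good-window move from $v$ with budget $\WindowLength - t$ guarantees closure within the remaining budget (this is exactly what $C_{\WindowLength - t}$-based moves ensure, using that the window has already accumulated a prefix whose total payoff is negative), the counter $t$ never needs to exceed $\WindowLength - 1$, so $\WindowLength$ memory states suffice. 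I expect the main obstacle to be making precise the bookkeeping that lets a single counter-based move suffice: one must verify that the fixed positional good-window moves, chosen independently of the window's opening vertex, remain valid when \PlayerMain\ switches from one window to the next after a reset, and that the total payoff accumulated over the open window is correctly reflected by the elapsed-time counter rather than requiring \PlayerMain\ to remember the actual running sum. This is handled by appealing to the characterization of $C_i(v)$ as a guarantee that holds \emph{regardless} of \PlayerAdversary's choices and depends only on the current vertex and the number of steps, so the counter is indeed sufficient statistics for the strategy.
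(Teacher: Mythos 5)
There is a genuine gap, and it sits exactly at the point your final paragraph waves at. Your Mealy machine's transition rule is ``increment \(t\) if the tracked window is still open, reset \(t\) to \(0\) as soon as it closes'' --- but a Mealy machine's transition function must depend only on the current memory state and the vertex just read. Whether the tracked window has closed depends on the running total payoff accumulated since it opened, which is a function of the actual path taken, not of the pair (counter, current vertex): two histories can reach the same vertex with the same elapsed-time counter, one with the window already closed and one with it still open, and they require different continuations. So your claim that ``the total payoff accumulated over the open window is correctly reflected by the elapsed-time counter'' is false, and the appeal to the characterization of \(C_i(v)\) does not repair it: that characterization certifies what \(\PlayerMain\) can \emph{guarantee} from \(v\) in \(i\) steps, but it does not tell the machine whether the closure event has occurred. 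Implementing exact closure detection means storing the running sum, which costs roughly \(w_{\max}\cdot\WindowLength\) extra states --- this is precisely the exponential bound of \cite{BHR16a} that the lemma is designed to beat. The paper's Example~\ref{example:Player1_strat} exhibits a concrete game where this matters: the correct move at a vertex \(v\) differs according to whether the play arrived via a path on which the window closed en route, so no strategy that ignores closures can win, yet no \(\WindowLength\)-state machine can detect all closures.

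Your correctness argument also breaks for the same reason. The inductive property of windows lets you conclude that closing the \emph{tracked} window closes all windows opened strictly between its opening and its closing --- but it says nothing about windows opened \emph{at or after} the (possibly undetected) closing position, and if the counter is stale, the moves \(D_{\WindowLength-t}(v)\) are computed against the wrong deadline and can leave such a later window open for \(\WindowLength\) steps (again Example~\ref{example:Player1_strat}: the window starting at \(u_3\) closes at \(u_5\), and the nonnegativity of \(\mathsf{TP}(u_3\cdots v_4)\) gives no information about the window starting at \(u_5\)). The paper's Construction~\ref{con:Player1-Mealy-Machine} circumvents both problems with one twist you are missing: the reset is triggered not by detecting closure but by the \emph{sign test} \(C_i(v)<0\) (computable from the state and input alone, using Algorithm~\ref{alg:good_win}); one proves that \(C_i(v)<0\) implies the window did close, while if \(C_i(v)\ge 0\) the counter simply decrements \emph{even though the window may have silently closed}. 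Correctness is then established not via the oldest-open-window bookkeeping but by a different argument: partition the play into segments between resets and show by induction that \(C_i(v_i)\le \mathsf{TP}(v_i\cdots v_p)\) for every position \(i\) in a segment ending at \(v_p\), so \emph{every suffix} of every segment has nonnegative total payoff and hence all windows --- tracked or not --- close within the segment. Your architecture (counter of size \(\WindowLength\), moves \(D_i\) from the good-window computation) matches the paper's, but without the sign-based reset and the per-suffix segment invariant, the construction is not a well-defined Mealy machine and the proof does not go through.
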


\begin{proof}
    Given a game \(\Game\), let $W_d$ be the winning region of \(\PlayerMain\) in \(\Game\) for objective \(\DirFWMP(\WindowLength)\). 
    Note that the region \(W_d\) is a trap for \(\PlayerAdversary\) in \(\Game\),
    as the objective \(\DirFWMP(\WindowLength)\) is closed under suffixes.
    Every vertex in \(W_d\) is moreover winning for \(\PlayerMain\) with objective \(\GW(\WindowLength)\), by definition.
    
    A winning strategy of \(\PlayerMain\) is to play for the objective
    \(\GW(\WindowLength)\) until the window closes (which \(\PlayerMain\)
    can ensure within at most $\WindowLength$ steps), and then to restart
    with the same strategy, playing for \(\GW(\WindowLength)\) and so on. 
    Using memory space $Q = \PositiveSet{\WindowLength}$, we may store
    the number of steps remaining before the window must close to satisfy 
    \(\GW(\WindowLength)\), and reset
    the memory to $q_0 =  \WindowLength$ whenever the window closes.
    However, the window may close any time within $\WindowLength$ steps, 
    and the difficulty is to detect when this happens: how to update the memory $q = i$, 
    given the next visited vertex $v$, but independently of the history? 
    Intuitively, the memory should be updated to $q = i-1$ if the window did not close yet
    upon reaching $v$, and to $q = \WindowLength$ if it did, but that depends
    on which path was followed to reach $v$ (not just on $v$), 
    which is not stored in the memory space.
    
    The crux is to show that it is not always necessary for \(\PlayerMain\) to be able to infer
    when the window closes. 
    Given the current memory state $q = i$,
    and the next visited vertex $v$, the memory update is as follows:
    if \(C_i(v) \geq 0\) (that is, \(\PlayerMain\) can ensure the window
    from $v$ will close within $i$ steps), then we update to $q = i-1$ (decrement) although the window may or may not have closed upon reaching $v$; otherwise \(C_i(v) < 0\)
    and we update to $q = \WindowLength-1$ (reset to $\WindowLength$ and decrement) and we show that in this case the window did close. 
    Intuitively, updating to $q = i-1$ is safe even if the window
    did close, because the strategy of \(\PlayerMain\) will anyway ensure the 
    (upcoming) window is closed within $i-1 < \WindowLength$ steps.
    For the case \(C_i(v) < 0\), we want the Mealy machine to be in state \(\WindowLength\) when \(v\) is being read.
    However, there is an additional difficulty to this.
    Assume that vertex \(v'\) is read by the Mealy machine before reading \(v\).
    The Mealy machine is thus in state \(i+1\) and \(C_{i+1}(v') \geq 0\).
    Now \(C_i(v) < 0\) denotes that an open window is closed on the edge \((v',v)\), and the state of the Mealy machine should be reset to \(\WindowLength\).
    However, if \(v'\) is a \(\PlayerAdversary\) vertex, since the vertex chosen by a \(\PlayerAdversary\) strategy from \(v'\) is not known to \(\PlayerMain\) (the output on the transition of the Mealy machine is thus \(\epsilon\)), the state of the Mealy machine is updated from \(i+1\) to \(i\)  while reading $v'$.
    The state is then updated to \(\WindowLength-1\) after reading \(v\) to simulate that the window was already closed upon reaching \(v\).
    
    For $v \in \VerticesMain$, we define the next vertex chosen by the strategy as 
    $$D_i(v) = \arg \max_{(\Vertex,\Vertex') \in \Edges}\{\max\{\PayoffFunction(\Vertex,\Vertex'), \PayoffFunction(\Vertex,\Vertex') + C_{i-1}(\Vertex')\}\},$$ 
     the out-neighbour from $v$ that maximizes 
    the expression of Algorithm~\ref{alg:good_win} for the \(\GW(\WindowLength)\)
    objective, Line~\ref{alg_line:goodwin_c_i}.
    If there is more than one such out-neighbour, we choose one arbitrarily.
    Example~\ref{example:Player1_strat} illustrates how computing \(C_i(v)\) and checking if it is nonnegative is useful in constructing a winning strategy for \(\PlayerMain\).
    We see in Construction~\ref{con:Player1-Mealy-Machine} a formal description of a Mealy machine with \(\WindowLength\) states defining a winning strategy of \(\PlayerMain\) for the \(\DirFWMP(\WindowLength)\) objective. 
    This concludes the proof of Lemma~\ref{lem:DirFWMP-memory}.
\end{proof}

\begin{exa}%
\label{example:Player1_strat}
    In this example, we show why checking if \(C_i(v)\) is negative or not is sometimes necessary.
    \figurename~\ref{fig:window_close_detection_is_necessary} shows a fragment of a game where if \(\PlayerMain\) does not know if the window has closed, then she may choose a vertex that causes her to lose the \(\DirFWMP(\WindowLength)\) objective for \(\WindowLength = 4\). 
    \begin{figure}[ht]
        \centering
        \scalebox{0.8}{
        \begin{tikzpicture}
            \node[state] (v) {\(v\)};
            \node[square, draw, below left of=v, yshift=+3mm] (u5) {\(u_5\)};
            \node[square, draw, left of=u5, xshift=-5mm] (u4) {\(u_4\)};
            \node[state, left of=u4] (u3) {\(u_3\)};
            \node[state, above of=u4] (u1) {\(u_1\)};
            \node[state, right of=u1, xshift=+5mm] (u2) {\(u_2\)};
            \node[left of=u3, xshift=+3mm] (invisible1) {};
            \node[left of=u1, xshift=+3mm] (invisible7) {};
            \node[square, draw, below right of=u4, xshift=+4mm, yshift=+2mm] (u6) {\(u_6\)};
            \node[right of=u6,  xshift=-2mm, yshift=-2mm] (invisible2) {};
            \node[state, above right of=v, yshift=-5mm, xshift=+5mm] (v1) {\(v_1\)};
            \node[state, right of=v1] (v2) {\(v_2\)};
            \node[state, right of=v2] (v3) {\(v_3\)};
            \node[state, below right of=v, yshift=+5mm, xshift=+5mm] (v4) {\(v_4\)};
            \node[state, right of=v4] (v5) {\(v_5\)};
            \node[state, right of=v5] (v6) {\(v_6\)};
            \node[right of=v3, xshift=-3mm] (invisible3) {};
            \node[right of=v6, xshift=-3mm, yshift=+4mm] (invisible4) {};
            \node[right of=v6, xshift=-3mm, yshift=-4mm] (invisible5) {};
            \node[right of=u5, xshift=-3mm, yshift=-4mm] (invisible6) {};
            \draw 
                  (invisible1) edge[dashed, above, ] node{\(+5\)} (u3)
                  (invisible7) edge[dashed, above, ] node{\(+5\)} (u1)
                  (u1) edge[below] node{\(-2\)} (u2)
                  (u2) edge[below] node{\(0\)} (v)
                  (u3) edge[above] node{\(-2\)} (u4)
                  (u4) edge[above] node{\(+3\)} (u5)
                  (u4) edge[below left] node{\(+1\)} (u6)
                  (u5) edge[below] node{\(-4\)} (v)
                  (u6) edge[dashed, below left] (invisible2)
                  (v) edge[above] node{\(0\)} (v1)
                  (v1) edge[above] node{\(0\)} (v2)
                  (v2) edge[above] node{\(+4\)} (v3)
                  (v) edge[below left] node[yshift=+1mm]{\(+3\)} (v4)
                  (v4) edge[below] node{\(0\)} (v5)
                  (v5) edge[below] node{\(0\)} (v6)
                  (v3) edge[dashed, below] (invisible3)
                  (v6) edge[dashed, below] (invisible4)
                  (v6) edge[dashed, below] (invisible5)
                  (u5) edge[dashed, below] (invisible6)
                ;
        \end{tikzpicture}
        }
        \caption{%
            The successor of \(v\) that \(\PlayerMain\) should choose depends on how many more steps she has to close the window.
            If \(\PlayerMain\) does not detect that the window is closed on \((u_4, u_5)\), then she chooses \(v_4\) from~\(v\). Otherwise, she chooses~\(v_1\) from~\(v\). 
            Computing \(C_2(u_5)\) shows that it is negative and this implies that the window starting at \(u_3\) must have closed at \(u_5\).
        }
        \label{fig:window_close_detection_is_necessary}
    \end{figure}
    Suppose it is the case that all windows in the play have closed when the token reaches~\(u_1\) and~\(u_3\). 
    If the token reaches \(v\) along \(u_1u_2\), then \(\PlayerMain\) must move the token from~\(v\) to~\(D_2(v) = v_4\) as this closes the window starting at \(u_1\). 
    If \(\PlayerMain\) moves the token from \(v\) to~\(v_1\) instead, then this results in an open window \(u_1 u_2 v v_1v_2\) of length \(4\) which is not desirable for \(\PlayerMain\).
    
    On the other hand, if the token reaches \(v\) along \(u_3 u_4 u_5\), then since the window starting at~\(u_3\) closes at \(u_5\), we have that \(\PlayerMain\) must choose a successor of \(v\) such that the window starting at~\(u_5\) closes in at most three steps from \(v\).
    Hence, if \(\PlayerMain\) moves the token from~\(v\) to \(D_3(v)\), that is, \(v_1\), then the window starting at \(u_5\) closes in at most \(4\) steps. 
    However, suppose \(\PlayerMain\) does not detect that the window starting at \(u_3\) closes at~\(u_5\).
    Although the total payoff along \(u_3 u_4 u_5 v v_4\) is nonnegative (which implies that the window starting at \(u_3\) is closed at \(v_4\)), one cannot use the inductive property of windows to claim that all subsequent windows are closed at \(v_4\). 
    The inductive property of windows does not hold since the window starting at \(u_3\) closes at~\(u_5\) and this gives no information about when the window starting at~\(u_5\) closes. 
    If \(\PlayerMain\) plays from \(v\) as if she has only one more step to close the window, then she moves the token to \(D_1(v) = v_4\) and this results in an open window \(u_5 v v_4 v_5 v_6\) of length \(4\) which is undesirable for \(\PlayerMain\).
    
    Thus, if \(\PlayerMain\) never detects window closings, then this may result in open windows of length \(\WindowLength\) in the outcome. 
    Since \(C_2(u_5)\) is negative and \(u_5\) belongs to the winning region for \(\DirFWMP(\WindowLength)\), this implies that the window starting at \(u_3\) must have closed at \(u_5\) and \(\PlayerMain\) cannot continue playing as if the window did not close at \(u_5\). 
    Computing \(C_i(v)\) in general helps detect those window closings where \(\PlayerMain\) cannot continue on as if the window did not close. 
    If \(C_i(v) \ge 0\), then even if a window closes along a path when \(v\) is reached, it is safe not to detect it, and we can still construct a winning strategy if one exists.
    \qed
\end{exa}

\begin{construction}%
\label{con:Player1-Mealy-Machine}
    We construct a Mealy machine \(M_{\dsf} = (Q_{\dsf}, q_0, \Vertices, \Vertices \union \{\epsilon\}, \Delta_{\dsf}, \delta_{\dsf})\) with \(\WindowLength\) states that defines a winning strategy \(\Strategy[\dsf]\) of \(\PlayerMain\) for 
    the \(\DirFWMP(\WindowLength)\) objective, where: 
    \begin{itemize}
        \item the memory \(Q_{\dsf} = \{1, \ldots, \WindowLength\}\) 
        stores a counter (modulo $\WindowLength$), and we assume arithmetic modulo $\WindowLength$ 
        (that is, $\WindowLength+1 = 1$, $1-1=\WindowLength$, etc.);
        \item the initial state is \(q_0 = \WindowLength\) (although we show that an arbitrary
        initial state also induces a winning strategy);
        \item the input alphabet is \(\Vertices\), as the Mealy machine reads vertices of the game,
        \item the output alphabet is \(\Vertices \union \{\epsilon\}\), as the Mealy machine either outputs a vertex (upon reading a vertex of \(\PlayerMain\)) or \(\epsilon\) (upon reading a vertex of \(\PlayerAdversary\)); 
        \item The transition function 
        \(\Delta_{\dsf} \colon Q_{\dsf} \times \Vertices \to Q_{\dsf}\) is defined as follows: 
        \[
            \Delta_{\dsf}(i, v) = 
            \begin{cases}
                i-1 \pmod{\WindowLength} & \text{ if } C_i(v) \ge 0 \quad (\emph{decrement})\\
                \WindowLength - 1        & \text{ if } C_i(v) < 0   \quad(\emph{reset and decrement})\\
            \end{cases}
        \]
        \item The output function  \(\delta_{\dsf} \colon \{1, \ldots, \WindowLength\} \times \Vertices \to \Vertices \union \{\epsilon\} \) is defined as follows:
        \[
            \delta_{\dsf}(i, v) = 
            \begin{cases}
                \epsilon & \text{ if } v \in \VerticesAdversary\\
                D_i(v) & \text{ if } v \in \VerticesMain \text{ and } C_i(v) \ge 0\\
                D_{\WindowLength}(v) & \text{ if } v \in \VerticesMain \text{ and } C_i(v) < 0\\
            \end{cases}
        \]
    \end{itemize}
    Note that if $C_i(v) < 0$, then $\delta_{\dsf}(i, v) = \delta_{\dsf}(\WindowLength, v)$,
    that is, on a reset the strategy plays as if the counter was equal to $\WindowLength$.
    
    We establish the correctness of the construction as follows.
    We show that the strategy \(\Strategy[\dsf]\) of \(\PlayerMain\) defined
    by $M_{\dsf}$ is winning, that is for all 
    strategies \(\Strategy[\Adversary]\) of \(\PlayerAdversary\), the outcome \(\Play\) 
    of the strategy profile \((\Strategy[\dsf], \Strategy[\Adversary])\)
    satisfies the objective \(\DirFWMP(\WindowLength)\).
    
    We show that every window in \(\Play\) closes within at most~\(\WindowLength\) steps. 
    We split the outcome \(\Play\) into segments (where the last vertex of each segment 
    is the same as the first vertex of the next segment) such that for each segment, 
    the state of the Mealy machine is updated to \(\WindowLength - 1\) upon reading 
    the last vertex of the segment (thus also upon reading the first vertex 
    of each segment), but the Mealy machine is never in 
    state \(\WindowLength - 1\) in between.
    Note that the initial memory state of the Mealy machine is $\WindowLength$, thus the first segment
    starts at the beginning of the outcome, and the segments cover the whole
    outcome.
    Note also that the length of each segment (i.e., the number of transitions) is at most 
    \(\WindowLength\) since either the memory state 
    is either updated to \(\WindowLength - 1\), or decreased by \(1\) (modulo \(\WindowLength\)).
    
    For all segments in the outcome \(\Play\), we show that all windows that open 
    in the segment are closed by (or before) the end of the segment, from 
    which we can conclude that the objective \(\DirFWMP(\WindowLength)\) 
    is satisfied.
    
    Consider a segment \(v_{\WindowLength} v_{\WindowLength - 1} v_{\WindowLength -2}
    \cdots v_{p+1} v_{p}\) (where $p \geq 0$ since each segment has at most 
    \(\WindowLength\) transitions), and the sequence of memory states along the segment: 
    $$
        q_{\WindowLength} \xrightarrow{v_{\WindowLength}} q_{\WindowLength - 1}  \xrightarrow{v_{\WindowLength-1}} q_{\WindowLength - 2} \ \cdots \ 
        q_{p+1} \xrightarrow{v_{p+1}} q_{p}  \xrightarrow{v_{p}} q_{p-1} 
    $$
    which can be written as:
    $$
        x \xrightarrow{v_{\WindowLength}} \WindowLength - 1  \xrightarrow{v_{\WindowLength-1}} \WindowLength - 2 \ \cdots \ 
        p+1 \xrightarrow{v_{p+1}} y  \xrightarrow{v_{p}} \WindowLength - 1 
    $$
    where $q_{\WindowLength - 1} = \WindowLength - 1$ and $q_{p-1} = \WindowLength - 1$
    by the definition of segments, and $q_i = i$ for $i=p+1, \ldots, \WindowLength-2$ since the 
    counter is decremented whenever it is not reset to $\WindowLength - 1$ (and thus
    we also have $C_i(v_i) \geq 0$ for $i=p+1, \ldots, \WindowLength-1$).
    We discuss the possible values of $q_{p} = y$ (and $q_{\WindowLength} = x$ 
    for which the situation is similar). There are two possibilities:
    either $y = \WindowLength$ and the counter is decremented upon reading $v_{p}$ 
    (and thus $p=0$ and $C_p(v_{p}) = 0$), 
    or $y = p < \WindowLength$ and the counter is reset upon reading $v_{p}$ 
    (and thus $C_p(v_{p}) < 0$). It follows that in both cases $C_p(v_p) \leq 0$.
    
    Moreover at the beginning of the segment (considering $q_{\WindowLength} = x$), 
    the strategy chooses 
    $D_{\WindowLength}(v_{\WindowLength})$ upon reading $v_{\WindowLength}$
    (if $v_{\WindowLength} \in \VerticesMain$ is a player-1 vertex), 
    as either $x = \WindowLength$ and the output on a decrement is
    $D_x(v_{\WindowLength}) = D_{\WindowLength}(v_{\WindowLength})$, 
    or $x < \WindowLength$ and $C_x(v_{\WindowLength}) < 0$ (and the output
    on a reset is $D_{\WindowLength}(v_{\WindowLength})$ by definition).
    Hence we have $v_{i-1} = D_{i}(v_{i})$ 
    whenever $v_{i} \in \VerticesMain$ is a player-1 vertex,
    for all $i=p+1,\dots,\WindowLength$.
    
    We now show by induction on $i$ that 
    \(C_i(v_i) \le \mathsf{TP}(v_i \cdots v_{p+1} v_{p})\) 
    for all \(i \in\{p+1, \ldots, \WindowLength -1\}\),
    which implies, since $C_i(v_i) \geq 0$, 
    that $\mathsf{TP}(v_i \cdots v_{p+1} v_{p}) \geq 0$,
    and thus all windows in the segment close within $\WindowLength$ steps.
    
    For the base case \(i = p + 1\), since \(C_{p}(v_{p}) \leq 0\), 
    the $\max$ subexpression at Line~\ref{alg_line:goodwin_c_i_max}
    and Line~\ref{alg_line:goodwin_c_i_min} of Algorithm~\ref{alg:good_win} simplifies to $w(v_{p+1}, v_p)$, and accordingly we get either $C_{p+1}(v_{p+1}) = w(v_{p+1}, v_p)$ if 
    $v_{p+1} \in \VerticesMain$ is a player-1 vertex,
    or $C_{p+1}(v_{p+1}) \leq w(v_{p+1}, v_p)$ if 
    $v_{p+1} \in \VerticesAdversary$ is a player-2 vertex,
    which establishes the base case 
    $C_{p+1}(v_{p+1}) \leq w(v_{p+1}, v_p) = \mathsf{TP}(v_{p+1}v_{p})$.
    
    For the induction step, let \(i \in \{p + 2, \ldots, \WindowLength - 1\}\). 
    Since \(C_{i-1}(v_{i-1}) \ge 0\), we have \(C_{i}(v_i) \le w(v_{i},v_{i-1}) + C_{i-1}(v_{i-1})\) by a similar argument as in the base case.
    By the induction hypothesis, we get \(C_{i}(v_i) \le \PayoffFunction(v_i, v_{i-1}) + \mathsf{TP}(v_{i-1} \cdots v_{p}) = \mathsf{TP}(v_i \cdots v_p)\).
    
    We show that the result holds no matter which is the initial memory state of
    the Mealy machine. It suffices to remark that with initial state $q_0 = i$
    instead of $q_0 = \WindowLength$, the prefix of the outcome until the first
    reset occurs (and the first segment starts) can be considered as a truncated
    segment (thus of length at most $i \leq \WindowLength$) where the same argument 
    can be used to show that all windows close within the length of the segment.
    \qed
\end{construction}

\begin{thm}%
\label{thm:MemoryPlayer1}
    If \(\PlayerMain\) wins in a non-stochastic game \(\Game\) with objective \(\FWMPL\),
    then \(\PlayerMain\) has a winning strategy with memory of size $\WindowLength$.
\end{thm}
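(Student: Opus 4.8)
The plan is to combine Lemma~\ref{lem:DirFWMP-memory} with the recursive structure of Algorithm~\ref{alg:fixed_wmp} and the suffix characterization $\FWMPL[\Game] = \{\Play \in \PlaySet{\Game} \mid \exists k \ge 0 : \PlaySuffix{k} \in \DirFWMP(\WindowLength)\}$. Intuitively, a winning play for \(\FWMPL\) splits into a finite ``bad prefix'', during which \(\PlayerMain\) merely steers the token into a region where she wins the direct objective, followed by a suffix on which she runs the \(\WindowLength\)-memory direct strategy of Construction~\ref{con:Player1-Mealy-Machine}. First I would unfold the recursion of Algorithm~\ref{alg:fixed_wmp}: at level \(j\) it works on the subgame \(\Game \restriction V_j\) (with \(V_1 = \Vertices\)), computes the direct winning region \(W_d^{(j)} = \TwoPlayerWinningRegion[\Game \restriction V_j]{\Main}{\DirFWMP(\WindowLength)}\), its attractor \(A^{(j)} = \Attr{\Main}{W_d^{(j)}}\) in that subgame, and recurses on \(\Game \restriction V_{j+1}\) with \(V_{j+1} = V_j \setminus A^{(j)}\). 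The winning region is \(W = \bigcup_j A^{(j)}\), the sets \(A^{(j)}\) are pairwise disjoint, and each \(v \in W\) lies in a unique \(A^{(j)}\); write \(j(v)\) for that level.

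Next I would define \(\PlayerMain\)'s strategy by a single Mealy machine with state space \(\{1, \ldots, \WindowLength\}\) that acts according to the level \(j(v)\) of the current vertex \(v\); crucially this needs no extra memory, since \(j(v)\) (and whether \(v \in W_d^{(j(v))}\)) can be read off from \(v\) itself. On a vertex \(v \in A^{(j(v))} \setminus W_d^{(j(v))}\) she plays the memoryless attractor strategy towards \(W_d^{(j(v))}\) inside \(\Game \restriction V_{j(v)}\) and resets the counter to \(\WindowLength\); on a vertex \(v \in W_d^{(j(v))}\) she applies the transition and output functions of the direct machine \(M_{\dsf}\) of Construction~\ref{con:Player1-Mealy-Machine} instantiated on \(\Game \restriction V_{j(v)}\) (using the values \(C_i(v)\) of that subgame). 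As all of these share the state space \(\{1, \ldots, \WindowLength\}\), the combined machine has \(\WindowLength\) states.

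The core of the correctness argument is to track the level along an outcome \(\Play\). I would show the level is non-increasing: from a level-\(j\) vertex, \(\PlayerMain\)'s moves and all of \(\PlayerAdversary\)'s moves that stay in \(V_j\) remain at level \(j\) (by the attractor-closure property on \(A^{(j)} \setminus W_d^{(j)}\) and the fact that \(W_d^{(j)}\) is a trap for \(\PlayerAdversary\) in \(\Game \restriction V_j\), both consequences of \(\DirFWMP(\WindowLength)\) being closed under suffixes, cf.\ Remark~\ref{rem:trap-pref-ind}), and \(\PlayerAdversary\)'s only way to change the level is to take an edge leaving \(V_j\), which strictly decreases it and keeps the token in \(W\). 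Since the level is a positive integer, it stabilizes at some \(j^\ast\) after finitely many steps; from then on the token stays in \(V_{j^\ast}\), the attractor strategy forces it into \(W_d^{(j^\ast)}\) within finitely many steps, and the direct strategy then keeps it in \(W_d^{(j^\ast)}\) forever. By Construction~\ref{con:Player1-Mealy-Machine} this suffix satisfies \(\DirFWMP(\WindowLength)\), so by the suffix characterization \(\Play \in \FWMPL\).

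The main obstacle is exactly this phase-transition bookkeeping with a single \(\WindowLength\)-state machine: one must ensure that no additional memory is needed to record ``which subgame / which phase'' the play is currently in, and that re-entering a direct region with an arbitrary counter value (because \(\PlayerAdversary\) may repeatedly push the token down to a strictly lower attractor) does not spoil the direct strategy. Both difficulties are resolved by features already established in Construction~\ref{con:Player1-Mealy-Machine}: the level and phase are determined by the current vertex alone, and the direct strategy is winning \emph{from any initial memory state}, so each fresh entry into a direct region re-initialises cleanly regardless of the counter. Because the number of descents is bounded by the number of recursion levels, the play is eventually trapped in one direct region, which yields the winning suffix.
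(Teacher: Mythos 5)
Your proposal is correct and follows essentially the same route as the paper's proof: unfolding the recursion of Algorithm~\ref{alg:fixed_wmp} into the disjoint attractor regions, combining the memoryless attractor strategy (with counter reset) and the \(\WindowLength\)-state direct-window machine of Construction~\ref{con:Player1-Mealy-Machine} into a single \(\WindowLength\)-state Mealy machine, and arguing via the strictly decreasing level index that the token eventually stabilizes in some direct-winning region \(W_d^{(j^\ast)}\). Your explicit appeal to the fact that the direct strategy wins from an arbitrary initial memory state is exactly the ingredient the paper's proof uses to handle re-entries into direct regions, so no gap remains.
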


\begin{proof}
    Since \(\FWMPL\) is a prefix-independent objective, we have that the winning region  \(\TwoPlayerWinningRegion[\Game]{\Main}{\FWMPL}\)  of \(\PlayerMain\) is a trap for \(\PlayerAdversary\)  (Remark~\ref{rem:trap-pref-ind}), and induces a subgame, say \(\Game_0\).
    We construct a winning strategy \(\Strategy[\Main]^{\TwoP}\) for \(\PlayerMain\) in~\(\Game_0\), with memory of size \(\WindowLength\). 
    Let there be $k+1$ calls to the subroutine \(\TwoPlayerDirectWMP\) from Algorithm~\ref{alg:fixed_wmp}.
    We denote by $(W_i)_{i \in \{1, \dots, k\}}$ the nonempty $W_d$ returned by the $i^{\text{th}}$ call to the subroutine, and let \(A_i = \Attr{\Main}{W_i}\).
    The \(A_i\)'s are pairwise disjoint, and their union is \(\bigcup_{i=1}^{k} A_i = \TwoPlayerWinningRegion[\Game]{\Main}{\FWMPL}\). 
    For \(i \in \{1, \ldots, k\}\), inductively define \(\Game_i\) to be the subgame induced by the complement of \(A_i\) in \(\Game_{i-1}\).
    Since \(\DirFWMP(\WindowLength)\) is closed under suffixes, for all \(i \in \{1, \ldots, k\}\), we have that \(W_i\) is a trap for \(\PlayerAdversary\) in \(\Game_i\) (Remark~\ref{rem:trap-pref-ind}).
    
    Let \(W = \bigcup_{i=1}^{k} W_i\) be the union of the regions \(W_i\) 
    over all subgames \(\Game_i\), and let \(A = \bigcup_{i=1}^{k} (A_i \setminus W_i)\) be the union of the regions \(A_i \setminus W_i\) over all subgames \(\Game_i\), for \(i \in \PositiveSet{k}\).
    Note that \(W \intersection A = \emptyset\) and \(W \union A = \TwoPlayerWinningRegion[\Game]{\Main}{\FWMPL}\).
    
    We construct a strategy \(\Strategy[\Main]^{\TwoP}\) that plays according to 
    the (memoryless) attractor strategy in $A$, 
    and according to the winning strategy \(\Strategy[\dsf]\)
    for \(\DirFWMP(\WindowLength)\) objective (defined in Construction~\ref{con:Player1-Mealy-Machine}) in $W$. Formally, define the Mealy machine \(M_{\Main}^{\TwoP}\) with \(\WindowLength\) states that defines \(\Strategy[\Main]^{\TwoP}\). 
    The Mealy machine \(M_{\Main}^{\TwoP}\) is given by the tuple \((Q_{\Main}^{\TwoP},\WindowLength, \Vertices, \Vertices \union \{\epsilon\}, \Delta_{\Main}^{\TwoP}, \delta_{\Main}^{\TwoP})\), where
    
    \begin{itemize}
        \item  the memory \(Q_{\Main}^{\TwoP} = \{1, \ldots, \WindowLength\}\) of the Mealy machine stores a counter (modulo $\WindowLength$);
        \item the initial state is \(q_0 = \WindowLength\);
        \item the input alphabet is \(\Vertices\), as the Mealy machine reads vertices of the game;
        \item the output alphabet is \(\Vertices \union \{\epsilon\}\), as the Mealy machine either outputs a vertex (upon reading a vertex of \(\PlayerMain\)) or \(\epsilon\) (upon reading a vertex of \(\PlayerAdversary\)); 
        \item The transition function 
        \(\Delta_\Main^{\TwoP} \colon Q_\Main^{\TwoP} \times \Vertices \to Q_{\Main}^{\TwoP}\) is defined as: 
        \[
            \Delta_1^{\TwoP}(i, v) = 
            \begin{cases}
                \WindowLength     & \text{ if } v \in A  \,\,\quad (\text{follow attractor strategy})\\
                \Delta_{\dsf}(i, v)    & \text{ if } v \in W  \quad (\text{follow $\Strategy[\dsf]$ for objective \(\DirFWMP(\WindowLength)\)})\\
            \end{cases}
        \]
        \item The output function  \(\delta_\Main^{\TwoP} \colon \{1, \ldots, \WindowLength\} \times \Vertices \to \Vertices \union \{\epsilon\} \) is defined as follows. Here, \(A(v)\) is the output of a (memoryless) attractor strategy to reach the set \(W\).
        \[
            \delta_1^{\TwoP}(i, v) = 
            \begin{cases}
                \epsilon & \text{ if } v \in \VerticesAdversary\\
                A(v)     & \text{ if } v \in A \intersection \VerticesMain \\
                \delta_{\dsf}(i,v) & \text{ if } v \in W \intersection \VerticesMain\\
            \end{cases}
        \]
    \end{itemize}
    \noindent %FIXED indent
    We establish the correctness of the construction as follows.
    We show that the strategy \(\Strategy[\Main]^{\TwoP}\)  of \(\PlayerMain\) defined
    by \(M_{\Main}^{\TwoP}\) is winning, that is for all 
    strategies \(\Strategy[\Adversary]\) of \(\PlayerAdversary\), the outcome \(\Play\) 
    of the strategy profile \((\Strategy[\Main]^{\TwoP}, \Strategy[\Adversary])\)
    satisfies the objective \(\FWMPL\).
    
    The crux is to show that one of the sets \(W_i\) for some \(i \in \{1, \ldots, k\}\)
    is never left from some point on. Intuitively, given the token is in $A_i$
    for some \(i \in \{1, \ldots, k\}\) (thus in $\Game_i$), 
    following \(\Strategy[\Main]^{\TwoP}\) the token will either remain in
    $A_i$, or leave the subgame $\Game_i$, thus entering $A_j$ for a smaller index $j < i$.
    Repeating this argument (at most $k$ times, as the index is decreasing) shows
    that the token eventually remains in some \(W_i\) (\(i \in \{1, \ldots, k\}\)).
    From that point on, the strategy plays like $\Strategy[\dsf]$ (with some initial 
    memory state $i \in  \{1, \ldots, \WindowLength\}$) 
    which ensures objective \(\DirFWMP(\WindowLength)\) (proof of Lemma~\ref{lem:DirFWMP-memory}),
    and thus from the initial state the objective \(\FWMPL\) is satisfied.
\end{proof}

\begin{rem}
    In every play \(\Play\) consistent with \(\Strategy[\Main]^{\TwoP}\), eventually, all windows close in at most \(\WindowLength\) steps. 
    If \(\PlayerMain\) follows the strategy \(\Strategy[\Main]^{\TwoP}\), then irrespective of how \(\PlayerAdversary\)'s choices are made (whether they are deterministic or randomized),  the outcome always satisfies the \(\FWMPL\) objective.
    The proof of Theorem~\ref{thm:MemoryPlayer1} thus continues to hold even if the strategy \(\Strategy[\Adversary]\) of \(\PlayerAdversary\) is not deterministic. 
    Since the constructed strategy \(\Strategy[\Main]^{\TwoP}\) is a deterministic strategy,
    we have that deterministic strategies suffice for the \(\FWMPL\) objective for \(\PlayerMain\), and memory of size \(\WindowLength\) suffices.
\end{rem}

\paragraph*{Lower bound on memory requirement for Player 1.}
In~\cite{CDRR15}, the authors show an example of game with \(\WindowLength =4\) where \(\PlayerMain\) requires memory at least 3. 
While it is not difficult to generalize this to arbitrary \(\WindowLength\), we state it here for completeness. 
\begin{thm} \label{thm:memory_lb_player1}
    There exists a family of games \(\{\Game_{\WindowLength}\}_{\WindowLength \ge 2}\) with objective \(\FWMPL\) for \(\PlayerMain\) such that every winning strategy of \(\PlayerMain\) in \(\Game_{\WindowLength}\) requires at least \(\WindowLength - 1\) memory. 
\end{thm}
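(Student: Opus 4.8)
The plan is to construct, for each $\WindowLength \ge 2$, a game $\Game_\WindowLength$ in which $\PlayerMain$ wins the $\FWMPL$ objective but is forced to remember how many steps remain before the current window must close, and then to show that collapsing any two of these ``remaining-step'' counts leads to a losing play. Concretely, I would build a gadget centered on a single $\PlayerAdversary$ vertex (or a short $\PlayerAdversary$-controlled path) from which $\PlayerAdversary$ can feed $\PlayerMain$ a choice of how many negative-weight edges have just been traversed, so that upon arriving at a designated $\PlayerMain$ vertex $v$, the window has been open for some number $j \in \{1, \dots, \WindowLength-1\}$ of steps that $\PlayerMain$ cannot infer from $v$ alone. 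From $v$ there should be $\WindowLength-1$ distinct out-edges leading into ``corridors'' of differing lengths, where the corridor appropriate for deadline $j$ closes the window in exactly the right number of remaining steps, but every other corridor leaves an open window of length $\WindowLength$. This is essentially the same phenomenon exploited in Example~\ref{example:Player1_strat}, scaled up so that $\WindowLength-1$ genuinely different responses are required.

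First I would fix the weights so that the window indeed remains open along $\PlayerAdversary$'s approach to $v$: all edges on the approach path should have negative total payoff, and the $j$-th corridor should supply exactly enough positive payoff to bring the running total back to nonnegative within the allowed budget. The key design constraint is that the choice $D_j(v)$ forced on $\PlayerMain$ at $v$ must be a strict function of $j$, i.e.\ the correct successor for remaining-deadline $j$ must be \emph{wrong} (produce an open window of length $\WindowLength$) for every deadline $j' \ne j$. I would verify winnability by exhibiting the obvious $\WindowLength$-memory strategy from Theorem~\ref{thm:MemoryPlayer1}, which stores the counter and picks the matching corridor; this shows $v \in \TwoPlayerWinningRegion[\Game_\WindowLength]{\Main}{\FWMPL}$ and simultaneously confirms that $\WindowLength$ memory suffices, so the lower bound of $\WindowLength-1$ is essentially tight.

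For the lower bound itself, I would argue by a pigeonhole / indistinguishability argument on the states of any Mealy machine $M$ defining a winning $\PlayerMain$ strategy. Suppose $M$ has at most $\WindowLength-2$ states. Consider the $\WindowLength-1$ prefixes $\rho_1, \dots, \rho_{\WindowLength-1}$ that each reach $v$ with a distinct residual deadline $j$, all ending in the same vertex $v$. Since $M$ has fewer than $\WindowLength-1$ states, two of these prefixes, say $\rho_j$ and $\rho_{j'}$ with $j \ne j'$, drive $M$ into the same memory state upon reaching $v$, and hence $M$ outputs the same successor of $v$ on both. By construction that common successor is the correct corridor for at most one of $j, j'$, so for the other the resulting play contains an open window of length $\WindowLength$; since the gadget can be iterated (I would add edges returning control to $\PlayerAdversary$ so he can re-enter the gadget infinitely often, keeping the objective prefix-independent and forcing infinitely many such failures), this outcome violates $\FWMPL$, contradicting that $M$ is winning. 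The main obstacle I anticipate is the simultaneous calibration of weights and corridor lengths so that (a) the approach keeps the window open, (b) exactly one corridor closes it for each deadline, and (c) the gadget loops cleanly without any corridor accidentally rescuing a wrong choice; getting all $\WindowLength-1$ deadlines to be genuinely distinguishable (rather than, say, only $\log \WindowLength$ of them) is precisely what forces the state count up to $\WindowLength-1$, and is the crux of the construction.
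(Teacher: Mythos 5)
Your proposal is correct and follows essentially the same approach as the paper: the paper's game $\Game_{\WindowLength}$ is exactly your gadget, with a \(\PlayerAdversary\) vertex $u_0$ choosing an edge of payoff $-i$ (encoding deadline $i$), a shared zero-payoff spine to a branching \(\PlayerMain\) vertex, $\WindowLength-1$ disjoint corridors in which the $i$-th corridor carries $+i$ on its $i$-th edge (so exactly one corridor closes the window for each deadline), and a return edge to $u_0$ enabling infinite iteration; the lower bound then follows by the same pigeonhole/indistinguishability argument on Mealy-machine states. The weight calibration you flag as the main obstacle is resolved precisely by this placement of $-i$ and $+i$.
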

\begin{proof}
    We describe the game \(\Game_\WindowLength\) with objective \(\FWMPL\) in \figurename~\ref{fig:family-of-graphs-player-main-memory-lower-bound}.
    \begin{figure}[t]
        \centering
        \scalebox{0.8}{
            \begin{tikzpicture}
                \node[square, draw] (u0) {\(u_0\)};
                \node[state, right of=u0] (u1) {\(u_1\)};
                \node[state, right of=u1] (u2) {\(u_2\)};
                \node[state, right of=u2, xshift=+1mm] (ul1) {\(u_{\WindowLength-2}\)};
                \node[state, right of=ul1] (ul) {\(u_{\WindowLength -1}\)};
                \node[state, right of=ul, yshift=+6mm] (v11) {};
                \node[state, right of=ul, yshift=0mm] (v21) {};
                \node[state, right of=ul, yshift=-6mm] (v31) {};
                \node[state, right of=ul, yshift=-12mm] (vl1) {};
                \node[state, right of=v11] (v12) {};
                \node[state, right of=v21] (v22) {};
                \node[state, right of=v31] (v32) {};
                \node[state, right of=vl1] (vl2) {};
                \node[state, right of=v12] (v1l) {};
                \node[state, right of=v22] (v2l) {};
                \node[state, right of=v32] (v3l) {};
                \node[state, right of=vl2] (vll) {};
                \node[state, right of=v3l, xshift=+5mm] (vlll) {\(v\)};
                \draw 
                      (u0) edge[below] node{\(-1\)} (u1)
                      (u1) edge[below] node{\(0\)} (u2)
                      (u2) edge[below, dashed] node{} (ul1)
                      (ul1) edge[below] node{\(0\)} (ul)
                      (u0) edge[below, bend left=40, pos=0.6] node{\(-2\)} (u2)
                      (u0) edge[below, bend left=40, pos=0.6] node{\(-(\WindowLength-2)\)} (ul1)
                      (u0) edge[below, bend left=40, pos=0.6] node{\(-(\WindowLength-1)\)} (ul)
                      (ul) edge[above, pos=0.75] node{\(+1\)} (v11)
                      (ul) edge[above, pos=0.75] node{\(0\)} (v21)
                      (ul) edge[above, pos=0.75] node{\(0\)} (v31)
                      (ul) edge[above, pos=0.75] node{\(0\)} (vl1)
                      (v11) edge[above] node{\(0\)} (v12)
                      (v21) edge[above] node{\(+2\)} (v22)
                      (v31) edge[above] node{\(0\)} (v32)
                      (vl1) edge[above] node{\(0\)} (vl2)
                      (v12) edge[above, dashed] node{} (v1l)
                      (v22) edge[above, dashed] node{} (v2l)
                      (v32) edge[above, dashed] node{} (v3l)
                      (vl2) edge[above, dashed] node{} (vll)
                      (v1l) edge[above, pos=0.3] node{\(0\)} (vlll)
                      (v2l) edge[above, pos=0.3] node{\(0\)} (vlll)
                      (v3l) edge[above, pos=0.3] node{\(0\)} (vlll)
                      (vll) edge[below] node{\(+(\WindowLength  -1)\)} (vlll)
                      (vlll) edge[below, bend right=45] node{\(0\)} (u0)
                    ;
            \end{tikzpicture}
        }
        \caption{Memory \(\WindowLength - 1\) is necessary for \(\PlayerMain\).}
        \label{fig:family-of-graphs-player-main-memory-lower-bound}
    \end{figure}
    The vertex \(u_0\) belongs to \(\PlayerAdversary\). 
    All other vertices in the game belong to \(\PlayerMain\).
    For each \(i \in \{1, \ldots, \WindowLength - 1\}\), there is an edge from \((u_0, u_i)\) with payoff \(-i\). 
    For all \(i \in \{1, \ldots, \WindowLength - 2\}\), there is an edge \((u_i, u_{i+1})\) with payoff \(0\).
    There are \(\WindowLength - 1\) disjoint paths from \(u_{\WindowLength -1 }\) to \(v\), each of length \(\WindowLength - 1\).
    The \(i^{\text{th}}\) edge on the \(i^{\text{th}}\) path has payoff \(+i\). All other edges in all paths from \(u_{\WindowLength-1}\) to \(v\) have payoff \(0\).
    Finally, there is an edge \((v, u_0)\) with payoff \(0\).
    
    If \(\PlayerAdversary\) moves the token from \(u_0\) to \(u_i\) for \(i \in \{1, \ldots, \WindowLength -1\}\), then \(\PlayerMain\) needs to ensure a payoff of at least \(+i\) from \(u_{\WindowLength - 1}\) in at most \(i\) steps to ensure that the window starting at \(u_0\) closes in at most \(\WindowLength\) steps. 
    When \(\PlayerAdversary\) moves the token from \(u_0\) to \(u_i\), we have that \(\PlayerMain\) must take the \(i^{\text{th}}\) path from \(u_{\WindowLength - 1}\) so the window starting at \(u_0\) closes in at most \(\WindowLength\) steps.
    If \(\PlayerMain\) chooses any other successor of \(u_{\WindowLength - 1}\), then the window starting at \(u_0\) remains open for more than \(\WindowLength\) steps.
    Since there are \(\WindowLength - 1\) different choices for \(\PlayerMain\) from \(u_{\WindowLength - 1}\), a Mealy machine defining a winning strategy of \(\PlayerMain\) requires at least \(\WindowLength - 1\) distinct states. 
    Thus, a winning strategy of \(\PlayerMain\) in the game \(\Game_\WindowLength\) requires at least \(\WindowLength -1\) memory. 
\end{proof}

\begin{rem}
    The game \(\Game_{\WindowLength}\) in Figure~\ref{fig:family-of-graphs-player-main-memory-lower-bound} shows that the memory requirement of \(\PlayerMain\) is \(\WindowLength- 1\) even if she uses randomized strategies.
    This is because after a certain point in the play, each time the token reaches \(u_{\WindowLength - 1}\), \(\PlayerMain\) must choose the correct successor with probability~1 in order to win the \(\FWMPL\) objective. 
    Thus, randomization does not improve the lower bound for the size of the memory required.
\end{rem}

\subsection{Memory requirement for Player 2 for FWMP objective}

\paragraph*{Upper bound on memory requirement for Player 2.}
Now we show that for \(\FWMPLBar\) objective, \(\PlayerAdversary\) has a winning strategy that requires at most $|V| \cdot \WindowLength$ memory.
This has been loosely stated in~\cite{CDRR15} without a formal proof. We use this result to show in Section~\ref{sec:reducing_stochastic_window_mean_payoff_games} that the same memory bound for \(\PlayerAdversary\) also suffices in stochastic games.
In fact, we show a stronger result that the memory required in the stochastic window mean-payoff games is no more than the optimal memory bounds for non-stochastic games.

\begin{thm}%
\label{thm:memoryPlayer2}
    Let \(\Game\) be a non-stochastic game with objective \(\FWMPLBar\) for \(\PlayerAdversary\). 
    Then, \(\PlayerAdversary\) has a winning strategy with memory size at most \(\abs{\Vertices} \cdot \WindowLength\).
\end{thm}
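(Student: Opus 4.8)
The plan is to play entirely inside \(\PlayerAdversary\)'s winning region \(Z = \Vertices \setminus \TwoPlayerWinningRegion[\Game]{\Main}{\FWMPL}\). As \(\FWMPL\) is prefix-independent, \(Z\) is a trap for \(\PlayerMain\) (Remark~\ref{rem:trap-pref-ind}), so every play stays in \(Z\); moreover \(Z\) is exactly the vertex set of the final subgame of Algorithm~\ref{alg:fixed_wmp}, on which \(\TwoPlayerDirectWMP(Z, \WindowLength) = \emptyset\). First I would expose the layer structure of \(Z\) by unrolling the recursion of \(\TwoPlayerDirectWMP\) on it: set \(H_0 = Z\) and, at step \(j\), \(L_j = H_{j-1} \setminus \GoodWin(H_{j-1}, \WindowLength)\), \(A^{(j)} = \Attr{\Adversary}{L_j}\) computed in \(H_{j-1}\), and \(H_j = H_{j-1} \setminus A^{(j)}\). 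Since the call returns \(\emptyset\), the recursion halts at some step \(m\) with \(\GoodWin(H_{m-1}, \WindowLength) = \emptyset\), that is \(L_m = H_{m-1}\), giving a partition \(Z = A^{(1)} \sqcup \dots \sqcup A^{(m-1)} \sqcup L_m\) into \(m \le \abs{\Vertices}\) nonempty layers. From each vertex of \(L_j\), \(\PlayerAdversary\) can keep one window open for \(\WindowLength\) steps while staying in \(H_{j-1}\) (using a step counter, memory \(\WindowLength\)), as \(L_j\) is the region where \(\PlayerMain\) loses \(\GW(\WindowLength)\) in \(H_{j-1}\); and from \(A^{(j)}\) he reaches \(L_j\) inside \(H_{j-1}\) with a memoryless attractor strategy.

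Next I would describe the strategy by a Mealy machine whose memory is a pair \((j, c)\), with \(j \in \{1, \dots, m\}\) a \emph{target layer} and \(c \in \{1, \dots, \WindowLength\}\) a window counter, hence at most \(m \cdot \WindowLength \le \abs{\Vertices} \cdot \WindowLength\) states. With target \(j\): while the token is in \(A^{(j)} \setminus L_j\), play the attractor to \(L_j\); on reaching \(L_j\), open a window and play the window-spoiling strategy of \(H_{j-1}\), decrementing \(c\) at each step. Player \(\PlayerAdversary\) always moves within \(H_{j-1}\), so the token can leave \(H_{j-1}\) only via a \(\PlayerMain\) edge, which necessarily goes to a strictly lower layer \(j' < j\) (as \(H_{j-1}\) comprises the layers of index \(\ge j\)); whenever this happens I reset the target to \(j'\) and the counter to \(\WindowLength\) and restart. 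Moves inside \(H_{j-1}\) to higher-index layers leave the target unchanged, since the spoiling strategy keeps the open window negative throughout \(H_{j-1}\), and during the attractor phase the \(\PlayerMain\)-edges of \(A^{(j)}\) that remain in \(H_{j-1}\) stay in \(A^{(j)}\), so without an escape the token reaches \(L_j\) within \(\abs{A^{(j)}}\) steps.

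Correctness rests on a ranking on the target layer. Between two completed windows the target can only strictly decrease, because it changes exactly when \(\PlayerMain\) escapes to a lower layer, and it is bounded below by \(1\). At layer \(1\) we have \(H_0 = Z\), which is a trap for \(\PlayerMain\), so there is no lower layer to escape to: the attractor reaches \(L_1\) and the spoiling strategy then keeps the window open for the full \(\WindowLength\) steps. Each run at a fixed target lasts at most \(\abs{\Vertices} + \WindowLength\) steps (attractor then window, or an earlier escape), and there are at most \(m\) runs before the target reaches \(1\); hence \(\PlayerAdversary\) completes a window of length \(\WindowLength\) after finitely many steps. Resetting and repeating yields infinitely many open windows of length \(\WindowLength\) at unbounded positions, so every suffix contains such a window and the outcome satisfies \(\FWMPLBar\). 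Thus \(\PlayerAdversary\) has a winning strategy with at most \(\abs{\Vertices} \cdot \WindowLength\) memory.

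The step I expect to be the main obstacle is the escape handling. In the full game \(\PlayerMain\) has edges leaving \(H_{j-1}\) that the window-spoiling strategy does not control and that may prematurely close the open window; the ranking argument is what converts such escapes into genuine progress. The delicate points are to verify that every exit from \(H_{j-1}\) lowers the target (and that moves to higher layers within \(H_{j-1}\) never raise it within a run), and that layer \(1\) admits no escape because \(Z\) traps \(\PlayerMain\), so that a window of length \(\WindowLength\) is guaranteed to complete and infinitely many are produced.
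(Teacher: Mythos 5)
Your proposal is correct and takes essentially the same route as the paper's proof: the identical layered decomposition of \(\PlayerAdversary\)'s winning region obtained by unrolling the \(\GoodWin\)/attractor recursion of \(\TwoPlayerDirectWMP\), the same two-mode (attractor / window-open) strategy realized by a Mealy machine with states of the form \((j, c) \in \{1, \ldots, k\} \times \{1, \ldots, \WindowLength\}\), and the same ranking argument that every escape by \(\PlayerMain\) strictly lowers the layer index while the bottom layer admits no escape because the winning region is a trap. No gap to report.
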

\begin{proof}
    Since \(\FWMPL\) is a prefix-independent objective, so is \(\FWMPLBar\). 
    We have that \(\TwoPlayerWinningRegion[\Game]{\Adversary}{\FWMPLBar}\) is a trap for \(\PlayerMain\)  (Remark~\ref{rem:trap-pref-ind}) and induces a subgame, say \(\GameH_0\), of \(\Game\).
    Let there be $k+1$ calls to the subroutine \(\GoodWin\) from Algorithm~\ref{alg:direct_wmp}, and let \(\GameH_i\) be the subgame corresponding to the \(i^{\text{th}}\) call of the subroutine.
    We denote by $(W_i)_{i=1}^k$ the complement of $W_{gw}$ in \(\GameH_i\), where \(W_{gw}\) is returned by the $i^{\text{th}}$ call to the subroutine, and let \(A_i = \Attr{\Adversary}{W_i}\).
    The \(A_i\)'s are pairwise disjoint, and their union is \(\bigcup_{i=1}^{k} A_i = \TwoPlayerWinningRegion[\Game]{\Adversary}{\FWMPLBar}\).
    
    We describe a winning strategy for the \(\FWMPLBar\) objective with memory \(k \cdot \WindowLength\), which is at most \(\abs{\Vertices} \cdot \WindowLength\). 
    The strategy is always in either {\em attractor mode} or {\em window-open mode}.
    When the game begins, it is in attractor mode.
    If the strategy is in attractor mode and the token is on a vertex \(v \in A_i \setminus W_i\) for some \(i \in \{1, \ldots, k\}\), then the attractor strategy is to eventually reach \(W_i\). 
    If the token reaches~\(W_i\), then the strategy switches to window-open mode. 
    Since all vertices in \(W_i\) are winning for \(\PlayerAdversary\) for the \(\overline{\GW(\WindowLength)}\) objective, he can keep the window open for \(\WindowLength\) more steps, provided that \(\PlayerMain\) does not move the token out of the subgame~\(\GameH_{i}\). 
    If, at some point, \(\PlayerMain\) moves the token out of the subgame \(\GameH_{i}\) to \(A_j\) for a smaller index \(j < i\), then the strategy switches back to attractor mode, this time trying to reach \(W_j\) in the bigger subgame \(\GameH_j\).
    Otherwise, if \(\PlayerAdversary\) keeps the window open for \(\WindowLength\) steps, then the strategy switches back to attractor mode until the token reaches a vertex in \(\bigcup_{i=1}^{k} W_i\).
    This strategy can be defined by a Mealy machine \(M_{\Adversary}^{\TwoP}\) with states \(\PositiveSet{k} \times \PositiveSet{\WindowLength}\), where the first component tracks the smallest subgame \(\GameH_i\) in which the window started to remain open, and the second component indicates how many more steps the window needs to be kept open for.
    A formal description of \(M_{\Adversary}^{\TwoP}\) is given in Construction~\ref{con:twop_mealy_machine_adversary}.
\end{proof}
    
\begin{construction}
\label{con:twop_mealy_machine_adversary}
    Let \(W = \bigcup_{i=1}^{k} W_i\), and \(A = \bigcup_{i=1}^k (A_i \setminus W_i)\).
    For \(1 \le i \le k\), let \(H_i\) denote the set of vertices in the subgame \(\GameH_i\).
    For all \(v \in \Vertices\), let \(\Gamma(v)\) denote the largest \(j \ge 1\) such that \(v \in H_j\). 
    That is, $\GameH_{\Gamma(v)}$ is the smallest subgame that \(v\) belongs to.
    Given \(j \ge 1\) and \(v \in A \intersection \VerticesAdversary \intersection (H_j \setminus H_{j+1})\), let \(A^j(v)\) denote a successor vertex that \(\PlayerAdversary\) can choose to eventually reach the \(W_j\) region in \(W \intersection (H_j \setminus H_{j+1})\). This is given by a memoryless attractor strategy. 
    Given \(i \in \PositiveSet{\WindowLength}\),  \(j \ge 1\) and \(v \in \VerticesAdversary \intersection H_j\), let \(D_i^j(v)\) denote the best successor that \(\PlayerAdversary\) should choose from \(v\) to ensure that the window remains open for \(i\) more steps. 
    These values can be computed for all \(i \in \PositiveSet{\WindowLength}\) and for all \(v \in H_j\) by running the \(\GoodWin\) algorithm on the subgame~\(\GameH_j\). 
    Recall from Line~\ref{alg_line:goodwin_c_i_min} in Algorithm~\ref{alg:good_win} that \(C_i(\Vertex) = \min_{(\Vertex,\Vertex') \in \Edges}\{\max\{\PayoffFunction(\Vertex,\Vertex'), \PayoffFunction(\Vertex,\Vertex') + C_{i-1}(\Vertex')\}\}\).
    We let \(D^j_i(\Vertex)\) be the successor vertex \(v' \in E(v)\) of \(v\) such that the value of \( \{\max\{\PayoffFunction(\Vertex,\Vertex'), \PayoffFunction(\Vertex,\Vertex') + C_{i-1}(\Vertex')\}\}\) is minimized.
    If there is more than one such \(\Vertex'\), we choose one arbitrarily.
    
    We construct a Mealy machine \(M_\Adversary^{\TwoP}\) that defines \(\Strategy[\Adversary]^{\TwoP}\),  a winning strategy of \(\PlayerAdversary\).
    The Mealy machine \(M_\Adversary^{\TwoP}\) is a tuple \((Q^{\TwoP}_{\Adversary}, (1, \WindowLength), \Vertices, \Vertices \union \{\epsilon\}, \Delta_\Adversary^{\TwoP}, \delta_\Adversary^{\TwoP})\) where 
    \begin{itemize}
        \item the set of states \(Q_\Adversary^{\TwoP}\) is the set \(\PositiveSet{k} \times \PositiveSet{\WindowLength}\),
        \item the initial state of the Mealy machine is \((1, \WindowLength)\) irrespective of where the game begins in~\(\GameH_1\).
        We could also instead have set the initial state of the Mealy machine as \((\Gamma(\VertexInitial), \WindowLength)\), where \(\VertexInitial\) is the initial vertex of the game. 
        However, to keep the initial state of the Mealy machine independent of the initial vertex of the game, we have the initial state of the Mealy machine as \((1, \WindowLength)\). 
        The transitions are defined such that the state of the Mealy machine is changed to \(\Gamma(v)\) in the very next step.
        \item the input alphabet is \(\Vertices\),  same as in \(M_\Main^{\TwoP}\),
        \item the output alphabet is \(\Vertices \union \{\epsilon\}\), same as in \(M_\Main^{\TwoP}\). 
    \end{itemize}
    \noindent %FIXED indent
    The transition function \(\Delta_\Adversary^{\TwoP} \colon Q_\Adversary^{\TwoP} \times \Vertices \to Q_\Adversary^{\TwoP}\) is defined as follows: 
    \[
        \Delta_\Adversary^{\TwoP}(q, v) = 
        \begin{cases}
            \Delta_\Adversary^{\TwoP}((1, \WindowLength), v)  & q = (j, i), v \in H_1 \setminus H_j, \text{for all } i \in \{1, \ldots , \WindowLength\}, j \in \{2, \ldots, k\}\\
            (\Gamma(v), \WindowLength) & q = (j, \WindowLength), v \in A \intersection H_j, \text{for all } j \in \{1, \ldots, k\} \\
            (\Gamma(v), \WindowLength - 1) & q = (j, \WindowLength), v \in W \intersection H_j, \text{for all } j \in \{1, \ldots, k\}\\
            (j, i - 1) & q= (j, i), v \in H_j, \text{for all } i \in \{2, \ldots , \WindowLength - 1\}, j \in \{1, \ldots, k\} \\
            (1, \WindowLength) & q = (j, 1), v \in H_j, \text{for all } j \in \{1, \ldots, k\}\\
        \end{cases}
    \]
    Suppose the Mealy machine is in state \(q \in Q_\Adversary^{\TwoP}\) and the token is in vertex \(v \in \Vertices\). 
    We describe the definition of \(\Delta_\Adversary^{\TwoP}(q, v)\). 
    \begin{itemize}
        \item  If  \(q = (j, i)\) for \(i \in \{1, \ldots, \WindowLength\}\) and \(j \in \{2, \ldots, k\}\), but \(v \in H_1 \setminus H_j\), then this means that \(\PlayerMain\) must have moved the token out of the subgame \(\GameH_j\).
        \(\PlayerAdversary\) may have been in the process of keeping the window open for \(\WindowLength\) steps in \(\GameH_j\), but the move by \(\PlayerMain\) may have closed the window. 
        The strategy switches to attractor mode, and the state of the Mealy machine changes to \(\Delta_\Adversary^{\TwoP}((1, \WindowLength), v)\). 
        Note that \(\PlayerMain\) can only move the token out of a subgame finitely many times, and once the token is in \(H_1 \setminus H_2\), then \(\PlayerMain\) can no longer move the token out of \(\GameH_1\), and \(\PlayerAdversary\) will be able to keep the window open for \(\WindowLength \) steps without resetting the strategy. Now, for the remaining cases, suppose that if \(q = (j, i)\), then \(v \in H_j\). 
        \item If \(q = (j, \WindowLength)\) for \(j \in \{1, \ldots, k\}\) and \(v \in A \intersection H_j\), then the strategy is in attractor mode. 
        Since \(v \in H_j\), we have that \(\Gamma(v) \ge j\), i.e. \(\GameH_{\Gamma(v)}\) is a subgame of \(\GameH_j\).
        When the game begins, or when \(\PlayerAdversary\) manages to keep the window open for \(\WindowLength\) steps, the first component resets to 1 according to the fifth type of transition in the definition of \(\Delta_\Adversary^{\TwoP}(q, v)\).
        In such cases, it may happen that \(\Gamma(v) > j\). 
        Another possibility is that, at some point, the Mealy machine is in state \((j', i)\) for \(j' \in \{2, \ldots, k\}\) and \(i \in \{1, \ldots, \WindowLength\}\), and the token is in a vertex \(u \in H_{j'}\). 
        If \(u\) belongs to \(\PlayerMain\) and she moves the token from \(u\) to a vertex \(u'\) that is outside \(H_{j'}\), then in the next turn, the first kind of transition of \(\Delta_{\Adversary}^{\TwoP}\) occurs, that is, \(\Delta_{\Adversary}^{\TwoP}((1, \WindowLength), u')\).
        In this case, it is possible that \(\Gamma(u') > 1\). 
        For such scenarios, the first component of the state of the Mealy machine updates to \(\Gamma(v)\) to be an indicator of the smallest subgame that \(v\) belongs to. 
        The second component remains equal to \(\WindowLength\). 
        \item If \(q = (j, \WindowLength)\) for \(j \in \{1, \ldots, k\}\) and \(v \in W \intersection H_j\), then the strategy was in attractor mode, but switches to window-open mode in this step. 
        Again, if the game begins, or if \(\PlayerAdversary\) manages to keep the window open for \(\WindowLength\) steps, or if \(\PlayerMain\) moves the token out of a subgame to a \(W\)-vertex, it may be the case that \(\Gamma(v) > j\).  
        Thus, the first component of the state of the Mealy machine updates to \(\Gamma(v)\) to correctly reflect the smallest subgame that \(\PlayerAdversary\) begins to keep the window open in. 
        The second component decreases by one, since \(\PlayerAdversary\) has begun to keep the window open from \(v\) and only needs to keep the window open for \(\WindowLength -1\) steps after this. 
        \item If \(q = (j, i)\) for \(j \in \{1, \ldots, k\}\) and \(i \in \{2, \ldots, \WindowLength - 1\}\), and \(v \in H_j\), then the strategy is in window-open mode. 
        The state of the Mealy machine changes to \((j, i - 1)\).
        The first component of the state of the Mealy machine keeps track of the smallest subgame in which \(\PlayerAdversary\) began to keep the window open in order to decide the optimal successor vertex \(D_i^j(\Vertex)\), so it remains unchanged.
        The second component decreases to \(i-1\), and we now describe why.
        If \(v\) belongs to \(\PlayerAdversary\), and since \(H_j\) is a trap for \(\PlayerAdversary\), the successor \(v'\) of \(v\) chosen by \(\PlayerAdversary\) also belongs to \(H_j\), and \(\PlayerAdversary\) chooses the successor that lets him keep the window open for \(i-1\) more steps after \(v'\).
        On the other hand, if \(v\) belongs to \(\PlayerMain\), then the successor \(v'\) of \(v\) chosen by \(\PlayerMain\) may or may not belong to \(H_j\). 
        If~\(v'\) belongs to \(H_j\), then \(\PlayerAdversary\) can still keep the window open for \(i-1\) more steps after~\(v'\), since \(\PlayerAdversary\) has been playing optimally for the objective \(\overline{\GW(\WindowLength)}\) restricted to the subgame \(\GameH_j\).
        However, if \(\PlayerMain\) moves the token out of \(\GameH_j\), then the window may have closed, but we let the state of the Mealy machine change to \((j, i -1)\) regardless.
        In the next turn, the Mealy machine will be in state \((j, i -1)\), but \(v'\) will not belong to \(H_j\). Thus, by the first kind of transition of \(\Delta_{\Adversary}^{\TwoP}\), after the next step, the state of the Mealy machine will change to \(\Delta_{\Adversary}^{\TwoP}((1, \WindowLength), v')\). 
        The Mealy machine will behave as if its state had changed to \((1, \WindowLength)\) (instead of \((j, i -1)\)) after reading \(v\). 
        \item If \(q = (j, 1)\) for \(j \in \{1, \ldots, k\}\) and \(v \in H_j\), then the strategy is in window-open mode. 
        After this step, \(\PlayerAdversary\) has successfully kept the window open for \(\WindowLength\) steps, and the strategy switches to attractor mode. 
        The second component resets to \(\WindowLength\) to indicate that the strategy switches to attractor mode. 
        The first component of the state of the Mealy machine resets to 1, the way it was at the beginning of the game. 
        Since \(\Gamma(v) \ge 1\), the first component will correctly change to \(\Gamma(v)\) in the next step by the second or third type of transition in the definition of \(\Delta_\Adversary^{\TwoP}(q, v)\).
    \end{itemize}
    \noindent %FIXED indent
    The output function \(\delta_\Adversary^{\TwoP} \colon Q_\Adversary^{\TwoP} \times \Vertices \to \Vertices \union \{\epsilon \}\) is defined as follows: 
    \[
        \delta_\Adversary^{\TwoP}(q, v) = 
        \begin{cases}
            \epsilon        & q \in Q_\Adversary^{\TwoP}, v \in \VerticesMain\\
            \delta_\Adversary^{\TwoP}((1, \WindowLength), v)  & q = (j, i), v \in \VerticesAdversary \intersection H_1 \setminus H_j, \text{for all } i \in \{1, \ldots , \WindowLength\}, j \in \{2, \ldots, k\}\\
            A^{\Gamma(v)}(v) & q = (j, \WindowLength), v \in A \intersection \VerticesAdversary \intersection H_j, \text{for all } j \in \{1, \ldots , k\}, \\
            D^{\Gamma(v)}_\WindowLength(v) & q = (j, \WindowLength), v \in W \intersection \VerticesAdversary \intersection H_j, \text{for all } j \in \{1, \ldots , k\}, \\
            D_i^{j}(v)  & q = (j, i), v \in \VerticesAdversary \intersection H_j,  \text{for all } i \in \{1, \ldots, \WindowLength - 1\}, j \in \{1, \ldots, k\}\\
        \end{cases}
    \]
    Suppose the Mealy machine is in state \(q \in Q_\Adversary^{\TwoP}\) and the token is in vertex \(v \in \Vertices\). 
    We describe the definition of \(\delta_\Adversary^{\TwoP}(q, v)\). 
    \begin{itemize}
        \item If \(v \in \VerticesMain\), then the Mealy machine outputs \(\epsilon\) since a strategy of \(\PlayerAdversary\) is not defined for prefixes ending with a \(\PlayerMain\) vertex. 
        \item  If \(q = (j, i)\) for some \(i \in \{1, \ldots, \WindowLength\}\) and \(j \in \{2, \ldots, k\}\), but \(v \in H_1 \setminus H_j\), then as described in the definition of the transition function \(\Delta_\Adversary^{\TwoP}(q, v)\), this means that \(\PlayerMain\) must have moved the token out of the subgame \(\GameH_j\) when the last vertex before \(v\) was read.
        The Mealy machine behaves as if it was in state \((1, \WindowLength)\), and outputs \(\delta_\Adversary^{\TwoP}((1, \WindowLength), v)\). 
        \item If  \(q = (j, \WindowLength)\) and \(v \in A\intersection \VerticesAdversary \intersection H_j\) for \(j \in \{1, \ldots, k\}\), then the strategy is in attractor mode. \(\PlayerAdversary\) must move the token by following the attractor strategy to reach \(W_{\Gamma(v)}\). The vertex given by the attractor strategy is \(A^{\Gamma(v)}(v)\).
        \item If  \(q = (j, \WindowLength)\) and \(v \in W\intersection \VerticesAdversary \intersection H_j\) for \(j \in \{1, \ldots, k\}\), then the strategy switches to window-open mode. \(\PlayerAdversary\) has not started to keep to window open, but can now begin to keep the window open for \(\WindowLength\) steps. The best vertex to choose for this is given by the successor \(D_\WindowLength^{\Gamma(v)}(v)\). 
        \item Finally, if \(q = (j, i)\) and \(v \in \VerticesAdversary \intersection H_j\) for \(i \in \{1, \ldots, \WindowLength - 1\}\) and \(j \in \{1, \ldots, k\}\), then the strategy is in window-open mode. \(\PlayerAdversary\) has kept the window open for \(\WindowLength -i\) steps already, with the first vertex in the window from \(H_j \setminus H_{j+1}\). He must keep the window open for \(i\) more steps, and hence chooses \(D_i^j(v)\) as the successor vertex. 
    \end{itemize}
    \noindent %FIXED indent
    This concludes the construction of a Mealy machine defining a winning strategy of \(\PlayerAdversary\).
    \qed
\end{construction}

\begin{rem}
    The definitions \(\Delta_{\Adversary}^{\TwoP}(q,v)\) and \(\delta_{\Adversary}^{\TwoP}(q,v)\) are recursive definitions when \(q = (j,i)\) and \(v \in H_1 \setminus H_j\). 
    Recall that when this is the case, we have \(\Delta_{\Adversary}^{\TwoP}(q, v) = \Delta_{\Adversary}^{\TwoP}((1, \WindowLength), v)\)
    and \(\delta_{\Adversary}^{\TwoP}(q, v) = \delta_{\Adversary}^{\TwoP}((1, \WindowLength), v)\). 
    The output and the transition of the Mealy machine \(M_{\Adversary}^{\TwoP}\) from the state \((j, i)\) are as if the Mealy machine is actually in state \((1, \WindowLength)\). 
    If the Mealy machine is in state \((1, \WindowLength)\), then since \(j = 1\), it is not possible for \(v\) to belong to \(H_1 \setminus H_1\) because it is an empty set.
    Thus, when \(j= 1\), the transition and output functions do not make recursive calls to themselves. 
    Hence, the depth of recursion in both \(\Delta_{\Adversary}^{\TwoP}\) and \(\delta_{\Adversary}^{\TwoP}\) is never greater than~1.
\end{rem}

\begin{rem}%
\label{rem:memoryPlayer2}
    Note that the memory required for \(\PlayerAdversary\) to play optimally is \(k \cdot \WindowLength\), where \(k\) is the number of recursive calls to the {\sf GoodWin} algorithm.
    This gives a tighter bound than that claimed in~\cite{CDRR15} since $k \le |V|$.
\end{rem}

\begin{rem}
    Every play \(\Play\) that is consistent with the strategy constructed in Construction~\ref{con:twop_mealy_machine_adversary} has infinitely many open windows of length \(\WindowLength\), and therefore satisfies the \(\FWMPLBar\) objective.
    If \(\PlayerAdversary\) follows this strategy, then the outcome is always winning for him, irrespective of how \(\PlayerMain\)'s choices are made (whether deterministic or randomized).
    The proof of Theorem~\ref{thm:memoryPlayer2} thus continues to hold even if the strategy of \(\PlayerMain\) is not deterministic. 
    Since the strategy constructed in Construction~\ref{con:twop_mealy_machine_adversary} is a deterministic strategy,
    we have that deterministic strategies suffice for the \(\FWMPLBar\) objective for \(\PlayerAdversary\), and memory of size \(\abs{\Vertices} \cdot \WindowLength\) suffices.
\end{rem}

\paragraph*{Lower bound on memory requirement for Player 2.}
In~\cite{CDRR15}, it was shown that memoryless strategies do not suffice for \(\PlayerAdversary\). 
We improve upon this lower bound.
Given a window length \(\WindowLength \ge 2\), for every $k \ge 1$, we construct a graph \(\{\Game_{k,\WindowLength}\}\) with $2k+\WindowLength-1$ vertices such that every winning strategy of \(\PlayerAdversary\) in \(\{\Game_{k,\WindowLength}\}\) requires at least $k+1$ memory.

\begin{thm}%
\label{thm:player-two-memory-lower-bound-g-k-l}
    There exists a family of non-stochastic games \(\{\Game_{k,\WindowLength}\}_{k \ge 1, \WindowLength \ge 2}\) with objective \(\FWMPL\) for \(\PlayerMain\) and edge weights \(-1,0,+1\) such that every winning strategy of \(\PlayerAdversary\) requires at least \(\frac{1}{2}(\abs{\Vertices} - \WindowLength + 1) + 1\) memory, where $|V| = 2k+\WindowLength -1$.
\end{thm}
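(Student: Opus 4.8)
The plan is to build $\Game_{k,\WindowLength}$ so that its structure mirrors the recursion of Algorithm~\ref{alg:direct_wmp} underlying the upper bound of Theorem~\ref{thm:memoryPlayer2}: a strictly decreasing chain of subgames $\GameH_1 \supsetneq \GameH_2 \supsetneq \cdots \supsetneq \GameH_k$, where at level $j$ exactly two vertices are peeled off (one owned by each player, acting as a router together with a connector), and all levels share a common ``tail'' path of $\WindowLength-1$ vertices along which a single window can be forced to stay open for $\WindowLength$ steps. This accounts for exactly $2k+(\WindowLength-1)$ vertices, and since opening and holding a window need only weights $-1$ and $0$ while closing needs $+1$, the edge weights stay in $\{-1,0,+1\}$. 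The essential design requirement is a single $\PlayerAdversary$-owned decision vertex $c$ that belongs to every subgame $\GameH_j$ and whose correct move — the successor that prevents the current open window from closing within the remaining steps — is a different vertex at each of the $k$ levels, with one further ``attractor-mode'' context, so that $c$ must be answered in $k+1$ mutually incompatible ways.

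First I would verify that $\PlayerAdversary$ wins $\FWMPLBar$ from the designated initial vertex. Here I can directly invoke the specialized strategy of Construction~\ref{con:twop_mealy_machine_adversary}: in the innermost subgame $\GameH_k$, $\PlayerAdversary$ keeps a window open for $\WindowLength$ steps along the shared tail, and whenever $\PlayerMain$ drives the token to a coarser level $\GameH_j$ (with $j$ smaller than the current one), $\PlayerAdversary$ re-attracts via $\Attr{\Adversary}{\cdot}$ to the level-$j$ window-opening region of $\GameH_j$ and reopens a window there. Each completed hold yields one position whose window of length $\WindowLength$ never closes; since $\PlayerMain$ can delay but not prevent this from recurring, the outcome contains infinitely many fully open windows of length $\WindowLength$ and hence lies in $\FWMPLBar$, so $\PlayerAdversary$ is winning.

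The crux is the memory lower bound. I would fix an arbitrary winning strategy $\Strategy[\Adversary]$ of $\PlayerAdversary$, realized by a Mealy machine $M$, and exhibit $k+1$ finite prefixes $\rho_1,\dots,\rho_{k+1}$, each an outcome-prefix consistent with $\Strategy[\Adversary]$ and each ending at the shared decision vertex $c$, but reaching $c$ in the $k+1$ distinct contexts (levels $1,\dots,k$ and attractor mode). These prefixes are produced by letting $\PlayerMain$ steer the token to the desired level before ceding control at $c$; because the level-descending moves all belong to $\PlayerMain$, each prefix exists regardless of $\Strategy[\Adversary]$. I would then arrange the gadget so that from each of the $k+1$ contexts exactly one successor of $c$ is compatible with a winning continuation for $\PlayerAdversary$, and these $k+1$ successors are pairwise distinct. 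Since $\Strategy[\Adversary]$ is winning it must realize this (injective) context-to-move map; but by determinism of $M$, two prefixes ending in the same state would force the same output at $c$. Concretely, if $M$ were in the same state after $\rho_i$ and $\rho_j$ with $i\neq j$, the common response would be the wrong successor in at least one of the two contexts, letting $\PlayerMain$ close that window within $\WindowLength$ steps and, by the same response recurring, close every subsequent window from that point on, so the outcome would satisfy $\FWMPL$ — contradicting that $\Strategy[\Adversary]$ is winning. Hence the states of $M$ after $\rho_1,\dots,\rho_{k+1}$ are pairwise distinct, $M$ has at least $k+1$ states, and $\PlayerAdversary$ needs at least $k+1=\frac{1}{2}(\abs{\Vertices}-\WindowLength+1)+1$ memory.

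I expect the third step to be the main obstacle, and specifically two coupled requirements on the gadget: that the single vertex $c$ genuinely needs $k+1$ distinct responses (so the naive strategy that depends only on the current vertex and the number $j\le\WindowLength$ of remaining window steps provably fails, as claimed in the introduction), and that each of these responses is \emph{forced} by the winning condition against every $\PlayerMain$ strategy, not merely optimal along one play. Keeping both true while fixing the weights in $\{-1,0,+1\}$ and the count at exactly $2k+\WindowLength-1$ is delicate: the tail must be just long enough ($\WindowLength-1$) to admit a window open for $\WindowLength$ steps, yet the level gadgets must interlock so that a wrong-level successor at $c$ hands $\PlayerMain$ a concrete window-closing path. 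The technical heart will be proving the implication ``a wrong move at $c$ is globally fatal for $\PlayerAdversary$'' — that an error at $c$ is not merely locally suboptimal but lets $\PlayerMain$ eventually close all windows — since this is exactly what converts the $k+1$ distinct forced responses into a genuine memory lower bound.
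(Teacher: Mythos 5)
Your proposal has two genuine gaps, and the second is structural. First, the theorem is an existence statement, and you never exhibit the family: you list design desiderata for a gadget (a single \(\PlayerAdversary\)-owned decision vertex \(c\) shared by \(k\) nested subgames, answered in \(k+1\) pairwise distinct forced ways) and explicitly defer both the construction and the verification of its key property. Here the construction \emph{is} the proof; without concrete edges and weights nothing can be checked. Second, the property you yourself identify as the technical heart --- that a single wrong move at \(c\) is ``globally fatal'' --- is unattainable, and this breaks your pigeonhole argument. The objective \(\FWMPLBar\) is prefix-independent, and in any game witnessing the theorem every vertex must be winning for \(\PlayerAdversary\) (otherwise ``every winning strategy of \(\PlayerAdversary\)'' is vacuous), so no move ever leaves his winning region; a strategy that answers \(c\) wrongly once, or any finite number of times, and otherwise plays a winning strategy still wins, since finitely many missed window-openings are irrelevant to \(\FWMPLBar\). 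Hence from ``\(M\) is in the same state after \(\rho_i\) and \(\rho_j\), so it gives one wrong response'' you cannot conclude that the outcome satisfies \(\FWMPL\). Your patch, ``by the same response recurring,'' is also unjustified: after one exploitation the machine's state evolves, and at the next visit to \(c\) in the same context it need not be in the same state, so the error need not recur.

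The paper's proof is engineered precisely around this obstacle, and differently from your blueprint. It uses \(k\) two-successor decision vertices \(b_1,\dots,b_k\) owned by \(\PlayerAdversary\) (not one many-successor vertex), and the \(k+1\) contexts are the possible predecessors \(u \in A \cup \{c_1\}\), with the good choice \(\chi(u,b_r)\) depending on \(u\). The lower bound (Lemma~\ref{lem:distinct-states-g-k-l}) is then proved by contraposition with the correct quantification: if for some context \(u\) \emph{every} state reachable upon reading \(u\) errs at \emph{some} successor \(b_r\), then \(\PlayerMain\) steers the token back to \(u\) infinitely often and, at each return, adaptively selects --- whatever the current machine state is --- a \(b_r\) at which that state errs; each such error either avoids opening a window or closes the freshly opened window in one step, so from some point on no open window of length \(\WindowLength\) ever occurs. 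Only this infinitely-recurring, state-adaptive exploitation contradicts winning; the \(k+1\) correctly-answering states are then pairwise distinct because their required response maps conflict (e.g.\ \(\delta(q_{a_i},b_{j-1}) = c_{\WindowLength-1}\) versus \(\delta(q_{a_j},b_{j-1}) = a_{j-1}\) for \(i<j\)). To salvage your route you would have to replace ``one wrong move is fatal'' by this for-all-reachable-states recurrence argument, and then actually build and verify a gadget in which each context forces a whole profile of responses rather than a single fatal choice.
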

\begin{proof}
    Let \(A = \{a_1, \ldots, a_k\}\), \(B = \{b_1, \ldots, b_k\}\), and  \(C = \{c_1, \ldots, c_{\WindowLength - 1}\}\) be pairwise disjoint sets.
    The vertices of \(\Game_{k,\WindowLength}\) are \(A \union B \union C\) with \(\VerticesMain = A \union C\) and \(\VerticesAdversary = B\).
    Now we list the edges in \(\Game_{k,\WindowLength}\):
    \begin{enumerate}
        \item For all \(p \in \{1, \ldots, k\}\) and \(r \in \{1, \ldots, k\}\) such that \(p \le r\), we have an edge \((a_p, b_r)\) with payoff \(-1\).
        \item For all \(p \in \{2, \ldots, k\}\), we have an edge \((a_p, a_{p-1})\) with payoff \(+1\). 
        \item For all \(p \in \{2, \ldots, k\}\), we have an edge \((a_p, b_{p-1})\) with payoff \(+1\). 
        \item For all \(p \in \{1, \ldots, k\}\), we have an edge \((b_p, c_{\WindowLength -1})\) with payoff \(0\). 
        \item For all \(p \in \{1, \ldots, k\}\), we have an edge \((b_p, a_{p})\) with payoff \(+1\). 
        \item For all \(p \in \{2, \ldots, \WindowLength -1\}\), we have an edge \((c_p, c_{p-1})\) with payoff \(0\). 
        \item We have edges \((c_1, a_k)\) and \((c_1, b_k)\) with payoff \(+1\) each.
    \end{enumerate}
    Figure~\ref{fig:family-of-graphs-player-adversary-memory-lower-bound} shows the game \(\Game_{4, 3}\).
    
    \begin{figure}[t]
        \centering
        \scalebox{0.8}{
            \begin{tikzpicture}[
                first/.append style={fill=blue!20, xshift=5mm},
                second/.append style={fill=red!20, yshift=-2mm},
                third/.append style={fill=green!20, xshift=5mm},
                bigpositive/.append style={blue, thick},
                bignegative/.append style={red, thick},
                smallpositive/.append style={black, thick},
            ]
                \node[state, first] (a1) {\(a_1\)};
                \node[square, draw, second, below of=a1] (b1) {\(b_1\)};
                \node[state, first, right of=a1] (a2) {\(a_2\)};
                \node[square, draw, second, below of=a2] (b2) {\(b_2\)};
                \node[state, first, right of=a2] (a3) {\(a_3\)};
                \node[square, draw, second, below of=a3] (b3) {\(b_3\)};
                \node[state, first, right of=a3] (a4) {\(a_4\)};
                \node[square, draw, second, below of=a4] (b4) {\(b_4\)};
                \node[first, fill=none, right of=a4] (a5) {};
                \node[second, fill=none, below of=a5] (b5) {};
                \node[state, third, right of=a4] (c1) {\(c_1\)};
                \node[state, third, right of=c1] (c2) {\(c_2\)};
                \draw 
                      (a1) edge[bignegative]  (b2)
                      (a1) edge[bignegative]  (b3)
                      (a1) edge[bignegative]  (b4)
                    
                      (a2) edge[bignegative]  (b3)
                      (a2) edge[bignegative]  (b4)
                      
                      (a3) edge[bignegative]  (b4)

                      (a2) edge[bigpositive] (a1)
                      (a3) edge[bigpositive] (a2)
                      (a4) edge[bigpositive] (a3)
                      
                      (a2) edge[bigpositive] (b1)
                      (a3) edge[bigpositive] (b2)
                      (a4) edge[bigpositive] (b3)
                      
                      (c2) edge[smallpositive] (c1)
        
                      (c1) edge[bigpositive] (a4)
                      (c1) edge[bigpositive] (b4)
                      
                      (b1) edge[smallpositive, bend right=48] (c2)
                      (b2) edge[smallpositive, bend right=44] (c2)
                      (b3) edge[smallpositive, bend right=38] (c2)
                      (b4) edge[smallpositive, bend right=30] (c2)
                    ;
                \draw[transform canvas={xshift=-1mm}]
                      (b1) edge[bigpositive] (a1)
                      (b2) edge[bigpositive] (a2)
                      (b3) edge[bigpositive] (a3)
                      (b4) edge[bigpositive] (a4)
                      ;
                \draw[transform canvas={xshift=1mm}]
                      (a1) edge[bignegative]  (b1)
                      (a2) edge[bignegative]  (b2)
                      (a3) edge[bignegative]  (b3)
                      (a4) edge[bignegative]  (b4)
                      ;
            \end{tikzpicture}
        }
        \vspace*{-8mm}
        \caption{The game \(\Game_{4,3}\) with parameter \(k = 4\) and window length \(\WindowLength = 3\). 
        Red edges have payoff~\(-1\), black edges have payoff~\(0\), and blue edges have payoff~\(+1\). 
        Memory of size at least \(k = 4\) is needed to define a winning strategy for the \(\FWMPLBar\) objective for \(\PlayerAdversary\) in this game.}
        \label{fig:family-of-graphs-player-adversary-memory-lower-bound}
    \end{figure}
    
    Observe that the only open windows of length \(\WindowLength\) in the game \(\Game_{k,\WindowLength}\) are sequences of the form \(a_p b_{r} c_{\WindowLength -1} \cdots c_1\) for all \(p \le r\). 
    Also note that \(\PlayerAdversary\) has a winning strategy that wins starting from every vertex in the game:
    \begin{itemize}
        \item 
        If the token is in \(C\), then it eventually reaches \(c_1\). 
        From \(c_1\), the token can move to either~\(a_k\) or \(b_k\). In the latter case, \(\PlayerAdversary\) moves the token from \(b_k\) to \(a_k\). 
        In both cases, the token reaches \(A\).
        \item 
        If the token is in \(A\), then it cannot remain in \(A\) forever; it must eventually move to~\(B\). 
        Moreover, \(\PlayerAdversary\) can ensure that the token eventually moves from \(A\) to \(B\) along an edge with negative payoff. 
        Suppose whenever the token moves from \(A\) to \(B\), it moves an edge with positive payoff, i.e., from \(a_p \in A\) to \(b_{p-1} \in B\) for some \(p \in \{2, \ldots, k\}\).
        Then, \(\PlayerAdversary\) moves the token from \(b_{p-1}\) back to \(a_{p-1} \in A\).
        The token then eventually reaches~\(a_1\) from which all out-edges have negative payoff, and must necessarily move to \(B\) along an edge with negative payoff.
        \item  
        Eventually, the token moves from \(a_p\) to \(b_r\) for some \(p \le r\). 
        In this case, \(\PlayerAdversary\) moves the token from \(b_r\) to \(c_{\WindowLength - 1}\) and eventually to \(c_1\).
        The edge \((a_p, b_r)\) has a negative payoff and the \(\WindowLength -1\)  edges on the path from \(b_r\) to \(c_1\) have payoff \(0\) each.
        Hence, the window starting at \(a_p\) remains open for \(\WindowLength\) steps. 
        \item 
        Now, the token is on \(c_1\) again and \(\PlayerAdversary\) can eventually keep the window open for \(\WindowLength\) steps again.
    \end{itemize}
    \noindent %FIXED indent
    In this manner, \(\PlayerAdversary\) can ensure that open windows of length \(\WindowLength\) occur infinitely often in the play.

    \textit{Good choices.}
    When the token reaches a vertex \(b_r \in B\), \(\PlayerAdversary\) can either move the token to \(a_r \in A\) or to \(c_{\WindowLength - 1}\in C\). 
    Depending on which vertex the token was on before reaching \(b_r\), one of the two choices is \emph{good} for \(\PlayerAdversary\). 
    If the token reaches \(b_r\) from the left or above, i.e., from \(a_p\) for \(p \le r\), then the edge \((a_p, b_r)\) has negative payoff. 
    In this case, it is good for \(\PlayerAdversary\) to move the token to \(c_{\WindowLength - 1} \in C\) so that the window starting at \(a_p\) remains open for~\(\WindowLength\) steps. 
    Otherwise, if the token reaches \(b_r\) from the right, i.e., from \(a_{r+1}\), then it is good for \(\PlayerAdversary\) to move the token to \(a_r\) so that an edge with negative payoff may eventually be taken. 
    
    For all \(u \in A \union \{c_1\}\), for all \(b_r \in B\) such that \((u, b_r)\) is an edge in \(\Game_{k,\WindowLength}\), we denote by \(\chi(u, b_r)\) the vertex \(a_r\) or \(c_{\WindowLength -1}\) that is good for \(\PlayerAdversary\). 
    We list the good choices in the game \(\Game_{4,3}\) in Table~\ref{tab:necessary-transitions-g-4-3}. 
    The columns are indexed by \(u \in A \union \{c_1\}\) and the rows are indexed by \(b_r \in B\).
    If the edge \((u, b_r)\) does not exist in the game, then the cell corresponding to this edge is left empty in the table. 
    
    \begin{table}[t]
        \centering
        \begin{tabular}{|l|l|l|l|l|}
            \hline
            \(a_1 b_1 \to c_{2}\) &   \(a_2 b_1 \to a_1\) &                   &                   &                   \\
            \(a_1 b_2 \to c_{2}\) &   \(a_2 b_2 \to c_{2}\) & \(a_3 b_2 \to a_2\) &                   &                   \\
            \(a_1 b_3 \to c_{2}\) &   \(a_2 b_3 \to c_{2}\) & \(a_3 b_3 \to c_{2}\) & \(a_4 b_3 \to a_3\) &                   \\
            \(a_1 b_4 \to c_{2}\) &   \(a_2 b_4 \to c_{2}\) & \(a_3 b_4 \to c_{2}\) & \(a_4 b_4 \to c_{2}\) & \(c_1 b_4 \to a_4\) \\
            \hline
        \end{tabular}
        \caption{Good choices \(\chi(u, b_r)\) for all \(u \in A \union \{c_1\}\) and \(b_r \in B\) in the game \(\Game_{4, 3}\).}
        \label{tab:necessary-transitions-g-4-3}
    \end{table}
    
    In Lemma~\ref{lem:distinct-states-g-k-l}, we show that for each column in the table, there exists a distinct memory state in every Mealy machine defining a winning strategy of \(\PlayerAdversary\).
    This gives a lower bound of \(k+1\) on the number of states of such a Mealy machine.
    Since \(\Game_{k, \WindowLength}\) has \(2k + \WindowLength - 1\) vertices, the memory requirement of a winning strategy of \(\PlayerAdversary\) is at least \(\frac{1}{2}(\abs{\Vertices} - \WindowLength+1)+1\).
    This concludes the proof of Theorem~\ref{thm:player-two-memory-lower-bound-g-k-l}.
\end{proof}

\paragraph*{Constructing Mealy machines.}
We show that every winning strategy of \(\PlayerAdversary\) in \(\Game_{k,\WindowLength}\) requires at least \(k + 1\) memory. 
Let \(\Strategy[\Adversary]^{\TwoP}\) be a winning strategy of \(\PlayerAdversary\) in \(\Game_{k,\WindowLength}\), and let \(M_{\Adversary}^{\TwoP} = (Q_{\Adversary}^{\TwoP}, q_0, \Vertices, \Vertices \union \{\epsilon\}, \Delta_{\Adversary}^{\TwoP}, \delta_{\Adversary}^{\TwoP})\) be a Mealy machine defining \(\Strategy[\Adversary]^{\TwoP}\). 

For all \(u \in A \union \{c_1\}\), let \(Q_{u}\) denote the set of all states that \(M_{\Adversary}^{\TwoP}\) could be in after reading a prefix ending in \(u\), 
i.e., \(Q_{u} = \{ q \in Q_{\Adversary}^{\TwoP} \suchthat \exists \Prefix \in \PrefixSet{} : \hat{\Delta}_{\Adversary}^{\TwoP}(q_0, \Prefix \cdot u) = q \} \). 
Lemma~\ref{lem:distinct-states-g-k-l} gives a lower bound on the number of states in \(M_{\Adversary}^{\TwoP}\).

\begin{lem}
\label{lem:distinct-states-g-k-l}
    Let \(\Strategy[\Adversary]^{\TwoP}\) be a winning strategy for \(\PlayerAdversary\) for the \(\FWMPLBar\) objective, and let \(M_{\Adversary}^{\TwoP}\) be a Mealy machine defining \(\Strategy[\Adversary]^{\TwoP}\).
    Then, for all vertices \(u \in A \union \{c_1\}\), there exists a state \(q_u \in Q_{u}\) such that for all \(b_r \in B \intersection \Edges(u)\), we have that \(\delta_{\Adversary}^{\TwoP}(q_u, b_r) = \chi(u, b_r)\). 
    Moreover, the Mealy machine \(M_{\Adversary}^{\TwoP}\) has \(k + 1\) distinct states.
\end{lem}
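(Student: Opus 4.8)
The set $A \union \{c_1\}$ has exactly $k+1$ elements, so it suffices to produce, for each $u \in A \union \{c_1\}$, a state $q_u \in Q_u$ making all the good choices, and then to check that the $k+1$ states so obtained are pairwise distinct. I would first recall, from the proof of Theorem~\ref{thm:player-two-memory-lower-bound-g-k-l}, that the only open windows of length $\WindowLength$ in $\Game_{k,\WindowLength}$ are the infixes $a_p\,b_r\,c_{\WindowLength-1}\cdots c_1$ with $p \le r$; hence $\Strategy[\Adversary]^{\TwoP}$ is winning iff every consistent play contains infinitely many such infixes. This characterisation drives both halves of the argument.

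\emph{Existence of $q_u$.} I would argue by contradiction. Suppose that for some $u$ every state $q \in Q_u$ has a \emph{bad} successor, i.e.\ a $b_r \in B \intersection \Edges(u)$ with $\delta_{\Adversary}^{\TwoP}(q, b_r) \ne \chi(u, b_r)$. Since $\PlayerMain$ can simulate the finite machine $M_{\Adversary}^{\TwoP}$, she knows the current state $q$ whenever the token is at $u$; her spoiling strategy is to move to such a bad successor $b_{r(q)}$ and then to steer the token back to $u$, repeating forever. The heart of the proof is that each such excursion completes \emph{no} open window of length $\WindowLength$. If $b_{r(q)}$ is a negative successor (so $u = a_p$ and $r(q) \ge p$), the bad answer is $a_{r(q)}$, and the $+1$-edge $(b_{r(q)}, a_{r(q)})$ closes the freshly opened window after two steps; $\PlayerMain$ then descends $a_{r(q)} \to \cdots \to a_p = u$ along $+1$-edges. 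If $b_{r(q)}$ is the positive successor (or $u = c_1$), the bad answer is $c_{\WindowLength-1}$, the window never opens at all, and $\PlayerMain$ returns to $u$ via $c_1 \to a_k$ and a chain of $+1$-edges. In either case the unique negative edge of the excursion is immediately followed by a $+1$-edge, so no infix of length $\WindowLength \ge 2$ stays open. Thus, from the first visit to $u$ on, the play has no open window of length $\WindowLength$, contradicting that $\Strategy[\Adversary]^{\TwoP}$ is winning. (Reaching $u$ is harmless: $\FWMPLBar$ is prefix-independent, and a winning $\PlayerAdversary$ must drive the token into $C$ and hence through $a_k$ infinitely often, so every $u \in A \union \{c_1\}$ is reachable.) Therefore some $q_u \in Q_u$ makes every good choice.

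\emph{Distinctness.} With the $q_u$ in hand, I would separate any two of them by a common successor on which their good choices disagree. For $u = a_p$ and $u' = a_{p'}$ with $p < p'$, the vertex $b_{p'-1}$ is a successor of both: $(a_p, b_{p'-1})$ carries $-1$ so $\chi(a_p, b_{p'-1}) = c_{\WindowLength-1}$, while $(a_{p'}, b_{p'-1})$ carries $+1$ so $\chi(a_{p'}, b_{p'-1}) = a_{p'-1}$; hence $\delta_{\Adversary}^{\TwoP}(q_{a_p}, b_{p'-1}) \ne \delta_{\Adversary}^{\TwoP}(q_{a_{p'}}, b_{p'-1})$ and $q_{a_p} \ne q_{a_{p'}}$. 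For $u = a_p$ and $u' = c_1$ the witness is $b_k$, with $\chi(a_p, b_k) = c_{\WindowLength-1}$ (payoff $-1$) but $\chi(c_1, b_k) = a_k$ (payoff $+1$), so $q_{a_p} \ne q_{c_1}$. The $k+1$ states are thus pairwise distinct, and $M_{\Adversary}^{\TwoP}$ has at least $k+1$ states.

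The main obstacle is the existence argument, and within it the verification that the probe-and-return excursions never complete an open window: this is exactly where the $-1/0/+1$ weighting is used (a bad answer always lets $\PlayerMain$ close the window within two steps or keeps it from opening), together with the fact that $\PlayerMain$ can always route back to $u$ using only the downward $+1$-edges of $A$ and the reset edge $c_1 \to a_k$. Once this is established, distinctness is a direct reading of the good-choice table.
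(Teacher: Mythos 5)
Your proposal is correct and takes essentially the same route as the paper's proof: a contradiction (contrapositive) argument in which \(\PlayerMain\) tracks the state of \(M_{\Adversary}^{\TwoP}\), probes a bad successor \(b_{r(q)}\) at each visit to \(u\), and returns to \(u\) while ensuring every window closes within two steps (or never opens), followed by the identical pairwise output-disagreement argument for distinctness using the separating inputs \(b_{p'-1}\) and \(b_k\). Your parenthetical on reachability is phrased loosely (``a winning \(\PlayerAdversary\) must drive the token into \(C\)'' presupposes the very thing being refuted), but it encodes exactly the dichotomy the paper spells out for \(u = c_1\): either \(\PlayerAdversary\) eventually moves the token into \(C\), or no window of length \(\WindowLength\) ever stays open and \(\PlayerMain\) wins outright.
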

\begin{proof}
    We prove the contrapositive.
    Suppose there exists a vertex \(u \in A \union \{c_1\}\) such that for all \(q_u \in Q_{u}\), there exists a vertex \(b_r \in B \intersection \Edges(u)\) such that \(\delta_{\Adversary}^{\TwoP}(q_u, b_r) \ne \chi(u, b_r)\).
    Then, we show that \(\Strategy[\Adversary]^{\TwoP}\) is not winning for \(\PlayerAdversary\).
    We show this by constructing a strategy \(\Strategy[\Main]\) of \(\PlayerMain\) such that the outcome \((\Strategy[\Main], \Strategy[\Adversary]^{\TwoP})\) satisfies \(\FWMPL\) and is thus losing for \(\PlayerAdversary\).
    We have that either \(u \in A\) or \(u = c_1\). 
    \begin{itemize}
        \item  Suppose \(u = c_1\). \\
        Irrespective of which vertex the game begins from, the strategy~\(\Strategy[\Main]\) tries to eventually move the token to \(c_1\). 
        If the token never reaches \(c_1\), then this implies that the token never reaches \(c_{\WindowLength-1}\),  and therefore, this implies that every time the token reaches \(B\), \(\PlayerMain\) moves it to \(A\) and not \(C\).
        Thus, no windows remain open for \(\WindowLength\) steps.
        Otherwise, the token eventually reaches \(c_1\) after having seen at most one open window of length \(\WindowLength\). 
        After the token reaches \(c_1\) for the first time,
        we show that under the assumption that for all \(q_{c_1} \in Q_{c_1}\), there exists a vertex \(b_r \in B \intersection \Edges(c_1)\) such that \(\delta_{\Adversary}^{\TwoP}(q_{c_1}, b_r) \ne \chi(c_1, b_r)\), that subsequently there are no more open windows of size \(\WindowLength\) in the outcome. 
        
        To see this, observe that \(B \intersection E(c_1) = \{b_k\}\) and \(\chi(c_1, b_k) = a_{k}\).
        Since for all \(q_{c_1} \in Q_{c_1}\) we have that \(\delta_{\Adversary}^{\TwoP}(q_{c_1}, b_k) \ne \chi(c_1, b_k)\), we have that \(\delta_{\Adversary}^{\TwoP}(q_{c_1}, b_k) = c_{\WindowLength - 1}\).
        That is, each time the token reaches \(b_k\) from \(c_1\), the strategy \(\Strategy[\Adversary]^{\TwoP}\) moves the token to \(c_{\WindowLength -1}\).
        Hence, the token is stuck in the cycle \((b_{k} c_{\WindowLength-1} \cdots c_1)\) where every edge has nonnegative payoff, and no windows open. 
        
        \item Otherwise, suppose \(u = a_j \in A\) for some \(j \in \{1, \ldots, k\}\).\\
        Irrespective of which vertex the game begins from, the strategy~\(\Strategy[\Main]\) eventually moves the token to \(a_j\) encountering at most one open window of length \(\WindowLength\).
        Under the assumption in the contrapositive statement, we have that each time the token reaches \(a_j\), there exists a successor \(b_r\) of \(a_j\) such that \(\Strategy[\Adversary]^{\TwoP}\) does not play according to \(\chi(u, b_r)\) from \(b_r\).
        Specifically, at least one of the following holds:
        \begin{itemize}
            \item If \(\Strategy[\Main]\) moves the token from \(a_j\) to \(b_{j-1}\), then \(\Strategy[\Adversary]^{\TwoP}\) moves the token from \(b_{j-1}\) to~\(c_{\WindowLength - 1}\). 
            \item There exists \(r \ge j\) such that if \(\Strategy[\Main]\) moves the token from \(a_j\) to \(b_{r}\), then \(\Strategy[\Adversary]^{\TwoP}\) moves the token from \(b_r\) to \(a_r\).
        \end{itemize}
        \noindent %FIXED indent
        In particular, if \(j = 1\), then the first statement does not hold since \(b_{0}\) is not defined. Hence, in the case of \(j = 1\), the second statement holds.
        
        To see this, suppose that when the token reaches \(a_j\), the state of the Mealy machine~\(M_{\Adversary}^{\TwoP}\) becomes~\(q'\). 
        When the token moves from \(a_j\) to a successor of \(a_j\), suppose that the state of~\(M_{\Adversary}^{\TwoP}\) becomes~\(q \in Q_{a_{j}}\), i.e., we have that \(\delta_{\Adversary}^{\TwoP}(q', a_j) = q\). 
        Now, there exists \(b_r \in B \intersection \Edges(a_j)\)  such that \(\delta_{\Adversary}^{\TwoP}(q, b_r) \ne \chi(a_j, b_r) \). 
        Recall that \(B \intersection E(a_j) = \{b_{j-1}, b_j, \ldots, b_{k}\}\), and that \(\chi(a_j, b_{j-1}) = a_{j-1}\), and \(\chi(a_j, b_{r}) = c_{\WindowLength -1}\) for all \(r \ge j\). 
        Therefore, we have that at least one of the following holds: 
        \(\delta_{\Adversary}^{\TwoP}(q, b_{j-1}) \ne a_{j-1}\) or there exists \(r \ge j\) such that \(\delta_{\Adversary}^{\TwoP}(q, b_{r}) \ne c_{\WindowLength - 1}\).
        Equivalently, at least one of the following holds: \(\delta_{\Adversary}^{\TwoP}(q, b_{j-1}) = c_{\WindowLength - 1}\) or there exists \(r \ge j\) such that  \(\delta_{\Adversary}^{\TwoP}(q, b_r) = a_r\). 
       
        In the first case, the window does not open at \(a_j\). 
        When the token reaches \(c_{\WindowLength-1}\), the strategy \(\Strategy[\Main]\) eventually moves the token back to \(a_j\) without opening any new windows.
        In the second case, since the edge \((a_j, b_r)\) has payoff \(-1\), a window opens at \(a_j\). 
        However, since the edge \((b_r, a_r)\) has payoff \(+1\), this window closes in the next step and does not remain open for \(\WindowLength\) steps.
        The strategy \(\Strategy[\Main]\) eventually moves the token back to \(a_j\) along vertices in \(A\).
        
        Therefore, each time the token reaches \(a_j\), the strategy \(\Strategy[\Main]\) moves the token to a successor \(b_r\) of \(a_j\) from which \(\Strategy[\Adversary]^{\TwoP}\) does not play according to \(\chi(a_j, b_r)\). 
        This way there are subsequently no open windows of length \(\WindowLength\) in the outcome.
    \end{itemize}
    \noindent %FIXED indent
    This completes the proof of the contrapositive. We now show that such a Mealy machine \(M_{\Adversary}^{\TwoP}\) has at least \(k + 1\) distinct states.
    For all \(u \in A \union \{c_1\}\), let \(q_{u}\) denote a state in \(Q_{u}\) that plays in accordance with the good choices, i.e., for all \(b_r \in B \intersection \Edges(u)\), we have that \(\delta_{\Adversary}^{\TwoP}(q_{u}, b_{r}) = \chi(u, b_r)\). 
    Then, for all \(i, j\) such that \(0 \le i < j < k\), we have that \(q_{a_i}\) and \(q_{a_j}\) are distinct states
    since \(\delta_{\Adversary}^{\TwoP}(q_{a_i}, b_{j-1}) = c_{\WindowLength - 1}\) but \(\delta_{\Adversary}^{\TwoP}(q_{a_j}, b_{j-1}) = a_{j-1}\).
    This gives \(k\) distinct states in \(M_{\Adversary}^{\TwoP}\).
    In addition to this,  
    since \(\delta_{\Adversary}^{\TwoP}(q_{c_1}, b_{k}) = a_k\) but \(\delta_{\Adversary}^{\TwoP}(q_{a_j}, b_{k}) = c_{\WindowLength - 1}\) for all \(j \in \{1, \ldots, k\}\), we have that \(q_{c_1}\) is distinct from each of the \(k\) distinct states found before.
    Thus \(M_{\Adversary}^{\TwoP}\) has at least \(k+1\) distinct states.
\end{proof}

If we allow \(\PlayerAdversary\) to use randomized strategies, then the upper bound on the memory size required for \(\PlayerAdversary\) improves to memoryless strategies. 
\begin{prop}
    A memoryless randomized winning strategy exists for \(\PlayerAdversary\) for the \(\FWMPLBar\) objective.
\end{prop}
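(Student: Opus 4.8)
The plan is to exhibit one fixed memoryless randomized strategy and prove it wins almost surely, which is the natural reading of ``winning'' for \(\PlayerAdversary\) here: it should ensure \(\FWMPLBar\) with probability~\(1\) against every strategy of \(\PlayerMain\). Define the memoryless randomized strategy \(\Strategy[\Adversary]\) in \(\Game_{k,\WindowLength}\) by letting it pick, at every vertex \(b_r \in \VerticesAdversary = B\), each of the two successors \(a_r\) and \(c_{\WindowLength-1}\) with probability \(1/2\). Recall that the only open windows of length \(\WindowLength\) in \(\Game_{k,\WindowLength}\) are the paths \(a_p\, b_r\, c_{\WindowLength-1} \cdots c_1\) with \(p \le r\) (a negative entry edge followed by \(\WindowLength-1\) zero edges), and that \(\PlayerAdversary\) wins \(\FWMPLBar\) exactly when infinitely many such open windows occur. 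So I would reduce the claim to showing that, under \(\Strategy[\Adversary]\), such windows occur infinitely often with probability~\(1\), whatever \(\PlayerMain\) does, and from every vertex (in \(\Game_{k,\WindowLength}\), \(\PlayerAdversary\) wins from every vertex).

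The key step I would establish is a \emph{uniform} lower bound: there exist a constant \(\delta > 0\) and an integer \(N\), depending only on \(k\) and \(\WindowLength\), such that from \emph{every} vertex and \emph{every} history, following \(\Strategy[\Adversary]\) against any strategy of \(\PlayerMain\) produces an open window of length \(\WindowLength\) within the next \(N\) steps with probability at least \(\delta\). I would prove this by a \emph{descent} argument. Consider the event in which \(\PlayerAdversary\) returns the token to \(A\) (plays \(a_r\)) on every entry into \(B\) along a positive-weight edge, and diverts it to \(C\) (plays \(c_{\WindowLength-1}\)) on the first entry along a negative-weight edge. Before any negative edge into \(B\) is taken, the token stays in \(A \cup B\) (it reaches \(C\) only via a \(b_r \to c_{\WindowLength-1}\) move, which in this event happens only at the terminating negative entry), and its \(A\)-index can never increase: from \(a_p\) every move of \(\PlayerMain\) either descends (directly by \(a_p \to a_{p-1}\), or through \(a_p \to b_{p-1} \to a_{p-1}\)) or takes a negative edge, and the index can be raised only by taking a negative edge. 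Hence the index strictly decreases at least once every two steps and, within \(O(k)\) steps, the token reaches \(a_1\), all of whose outgoing edges are negative and lead to \(B\); this forced negative entry, followed by \(\PlayerAdversary\) diverting to \(C\), yields the open window \(a_1\, b_r\, c_{\WindowLength-1} \cdots c_1\). Since the prescribed event only fixes the independent probability-\(1/2\) choices of \(\PlayerAdversary\) at the at most \(k+1\) visits to \(B\) occurring in this finite scenario (preceded, if the token starts in \(C\), by the at most \(\WindowLength\) deterministic steps needed to leave \(C\) towards \(a_k\)), its probability is at least \((1/2)^{k+1}\), which I take as \(\delta\), and it completes within \(N = O(k + \WindowLength)\) steps.

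From the uniform bound, almost-sure winning would follow by the standard argument for prefix-independent objectives. Because the bound holds from every vertex and \(\Strategy[\Adversary]\) is memoryless, for any strategy of \(\PlayerMain\) the conditional probability, given any prefix, of seeing no open window of length \(\WindowLength\) in the next \(N\) steps is at most \(1-\delta\); partitioning the play into consecutive blocks of length \(N\) and applying the conditional (Lévy) Borel--Cantelli lemma then gives infinitely many blocks containing an open window almost surely, so the outcome lies in \(\FWMPLBar\) with probability~\(1\). This makes \(\Strategy[\Adversary]\) a memoryless randomized winning strategy, proving the proposition. The hard part will be the uniform lower bound of the second paragraph, since it must hold against \emph{arbitrary}, possibly history-dependent, strategies of \(\PlayerMain\), who will try to keep the token at a high \(A\)-index; the observation that resolves this is precisely that the index can be raised only by a negative edge, which is itself the entry that lets \(\PlayerAdversary\) close an open window, so \(\PlayerMain\) cannot both avoid the descent and avoid handing \(\PlayerAdversary\) a winning entry.
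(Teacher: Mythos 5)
There is a genuine gap, and it is one of scope rather than of probabilistic technique. The proposition is the randomized counterpart of the general upper bound of Theorem~\ref{thm:memoryPlayer2}: it asserts that in \emph{every} non-stochastic game, from every vertex of \(\PlayerAdversary\)'s winning region, a memoryless randomized strategy wins \(\FWMPLBar\) almost surely. Your proof instead fixes the specific lower-bound family \(\Game_{k,\WindowLength}\) and exploits its particular structure (the descent of the \(A\)-index, the fact that raising the index requires taking a negative edge, the forced negative entry at \(a_1\)). That argument looks sound for \(\Game_{k,\WindowLength}\) itself, but it proves only that randomization defeats the lower-bound example, not the general claim. The paper's proof is game-independent: since the winning region of \(\PlayerAdversary\) is a trap for \(\PlayerMain\) (Remark~\ref{rem:trap-pref-ind}), \(\PlayerAdversary\) simply picks, at each of his vertices, an out-edge uniformly at random among those that keep the token in the trap; with probability~\(1\) an open window of length \(\WindowLength\) eventually occurs, and hence infinitely many occur, so the outcome satisfies \(\FWMPLBar\).

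The good news is that your probabilistic skeleton (a uniform lower bound \(\delta\) over a horizon \(N\), valid from every history, followed by the block/conditional Borel--Cantelli argument) is exactly the right machinery, and it matches the structure the paper uses elsewhere (e.g., in the proofs of Lemma~\ref{lem:fwmp-sas} and Lemma~\ref{lem:bwmp-sas}). To repair the proof, replace your game-specific descent argument by a \emph{mimicking} argument: by Theorem~\ref{thm:memoryPlayer2}, \(\PlayerAdversary\) has a deterministic winning strategy \(\Strategy[\Adversary]^{\TwoP}\) with finite memory \(\Memory\) that never leaves the trap, and by Lemma~\ref{lem:open_windows_guarantee_general} every infix of length \(N = \Memory \cdot \abs{\Vertices} \cdot \WindowLength\) of a play consistent with it contains an open window of length \(\WindowLength\). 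From any history, the uniform memoryless strategy agrees with the (history-dependent) choices of \(\Strategy[\Adversary]^{\TwoP}\) for the next \(N\) steps with probability at least \(\abs{\Vertices}^{-N}\), which gives the uniform \(\delta > 0\) you need for all games simultaneously; your final paragraph then goes through verbatim.
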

\noindent %FIXED indent
A memoryless randomized winning strategy for \(\PlayerAdversary\) for the \(\FWMPLBar\) objective is the following: 
Recall that the winning region of \(\PlayerAdversary\) is a trap for \(\PlayerMain\). 
In each turn, \(\PlayerAdversary\) picks an out-edge uniformly at random out of all out-edges that keep the token in the trap. 
It is always the case that with probability~\(1\), an open window of length \(\WindowLength\) will eventually occur in the play. 
Thus, following this strategy, with probability~\(1\), infinitely many open windows of length \(\WindowLength\) occur in the outcome, resulting in \(\PlayerAdversary\) winning the \(\FWMPLBar\) objective. 

Given a (deterministic) winning strategy \(\Strategy[\Adversary]^{\TwoP}\) of \(\PlayerAdversary\) for the \(\FWMPLBar\) objective, the following lemma gives an upper bound on the number of steps between consecutive open windows of length \(\WindowLength\) in any play consistent with \(\Strategy[\Adversary]^{\TwoP}\).
This lemma is used in 
Section~\ref{sec:reducing_stochastic_window_mean_payoff_games}, where we construct an almost-sure winning strategy of \(\PlayerAdversary\) for the \(\FWMPLBar\) objective.

\begin{lem}%
\label{lem:open_windows_guarantee_general}
    Let \(\Game\) be a non-stochastic game such that all vertices in \(\Game\) are winning for \(\PlayerAdversary\), that is, \(\TwoPlayerWinningRegion[\Game]{\Adversary}{\FWMPLBar} = V\). 
    Let \(\Strategy[\Adversary]^{\TwoP}\) be a finite-memory strategy of \(\PlayerAdversary\) of memory size \(\Memory\) that is winning for \(\FWMPLBar\) from all vertices in \(\Game\).
    Then, for every play \(\Play\) of \(\Game\) consistent with \(\Strategy[\Adversary]^{\TwoP}\), every infix of \(\Play\) of length \(\Memory \cdot \abs{\Vertices} \cdot \WindowLength\) contains an open window of length \(\WindowLength\). 
\end{lem}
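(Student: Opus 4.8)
The plan is to argue the contrapositive and reduce to a pumping argument in the one-player game \(\Game^{\Strategy[\Adversary]^{\TwoP}}\) obtained by fixing \(\PlayerAdversary\)'s strategy, whose vertex set \(\Vertices \times Q\) has exactly \(\abs{\Vertices} \cdot \Memory\) elements (with \(\abs{Q} = \Memory\)). Suppose, for contradiction, that some play \(\Play\) consistent with \(\Strategy[\Adversary]^{\TwoP}\) has an infix \(\Play(s,t)\) of length \(L = \Memory \cdot \abs{\Vertices} \cdot \WindowLength\) that contains \emph{no} open window of length \(\WindowLength\). I would construct from it an infinite play \(\Play'\) consistent with \(\Strategy[\Adversary]^{\TwoP}\) that satisfies \(\FWMPL\), contradicting the assumption that \(\Strategy[\Adversary]^{\TwoP}\) is winning for \(\FWMPLBar\) from every vertex. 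The play \(\Play\) traces a unique path through the product, and \(\Play'\) will be obtained by pumping a suitable cycle of this path.

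First I would build the canonical block decomposition. Since \(\Play(s,t)\) contains no open window of length \(\WindowLength\), the window opened at any position \(i\) with \(s \le i \le t-\WindowLength\) closes within \(\WindowLength\) steps. Set \(k_0 = s\) and let \(k_{j+1}\) be the position at which the window opened at \(k_j\) closes, so that \(k_{j+1}-k_j \le \WindowLength\) and \(\mathsf{TP}(\Play(k_j,k_{j+1})) \ge 0\), continuing as long as \(k_j \le t-\WindowLength\). By the inductive property of windows, for every \(i \in [k_j, k_{j+1})\) the window opened at \(i\) also closes by \(k_{j+1}\), hence within \(\WindowLength\) steps. A short counting step shows \(k_m > t-\WindowLength\), so from \(k_m \le k_0 + m\WindowLength\) I get \(m \ge \Memory\abs{\Vertices}\), i.e.\ at least \(\Memory\abs{\Vertices}+1\) boundary positions \(k_0,\dots,k_m\). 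Recording at each boundary the product vertex \((\Last{\Play(0,k_j)}, \text{memory reached after } \Play(0,k_j)) \in \Vertices \times Q\) and applying the pigeonhole principle over the \(\Memory\abs{\Vertices}\) product vertices yields two boundaries \(k_a < k_b\) sharing the \emph{same} product vertex.

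I would then take \(\Play' = \Play(0,k_a)\cdot(\Play(k_a,k_b))^{\omega}\). Because the product vertices at \(k_a\) and \(k_b\) coincide, letting \(\PlayerMain\) repeat the moves she made along \(\Play(k_a,k_b)\) produces a genuine play consistent with \(\Strategy[\Adversary]^{\TwoP}\). It then suffices to show \(\PlaySuffix{k_a} \in \DirFWMP(\WindowLength)\), which gives \(\Play' \in \FWMPL\) and the contradiction. The period \(\Play(k_a,k_b)\) is a concatenation of the nonnegative blocks \(\Play(k_j,k_{j+1})\) for \(a \le j < b\), each of length at most \(\WindowLength\); hence the window opened at every position \(i \in [k_a,k_b)\) closes by the end of its block, i.e.\ within \(\WindowLength\) steps. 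By periodicity of \(\Play'\) beyond \(k_a\), the same holds at every position \(i \ge k_a\).

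The step I expect to be most delicate is this last window-closing argument \emph{across the seam} where one copy of the period is glued to the next: a window opened near the end of \(\Play(k_a,k_b)\) could a priori straddle the seam, and the payoffs of \(\Play'\) past \(k_b\) are the repeated block payoffs rather than the original continuation \(\Play(k_b,t)\). The observation that removes this difficulty is that each block is nonnegative and of length at most \(\WindowLength\), so the window opened at any \(i \in [k_{b-1},k_b)\) already closes by \(k_b\), i.e.\ strictly before the seam, so the differing payoffs past \(k_b\) are never consulted; moreover the total payoff of the period is automatically nonnegative, being a sum of nonnegative blocks, ruling out any long open window created by the repetition. The remaining bookkeeping — the counting bound \(m \ge \Memory\abs{\Vertices}\) and the verification that repeating \(\PlayerMain\)'s moves yields a valid consistent play — is routine.
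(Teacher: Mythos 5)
Your proposal is correct and follows essentially the same route as the paper's proof: fix \(\Strategy[\Adversary]^{\TwoP}\) to obtain the product one-player game on \(\Memory \cdot \abs{\Vertices}\) vertices, use the fact that an infix without an open window of length \(\WindowLength\) yields at least \(\Memory \cdot \abs{\Vertices} + 1\) window-closing positions, apply pigeonhole to find a repeated product vertex, and pump the resulting cycle into a play winning for \(\PlayerMain\), contradicting that the strategy wins \(\FWMPLBar\). Your explicit block decomposition and the seam analysis merely spell out details that the paper compresses into ``\(\PlayerMain\) can reach this cycle and loop in it forever.''
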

\begin{proof}
    Since \(\Strategy[\Adversary]^{\TwoP}\) has memory of size \(\Memory\), fixing this strategy in \(\Game\) gives a one-player game~\(\Game^{\Strategy[\Adversary]^{\TwoP}}\) with \(\Memory \cdot \abs{\Vertices}\) vertices.
    Then, the claim is that every path of length \(\Memory \cdot \abs{\Vertices} \cdot \WindowLength\) in \(\Game^{\Strategy[\Adversary]^{\TwoP}}\) contains an open window of length \(\WindowLength\). 
    Suppose towards a contradiction that there exists a path of length \(\Memory \cdot \abs{\Vertices} \cdot \WindowLength\) in the one-player game that does not contain an open window of length \(\WindowLength\). 
    Since every window is closed in no more than \(\WindowLength\) steps, and the window is closed at the initial vertex to begin with, there are at least \( \Memory \cdot \abs{\Vertices} + 1\) vertices in this path where a window closes. 
    Since there are only \(\Memory \cdot \abs{\Vertices} \) vertices in \(\Game^{\Strategy[\Adversary]^{\TwoP}}\), by the pigeonhole principle, there exists a vertex \(u\) that is visited twice in this path, both times with the window closed. 
    Thus, the path contains a cycle without open windows of length \(\WindowLength\).
    Since \(\PlayerMain\) can reach this cycle and loop in it forever, it gives an outcome that is winning for \(\PlayerMain\) in \(\Game\), which is a contradiction. 
\end{proof}

\section{Reducing stochastic games to non-stochastic games}%
\label{sec:reducing_stochastic_games_to_two_player_games}
In this section, we recall a sufficient condition that allows us to solve stochastic games by solving, as a subroutine, a non-stochastic game with the same objective.
The sufficient condition was presented in~\cite{CHH09} for solving finitary Streett objectives
and can be generalized to arbitrary prefix-independent objectives.
Under this condition, the qualitative problems for stochastic games can be solved as efficiently (up to a factor of \(\abs{\Vertices}^2\)) as non-stochastic games with the same objective.
Also, it follows that the memory requirement for \(\PlayerMain\) to play optimally in  stochastic games is the same as in non-stochastic games with the same objective.

We now describe the sufficient condition that we call the sure-almost-sure property.
Given a stochastic game \(\Game\), let \(\TwoPlayerGame = \tuple{(\Vertices, \Edges), (\VerticesMain, \VerticesAdversary \union \VerticesRandom, \emptyset), \PayoffFunction}\) be the \emph{(adversarial) non-stochastic game corresponding to \(\Game\)}, obtained by changing all probabilistic vertices to \(\PlayerAdversaryDash\) vertices. 
We omit the probability function in the tuple since there are no more probabilistic vertices. 

\begin{defi}[Sure-almost-sure (\(\SAS\))~property] \label{def:prop}  
    A prefix-independent objective \(\Objective\) in a game~\(\Game\) satisfies the \(\SAS\) property
    if \(\TwoPlayerWinningRegion[\TwoPlayerGame]{\Adversary}{\ObjectiveBar} = \Vertices\) implies \(\ASWinningRegion[\Game]{\Adversary}{\ObjectiveBar} = \Vertices\), that is, if \(\PlayerAdversary\) wins the objective \(\ObjectiveBar\) from every vertex in \(\TwoPlayerGame\), then \(\PlayerAdversary\) almost-surely wins the same objective \(\ObjectiveBar\) from every vertex in \(\Game\).
\end{defi}

The definition of the \(\SAS\) property implies that if there exists a vertex from which \(\PlayerMain\) wins the objective \(\Objective\) positively in the stochastic game \(\Game\), then there exists a vertex from which \(\PlayerMain\) wins the same objective \(\Objective\) in the non-stochastic game \(\TwoPlayerGame\).
Note that every prefix-independent objective satisfies the converse of the \(\SAS\)~property since if \(\PlayerAdversary\) wins almost-surely from all vertices in \(\Game\),  then since he controls all probabilistic vertices in \(\TwoPlayerGame\), he wins from all vertices in \(\TwoPlayerGame\) by choosing optimal successors of probabilistic vertices.  

\begin{rem}
We show in Section~\ref{sec:reducing_stochastic_window_mean_payoff_games} that for all stochastic games \(\Game\), the objectives \(\FWMPL[\Game]\) and \(\BWMP[\Game]\) satisfy the \(\SAS\)~property. 
As noted earlier in Section~\ref{sec:window_mean_payoff}, the \(\FWMP_{\Game}(1)\) objective is equivalent to a \textsf{co\Buchi} objective, and thus, \textsf{co\Buchi} satisfies the \(\SAS\) property as well.
One can show that the generalized \textsf{co\Buchi} objective, that is, an objective that is a union of several \textsf{co\Buchi} objectives also satisfies the \(\SAS\) property.
In particular, objectives such as \(\BWMP[\Game]\), and finitary parity and finitary Streett objectives (as defined in~\cite{CHH09}) can be seen as countable unions of \textsf{co\Buchi} objectives, and these objectives satisfy the \(\SAS\) property.

Now, we present an example of objective that \emph{does not} satisfy the \(\SAS\)~property. 
Consider the example in \figurename~\ref{fig:property_counterexample_parity}.
The objective \(\varphi\) in this game is a B\"{u}chi objective: 
a play \(\Play\) satisfies the B\"{u}chi objective if \(\pi\) visits vertex \({\Vertex}_1\) infinitely often.
Although from every vertex, with positive probability (in fact, with probability \(1\)), a play visits \({\Vertex}_1\) infinitely often, from none of the vertices, \(\PlayerMain\) can ensure the B\"{u}chi objective in the non-stochastic game \(\TwoPlayerGame\).
\begin{figure}[t]
    \centering
    \begin{tikzpicture}
        \node[state, accepting] (v1) {\(\Vertex[1]\)};
        \node[random, draw, right of=v1, xshift=+5mm] (v2) {\(\Vertex[2]\)};
        \node[state, right of=v2, xshift=+5mm] (v3) {\(\Vertex[3]\)};
        \draw 
              (v1) edge[below, pos=0.3] (v2)
              (v2) edge[bend right, above right, pos=0.4] node{\(\EdgeProbability{.5}\)} (v1)
              (v2) edge[bend left, above left, pos=0.4] node{\(\EdgeProbability{.5}\)} (v3)
              (v3) edge[below, pos=0.3] (v2)
        ;
    \end{tikzpicture}
    \caption{B\"{u}chi objective does not satisfy the \(\SAS\)~property in this game.}
    \label{fig:property_counterexample_parity}
\end{figure}
\end{rem}

The following theorem states that if an objective \(\Objective\) satisfies the \(\SAS\)~property, then solving the positive (resp., almost-sure) satisfaction problem can be done within a linear (resp., quadratic) factor of the time needed to solve non-stochastic games with the same objective.
\begin{thm}%
\label{thm:twop_to_stochastic}
    Given \(\Game\) and \(\Objective\), suppose in every subgame \(\Game'\) of \(\Game\), the objective \(\Objective\) restricted to~\(\Game'\) satisfies the \(\SAS\)~property. 
    Let \(\TwoPlayerWin_{\Objective}(\TwoPlayerGame)\) be an algorithm computing \(\TwoPlayerWinningRegion[\TwoPlayerGame]{\Main}{\Objective}\) in \(\TwoPlayerGame\) in time~\(\mathbb{C}\).
    Then, the positive and almost-sure satisfaction of \(\Objective\) can be decided in time \(\bigO(\abs{\Vertices} \cdot (\mathbb{C} + \abs{\Edges}))\) and \(\bigO(\abs{\Vertices}^2 \cdot (\mathbb{C} +  \abs{\Edges}))\) respectively.
    
    Moreover, for positive and almost-sure satisfaction of \(\Objective\), the memory requirement for \(\PlayerMain\) to play optimally in stochastic games is no more than that for non-stochastic games.
\end{thm}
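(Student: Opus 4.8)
\emph{Overall plan and positive satisfaction.} The plan is to give, for each qualitative problem, a recursive algorithm that uses $\TwoPlayerWin_{\Objective}$ on adversarial non-stochastic games as a black box, proving correctness by combining prefix-independence (Remark~\ref{rem:trap-pref-ind}) with the $\SAS$ property and with determinacy in the form $\PosWinningRegion[\Game]{\Main}{\Objective} = \Vertices \setminus \ASWinningRegion[\Game]{\Adversary}{\ObjectiveBar}$. For positive satisfaction I would first compute $W \define \TwoPlayerWinningRegion[\TwoPlayerGame]{\Main}{\Objective}$. If $W = \emptyset$, then $\TwoPlayerWinningRegion[\TwoPlayerGame]{\Adversary}{\ObjectiveBar} = \Vertices$, and the $\SAS$ property yields $\ASWinningRegion[\Game]{\Adversary}{\ObjectiveBar} = \Vertices$, hence $\PosWinningRegion[\Game]{\Main}{\Objective} = \emptyset$; this is the only place $\SAS$ is invoked. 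Otherwise $W$ is a trap for $\PlayerAdversary$ in $\TwoPlayerGame$, so every probabilistic vertex of $W$ keeps all its successors in $W$ and $W$ induces a subgame of $\Game$; a sure winning strategy on $W$ in $\TwoPlayerGame$ stays inside $W$ and wins against every resolution of the probabilistic choices, hence wins $\Objective$ with probability~$1$ in $\Game$. Therefore $A \define \PosAttr{\Main}{W} \subseteq \PosWinningRegion[\Game]{\Main}{\Objective}$ (reach $W$ with positive probability, then win with probability~$1$), and I would recurse on $\Game \restriction (\Vertices \setminus A)$, returning $A$ together with the recursive answer.

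\emph{Positive completeness and complexity.} Soundness is immediate from the two observations above. For completeness I would show $\PosWinningRegion[\Game]{\Main}{\Objective} \cap (\Vertices \setminus A) \subseteq \PosWinningRegion[\Game \restriction (\Vertices \setminus A)]{\Main}{\Objective}$: if some $v$ in the complement were positive winning in $\Game$ but not in the subgame, then $Z \define \ASWinningRegion[\Game \restriction (\Vertices \setminus A)]{\Adversary}{\ObjectiveBar}$ contains $v$, and one checks that $Z$ is a trap for $\PlayerMain$ in $\Game$ itself (the complement of a $\PlayerMain$-positive-attractor is a $\PlayerMain$-trap, and $Z$ is again a $\PlayerMain$-trap inside it, so $\PlayerMain$-vertices and probabilistic vertices of $Z$ keep all successors in $Z$). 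Then $\PlayerAdversary$'s almost-sure strategy on $Z$ transfers verbatim to $\Game$, giving $v \in \ASWinningRegion[\Game]{\Adversary}{\ObjectiveBar}$, a contradiction. Each call removes the non-empty set $A \supseteq W$, so there are at most $\abs{\Vertices}$ calls, each costing $\mathbb{C}$ for $\TwoPlayerWin_{\Objective}$ and $\bigO(\abs{\Edges})$ for the attractor, giving $\bigO(\abs{\Vertices} \cdot (\mathbb{C} + \abs{\Edges}))$. The recursion is legitimate exactly because $\Objective$ restricted to each subgame again satisfies $\SAS$, which is the hypothesis of the theorem.

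\emph{Almost-sure satisfaction.} Here I would iterate the positive solver. On the current $\PlayerAdversary$-trap subgame $\Game \restriction X$ (initially $X = \Vertices$) compute $P \define \PosWinningRegion[\Game \restriction X]{\Main}{\Objective}$; the set $L \define X \setminus P$ equals $\ASWinningRegion[\Game \restriction X]{\Adversary}{\ObjectiveBar}$, so from $L$, and hence (by prefix-independence) from the $\PlayerAdversary$-positive-attractor $B \define \PosAttr{\Adversary}{L}$ computed in $\Game \restriction X$, $\PlayerMain$ cannot win almost surely. I would delete $B$, recurse on $\Game \restriction (X \setminus B)$, and stop when $L = \emptyset$, i.e.\ $P = X$; this terminates in at most $\abs{\Vertices}$ rounds, each running the positive solver once, for the claimed $\bigO(\abs{\Vertices}^2 \cdot (\mathbb{C} + \abs{\Edges}))$ bound. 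The delicate point is the fixpoint invariant: $X$ is a trap for $\PlayerAdversary$ with $\PosWinningRegion[\Game \restriction X]{\Main}{\Objective} = X$, and I must conclude $\ASWinningRegion[\Game \restriction X]{\Main}{\Objective} = X$. As simple single-coin-flip examples show, positive winning everywhere does not by itself give almost-sure winning, so the argument instead invokes the standard fact, valid for every prefix-independent objective in a finite stochastic game, that a player's positive region equals the positive attractor of that player's almost-sure region (equivalently, $\PlayerAdversary$'s positive region for $\ObjectiveBar$ is empty as soon as his almost-sure region for $\ObjectiveBar$ is). Since $\PosWinningRegion[\Game \restriction X]{\Main}{\Objective} = X$ forces $\ASWinningRegion[\Game \restriction X]{\Adversary}{\ObjectiveBar} = \emptyset$, this fact makes $\PlayerAdversary$'s positive region for $\ObjectiveBar$ empty too, so $\ASWinningRegion[\Game \restriction X]{\Main}{\Objective} = X$.

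\emph{Memory and the main obstacle.} For the memory statement I would build $\PlayerMain$'s optimal strategy by gluing memoryless positive-attractor strategies (which cost no memory and depend only on the current vertex) with the non-stochastic winning strategy used inside the $W$-regions. Because each deleted attractor strictly shrinks the arena, the token can move to a strictly smaller subgame only finitely often, so the play settles in a single region where $\PlayerMain$ runs one non-stochastic winning strategy; by prefix-independence its memory may be reset on entry and the prior history is irrelevant, so the total memory never exceeds that of the non-stochastic winning strategy. I expect the main obstacle to be the almost-sure correctness, and within it the clean trap/subgame lifting: one must repeatedly certify that a strategy (of either player) computed in a subgame stays winning in $\Game$, which hinges on probabilistic vertices inside these traps retaining all their successors, and on the prefix-independent ``positive implies almost-sure from some vertex'' lemma, which is what makes the attractor-removal fixpoint, rather than a single positive computation, necessary.
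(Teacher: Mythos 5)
Your proposal matches the paper's proof essentially step for step: the same \(\PosWin_{\Objective}\) recursion (compute \(\TwoPlayerWinningRegion[\TwoPlayerGame]{\Main}{\Objective}\), invoke the \(\SAS\) property only in the empty case, remove the \(\PlayerMainDash\) positive attractor, and recurse on the induced subgame), the same \(\ASWin_{\Objective}\) loop that iterates the positive solver and removes \(\PlayerAdversaryDash\) positive attractors of the complement, the same \(\bigO(\abs{\Vertices} \cdot (\mathbb{C} + \abs{\Edges}))\) and \(\bigO(\abs{\Vertices}^2 \cdot (\mathbb{C} + \abs{\Edges}))\) accounting, and the same memory construction gluing memoryless positive-attractor strategies with the non-stochastic winning strategy inside the trap regions. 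The ``standard fact'' you invoke to close the almost-sure fixpoint is exactly the paper's Lemma~\ref{lem:all_positively_implies_all_almost_surely} (cited from~\cite{Cha07}) in contrapositive form, so your argument is correct and takes essentially the same route, merely spelling out some soundness/completeness details that the paper delegates to~\cite{CHH09}.
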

\noindent %FIXED indent
Theorem~\ref{thm:twop_to_stochastic} does not give bounds on memory requirement for winning \(\PlayerAdversary\) strategies for objective \(\Objective\) in the stochastic game, but we provide such bounds specifically for \(\FWMPL\) and \(\BWMP\) in Section~\ref{sec:reducing_stochastic_window_mean_payoff_games}.
The proof of the theorem appears shortly after Corollary~\ref{cor:quantitative_satisfaction_complexity}.

Finally, we look at the quantitative decision problem.
The quantitative satisfaction for \(\Objective\) can be decided in \(\NP^{B} \)~(\cite[Theorem~6]{CHH09}), where \(B\) is an oracle deciding positive and almost-sure satisfaction problems for~\(\Objective\).
It is not difficult to see that the quantitative satisfaction for~\(\Objective\) can be decided in \(\NP^{B} \cap \coNP^{B}\).
Moreover, as stated in~\cite[Definition 2]{CHH09}, the vertices of a stochastic game can be partitioned into classes from which  \(\PlayerMain\) wins \(\Objective\) with the same maximal probability.
From~\cite[Lemma 7]{CHH09}, a strategy of \(\PlayerMain\) that is almost-sure winning in every class for the objective \(\Objective \union \ReachObj(Z)\) for some suitable subset \(Z\) of the class is a winning strategy of \(\PlayerMain\) for the quantitative satisfaction of \(\Objective\). 
Analogously, a strategy of \(\PlayerAdversary\) that is positive winning in every class for objective \(\ObjectiveBar \intersection \SafeObj(\overline{Z})\), where \(\overline{Z}\) is the complement of \(Z\) in the class, is a winning strategy for \(\PlayerAdversary\).
Thus, the memory requirement of winning strategies for both players for the quantitative decision problem is no greater than that for the qualitative decision problem.
\begin{cor}%
\label{cor:quantitative_satisfaction_complexity}
    Given \(\Game\) and \(\Objective\) as described in Theorem~\ref{thm:twop_to_stochastic}, 
    let \(B\) be an oracle deciding the qualitative satisfaction of \(\Objective\). 
    Then, the quantitative satisfaction of \(\Objective\) is in \(\NP^{B} \intersection \coNP^{B}\).
    Moreover, the memory requirement of optimal strategies for both players is no greater than that for the positive and almost-sure satisfaction of \(\Objective\). 
\end{cor}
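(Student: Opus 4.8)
The plan is to treat the two assertions of the corollary separately, reusing the machinery of~\cite{CHH09} together with determinacy. For the complexity bound I would first invoke \cite[Theorem~6]{CHH09}: since $B$ decides the positive and almost-sure satisfaction of the prefix-independent objective $\Objective$ by hypothesis, that theorem immediately places the quantitative satisfaction of $\Objective$ in $\NP^B$. The only point to check is that the hypotheses of \cite[Theorem~6]{CHH09} are met, namely that $\Objective$ is prefix-independent and that $B$ answers exactly the qualitative queries the reduction needs; both hold by the standing assumptions inherited from Theorem~\ref{thm:twop_to_stochastic}.

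For the $\coNP^B$ bound I would appeal to determinacy. By determinacy of Blackwell games~\cite{Mar98}, for every vertex $\Vertex$ the value of the game for $\PlayerMain$ (for $\Objective$) and the value for $\PlayerAdversary$ (for $\ObjectiveBar$) sum to $1$. Hence ``$\PlayerMain$ wins $\Objective$ from $\Vertex$ with probability at least $p$'' fails if and only if ``$\PlayerAdversary$ wins $\ObjectiveBar$ from $\Vertex$ with probability exceeding $1-p$'', which is, up to the choice of threshold, a quantitative satisfaction question for the complementary objective $\ObjectiveBar$. Now $\ObjectiveBar$ is prefix-independent whenever $\Objective$ is, and the oracle $B$ also settles the qualitative problems for $\ObjectiveBar$: by determinacy, $\PlayerAdversary$ positively (resp.\ almost-surely) wins $\ObjectiveBar$ from $\Vertex$ exactly when $\PlayerMain$ does not almost-surely (resp.\ positively) win $\Objective$ from $\Vertex$, so a single call to $B$ together with a complementation suffices. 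Applying \cite[Theorem~6]{CHH09} to $\ObjectiveBar$ with the players' roles swapped therefore places the complement problem in $\NP^B$, which gives the $\coNP^B$ membership. The strict-versus-nonstrict threshold gap ($>1-p$ rather than $\ge 1-p$) is harmless because the attainable values are rationals with a denominator bounded by a quantity computable from $\Game$, so $>1-p$ can be replaced by $\ge q$ for a suitable rational $q$.

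For the memory bound I would use the value-class decomposition. By \cite[Definition~2]{CHH09} the vertices split into classes on which $\PlayerMain$ wins $\Objective$ with a common maximal probability, and this probability is a function of the current vertex alone. By \cite[Lemma~7]{CHH09}, a strategy of $\PlayerMain$ that is almost-sure winning, within each class, for $\Objective \union \ReachObj(Z)$ (for the appropriate $Z$) is optimal for the quantitative problem, and dually a strategy of $\PlayerAdversary$ that is positive winning for $\ObjectiveBar \intersection \SafeObj(\overline{Z})$ in each class is optimal for him. It then remains to argue that these derived objectives cost no more memory than the bare qualitative objectives. For $\PlayerMain$ I would decompose each class into the almost-sure winning region $W$ for $\Objective$, on which she plays her qualitative almost-sure strategy (of some memory $m_{\Objective}$), and the remainder, on which almost-sure reachability of $Z \union W$ is achievable by a \emph{memoryless} attractor strategy; the mode is selected by testing membership of the current vertex in $W$, so no additional memory is incurred and the total memory stays $m_{\Objective}$. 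Symmetrically, for $\PlayerAdversary$ the safety constraint $\SafeObj(\overline{Z})$ is maintained by restricting to the memoryless trap $\overline{Z}$ and then playing his positive $\ObjectiveBar$-strategy inside it. Because the class of a vertex is read off from the vertex itself, combining the per-class strategies adds no memory either, and the quantitative memory for each player is bounded by the corresponding qualitative (positive/almost-sure) memory.

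The step I expect to be the main obstacle is this last one: showing that augmenting the prefix-independent objective with the reachability (resp.\ safety) modifier and then stitching the per-class strategies together genuinely preserves the memory bound. The argument hinges on two facts that I would make precise---almost-sure reachability and safety admit memoryless strategies, and value-class membership is positional---so that every mode switch is a function of the current vertex rather than of the history. Verifying that these mode switches interact correctly with the prefix-independent sub-strategy (in particular that leaving $W$ or re-entering a class never forces the memory of the $\Objective$-strategy to be reset in a history-dependent way) is where the care is needed; everything else follows from \cite{CHH09} and determinacy.
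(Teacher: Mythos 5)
Your proposal is correct and follows essentially the same route as the paper: membership in \(\NP^{B}\) via \cite[Theorem~6]{CHH09}, the \(\coNP^{B}\) bound via determinacy and complementation to the dual objective, and the memory bound via the value-class decomposition of \cite[Definition~2 and Lemma~7]{CHH09}, with memoryless attractor and trap strategies absorbing the reachability and safety modifiers. The paper leaves the \(\coNP^{B}\) direction and the memory-preservation details at the level of a remark (``it is not difficult to see''), so your fleshed-out versions of those steps are refinements of the same argument rather than a different one.
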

\noindent %FIXED indent
Now, we describe an algorithm \(\PosWin_{\Objective}\) to compute the positive winning region of \(\PlayerMain\) in \(\Game\) with objective \(\Objective\).
The algorithm uses \(\TwoPlayerWin_{\Objective}\) as a subroutine.
Then, 
we describe an algorithm \(\ASWin_{\Objective}\) that uses \(\PosWin_{\Objective}\) as a subroutine to compute the almost-sure winning region for \(\PlayerMain\) for the objective~\(\Objective\).
The algorithms and their correctness proof are the same as in the case of finitary Streett objectives described in~\cite{CHH09}.

\begin{proof}[Proof of Theorem~\ref{thm:twop_to_stochastic}]
    We recall the recursive
    procedures (in Algorithm~\ref{alg:poswin} and Algorithm~\ref{alg:aswin}) to compute the positive and the almost-sure winning regions for \(\PlayerMain\) in stochastic games with an objective that satisfies the \(\SAS\)~property. 
    The algorithms are similar to the case of finitary Streett objectives~\cite{CHH09}, which satisfy the \(\SAS\)~property.
    Note that, because of determinacy, the positive winning region \(\PosWinningRegion[\Game]{\Main}{\Objective}\)
    for \(\PlayerMain\) is the complement of the almost-sure winning region  \(\ASWinningRegion[\Game]{\Adversary}{\ObjectiveBar}\) for \(\PlayerAdversary\).
    
    \begin{algorithm}[t]
        \caption{\(\PosWin_{\Objective}(\Game)\)}\label{alg:poswin}
        \begin{algorithmic}[1]
            \Require{\(\Game = \tuple{(\Vertices, \Edges), (\VerticesMain, \VerticesAdversary, \VerticesRandom), \ProbabilityFunction, \PayoffFunction}\), the stochastic game}
            \Ensure{The set of vertices from which \(\PlayerMain\) positively wins objective~\(\Objective\) in~\(\Game\)}
            \State \(\WinningSet{\Main} \gets \TwoPlayerWin_{\Objective}(\TwoPlayerGame)\) \label{alg_line:poswin_2P}
            \If{\(\WinningSet{\Main} = \emptyset\)} 
                \State \Return \(\emptyset\)\label{alg_line:poswin_return_empty}
            \Else
                \State \(\PosAttractorSet{\Main} \gets \PosAttr{\Main}{\WinningSet{\Main}}\)\label{alg_line:poswin_attractor_computation}
                \State \Return \(\PosAttractorSet{\Main} \union  \PosWin_{\Objective}(\Game \restriction (\Vertices \setminus \PosAttractorSet{\Main}))\) \label{alg_line:poswin_recursive_call}
            \EndIf
        \end{algorithmic}
    \end{algorithm}
    
    \begin{algorithm}[t]
        \caption{\(\ASWin_{\Objective}(\Game)\)}\label{alg:aswin}
        \begin{algorithmic}[1]
            \Require{\(\Game = \tuple{(\Vertices, \Edges), (\VerticesMain, \VerticesAdversary, \VerticesRandom),  \ProbabilityFunction, \PayoffFunction}\), the stochastic game}
            \Ensure{The set of vertices in \(\Vertices\) from which \(\PlayerMain\) almost-surely wins \(\Objective\) in~\(\Game\)}
            \State \(W_\Adversary \gets \Vertices \setminus \PosWin_{\Objective}(\Game)\)\label{alg_line:aswin_winning_set_computation}
            \If{\(\WinningSet{\Adversary} = \emptyset\)}
                \State \Return \(\Vertices\)\label{alg_line:aswin_return_all_vertices}
            \Else
                \State \(\PosAttractorSet{\Adversary} \gets \PosAttr{\Adversary}{\WinningSet{\Adversary}}\)\label{alg_line:aswin_attractor_computation}
                \State \Return \(\ASWin_{\Objective}(\Game \restriction (\Vertices \setminus \PosAttractorSet{\Adversary}))\)\label{alg_line:aswin_recursive_call}
            \EndIf
        \end{algorithmic}
    \end{algorithm}
    
    The depth of recursive calls in Algorithm~\ref{alg:poswin}
    is bounded by \(\abs{\Vertices}\), the number of vertices in \(\Game\), as the argument in the recursive call (Line~\ref{alg_line:poswin_recursive_call}) has strictly 
    fewer vertices than \(\abs{\Vertices}\), since $\PosAttractorSet{\Main} \neq \emptyset$. 
    The \(\PlayerMainDash\)~positive attractor is computed in time \(\bigO( \abs{\Edges})\), and suppose \(\TwoPlayerWin_{\Objective}\) runs in time \(\mathbb{C}\). 
    These subroutines are executed at most \(\abs{\Vertices}\) times, once in every depth of the recursive call. 
    Thus, the total running time of \(\PosWin_{\Objective}\) is \(\bigO(\abs{\Vertices} \cdot (\mathbb{C} + \abs{\Edges}))\).
    
    Let \(\WinningSet[i]{\Main}\) and \(\PosAttractorSet[i]{\Main}\) denote the sets \(\WinningSet{\Main}\) and \(\PosAttractorSet{\Main}\) computed in the recursive call of depth~\(i\) respectively. 
    Recall that the sets \(\PosAttractorSet[i]{\Main}\) form a partition of the positive winning region \(\PosWinningRegion[\Game]{\Main}{\Objective}\) for \(\PlayerMain\), and that for all \(i\), we have that \(\WinningSet[i]{\Main} \subseteq \PosAttractorSet[i]{\Main}\).
    Let \(\Strategy[\Main]^{\TwoP}\) be a winning strategy of \(\PlayerMain\) in the non-stochastic game~\(\TwoPlayerGame\). 
    We construct a positive-winning strategy \(\Strategy[\Main]^{\Pos}\) for \(\PlayerMain\) in the stochastic game as follows.
    Given a prefix \(\Prefix \in \PrefixSet[\Main]{\Game}\), we determine the value of \(i\) for which \(\Last{\Prefix} \in \PosAttractorSet[i]{\Main}\).
    Then, if \(\Last{\Prefix} \in \WinningSet[i]{\Main}\), then \(\Strategy[\Main]^{\Pos}\) plays like \(\Strategy[\Main]^{\TwoP}\), that is \(\Strategy[\Main]^{\Pos}(\Prefix) = \Strategy[\Main]^{\TwoP}(\Prefix)\);  
    otherwise, \(\Last{\Prefix} \in \PosAttractorSet[i]{\Main} \setminus \WinningSet[i]{\Main}\), and let \(\Strategy[\Main]^{\Pos}(\Prefix) = \Strategy[\Main]^{\mathrm{Attr}}(\Prefix)\) 
    where \(\Strategy[\Main]^{\mathrm{Attr}}\)  is a positive-attractor strategy to \(\WinningSet[i]{\Main}\) (which is memoryless, i.e., \(\Strategy[\Main]^{\mathrm{Attr}}(\Prefix) =\Strategy[\Main]^{\mathrm{Attr}}(\Last{\Prefix}) \)).
    Then, \(\Strategy[\Main]^{\Pos}\) is a positive-winning strategy for \(\PlayerMain\) from all vertices in \(\PosWin_{\Objective}(\Game)\). 
    
    The depth of recursive calls in Algorithm~\ref{alg:aswin} is also bounded by \(\abs{\Vertices}\).
    The set \(\WinningSet{\Adversary}\) from which \(\PlayerAdversary\) wins almost-surely for objective \(\ObjectiveBar\) is computed in time \(\bigO(\abs{\Vertices} \cdot (\mathbb{C} + \abs{\Edges}))\), and the \(\PlayerAdversaryDash\)~positive attractor is computed in time \(\bigO( \abs{\Edges})\). 
    This leads to a total running time \(\bigO(\abs{\Vertices}^2 \cdot (\mathbb{C} +  \abs{\Edges}))\).
\end{proof}

The following lemma, which is a special case of Theorem~1 in~\cite{Cha07} where it has been proved for concurrent stochastic games, allows us to use results from the computation of the positive winning region for \(\PlayerMain\) in \(\Game\) to obtain the almost-sure winning region for \(\PlayerMain\) in \(\Game\). 
In Theorem~1 in~\cite{Cha07}, it has been shown that for a prefix-independent objective, if there exists a vertex from which \(\PlayerMain\) wins positively, then there exists a vertex from which \(\PlayerMain\) wins almost-surely.
Since prefix-independent objectives are closed under complementation, the theorem also holds for \(\PlayerAdversary\).
Considering the theorem for \(\PlayerAdversary\), and taking the contrapositive, we have the following lemma for the special case of turn-based zero-sum stochastic games.
In particular, in Algorithm~\ref{alg:aswin}, since \(\WinningSet{\Adversary} = \emptyset\) denotes that all vertices are positively winning for \(\PlayerMain\), this gives that all vertices are almost-surely winning for \(\PlayerMain\) by Lemma~\ref{lem:all_positively_implies_all_almost_surely}.

\begin{lem}%
\label{lem:all_positively_implies_all_almost_surely}\cite[Theorem~1]{Cha07}
    If \(\PlayerMain\) positively wins a stochastic game \(\Game\) with a prefix-independent objective \(\Objective\) from every vertex in \(\Vertices\), then \(\PlayerMain\) almost~surely wins \(\Game\) with objective \(\Objective\) from every vertex in \(\Vertices\), that is, if \(\PosWinningRegion[\Game]{\Main}{\Objective} = \Vertices\), 
    then \(\ASWinningRegion[\Game]{\Main}{\Objective} = \Vertices\). 
\end{lem}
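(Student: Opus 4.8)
The plan is to exhibit a single strategy of \PlayerMain\ that is almost-sure winning from every vertex, and to obtain almost-sure satisfaction from positive satisfaction via a zero-one law for the tail event $\Objective$. Because $\Objective$ is prefix-independent, membership of a play in $\Objective$ depends only on its tail, so $\Objective$ is a tail event of the stochastic process generated by any fixed strategy profile; this is exactly the setting in which L\'{e}vy's zero-one law (the martingale-convergence statement for conditional probabilities of a tail event) applies, and it is the engine of the argument.

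First I would upgrade the hypothesis from ``positive from every vertex'' to a uniform bound. Since the games are determined (Blackwell determinacy, as noted above), each vertex $v$ has a value $\mathrm{val}(v)$, and the assumption $\PosWinningRegion[\Game]{\Main}{\Objective} = \Vertices$ gives $\mathrm{val}(v) > 0$ for every $v$; as $\Vertices$ is finite, $q := \min_{v} \mathrm{val}(v) > 0$. The key step is then to produce a strategy $\sigma$ of \PlayerMain\ whose residuals are uniformly good: for every finite history ending in a vertex $v$ and every strategy $\tau$ of \PlayerAdversary, the continuation of $\sigma$ wins $\Objective$ from $v$ against $\tau$ with probability at least $q$. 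Using prefix-independence, this amounts to requiring that, viewed from the current vertex, $\sigma$ still guarantees probability at least $q$ of satisfying $\Objective$ irrespective of the past.

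Granting such a $\sigma$, I would fix an arbitrary $\tau$ and let $P$ be the induced measure from a start vertex $v_0$, with $F_n$ the filtration generated by the first $n$ vertices and $X_n$ the vertex at step $n$. By prefix-independence the conditional probability $P(\Objective \mid F_n)$ equals the winning probability of the shifted profile from $X_n$, hence $P(\Objective \mid F_n) \ge q$ almost surely for every $n$. L\'{e}vy's zero-one law gives $P(\Objective \mid F_n) \to \mathbf 1_{\Objective}$ almost surely; combining the two yields $\mathbf 1_{\Objective} \ge q > 0$ almost surely, and since an indicator takes values in $\{0,1\}$ this forces $\mathbf 1_{\Objective} = 1$ almost surely, i.e.\ $P(\Objective) = 1$. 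As $\tau$ and $v_0$ were arbitrary, $\sigma$ is almost-sure winning from every vertex, proving the lemma.

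The main obstacle is the construction of $\sigma$ with uniformly good residuals: positive-winning strategies for prefix-independent objectives need not be memoryless or even finite-memory, and a merely $\varepsilon$-optimal strategy may have residuals that degrade along a history. One must build $\sigma$ so that after every consistent history it again guarantees probability at least $q$ from the current vertex --- essentially a residual/subgame-robust optimal strategy --- which is exactly where finiteness of $\Vertices$, determinacy, and prefix-independence have to be combined. This is precisely the content established in the concurrent-game setting of \cite[Theorem~1]{Cha07}; the turn-based zero-sum case we need is the specialization of that result, and the statement for \PlayerAdversary\ used in Algorithm~\ref{alg:aswin} follows because prefix-independent objectives are closed under complementation.
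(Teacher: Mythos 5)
Your outline is correct in spirit, but it takes a genuinely different route from the paper's, which contains no probabilistic argument at all. The paper consumes only the bare statement of \cite[Theorem~1]{Cha07} (``if some vertex is positively winning for a prefix-independent objective, then some vertex is almost-surely winning''), applies it to \(\PlayerAdversary\) with the complement objective \(\ObjectiveBar\) (prefix-independent objectives are closed under complementation), and takes the contrapositive: if \(\ASWinningRegion[\Game]{\Adversary}{\ObjectiveBar} = \emptyset\) then \(\PosWinningRegion[\Game]{\Adversary}{\ObjectiveBar} = \emptyset\). Combined with the determinacy identities \(\PosWinningRegion[\Game]{\Main}{\Objective} = \Vertices \setminus \ASWinningRegion[\Game]{\Adversary}{\ObjectiveBar}\) and \(\ASWinningRegion[\Game]{\Main}{\Objective} = \Vertices \setminus \PosWinningRegion[\Game]{\Adversary}{\ObjectiveBar}\), this yields the lemma immediately. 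Your L\'{e}vy zero-one-law argument is the standard engine \emph{behind} such results, and if completed it buys strictly more: a single explicit strategy that is almost-sure winning from every vertex, plus a transparent account of where prefix-independence enters (namely that \(\Objective\) is a tail event, so \(P(\Objective \mid F_n) \to \mathbf{1}_{\Objective}\)). The paper's derivation buys brevity: it needs nothing beyond the quoted existential statement.

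Two caveats, both exactly at the points where you lean on the citation. First, your step ``positive winning from every vertex gives \(\mathrm{val}(v) > 0\), hence \(q = \min_v \mathrm{val}(v) > 0\)'' does not follow from Blackwell determinacy alone: positive winning means some \(\sigma\) achieves \(P^{\sigma,\tau}(\Objective) > 0\) for every \(\tau\), but the infimum over \(\tau\) of these positive numbers can still be \(0\), so \(\mathrm{val}(v) > 0\) is itself a nontrivial qualitative-determinacy fact for tail objectives --- part of what is being cited, not a free consequence of the existence of a value. Second, the residual-robust strategy with uniformly good continuations is strictly stronger than the statement of \cite[Theorem~1]{Cha07} as the paper invokes it, which is purely existential (``there exists a vertex''); to justify your deferral you must appeal to the machinery \emph{inside} \cite{Cha07} (uniform positivity and strategies whose residuals do not degrade), not to the quoted statement. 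Neither caveat is fatal, since the cited work does contain that content; but as written, your claim that the lemma is ``the specialization of that result'' understates what you are borrowing, whereas the paper's complementation-plus-contrapositive trick is precisely what makes the weak existential statement suffice.
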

\noindent %FIXED indent
Let \(\Strategy[\Main]^{\Pos}\) be the strategy of \(\PlayerMain\) as described in the \(\PosWin_{\Objective}\) algorithm.
An almost-sure winning strategy \(\Strategy[\Main]^{\AS}\) of \(\PlayerMain\) for objective \(\Objective\) is the same as \(\Strategy[\Main]^{\Pos}\); if \(\PlayerMain\) follows the strategy \(\Strategy[\Main]^{\Pos}\), then she almost-surely satisfies \(\Objective\) from all vertices in \(\ASWin_{\Objective}(\Game)\).
For both positive and almost-sure winning, \(\PlayerMain\) does not require any additional memory in the stochastic game compared to the non-stochastic game.\footnote{
    If deterministic strategies suffice for \(\PlayerMain\) in non-stochastic games to win an objective \(\Objective\) satisfying the \(\SAS\) property, then deterministic strategies also suffice for \(\PlayerMain\) for the positive and almost-sure winning strategies of the same objective \(\Objective\) in stochastic games.
}

\section{Reducing stochastic window mean-payoff games: A special case}
\label{sec:reducing_stochastic_window_mean_payoff_games}

In this section, we show that for all stochastic games \(\Game\) and for all \(\WindowLength \ge 1\), the  fixed window mean-payoff objective \(\FWMPL[\Game]\) and the bounded window mean-payoff objective \(\BWMP[\Game]\), which are prefix independent objectives, satisfy the \(\SAS\)~property of Definition~\ref{def:prop}.
Thus, by Theorem~\ref{thm:twop_to_stochastic}, we obtain bounds on the complexity and memory requirements of \(\PlayerMain\) for positive satisfaction and almost-sure satisfaction of these objectives.
The algorithms to compute the positive and the almost-sure winning regions of \(\PlayerMain\) for \(\FWMPL\)  (resp., \(\BWMP\)) objective can be obtained by instantiating Algorithms~\ref{alg:poswin} and~\ref{alg:aswin} respectively with \(\Objective\) equal to \(\FWMPL\) (resp., \(\BWMP\)).
We also show that for both these objectives, the memory requirements of \(\PlayerAdversary\) to play optimally for positive and almost-sure winning in stochastic games is no more than that of the non-stochastic games.

\subsection{Fixed window mean-payoff objective}%
\label{sub:fwmp-sas}

We show that the \(\SAS\)~property holds for the objective \(\FWMPL\) for all stochastic games \(\Game\) and for all \(\WindowLength \ge 1\).

\begin{lem}%
\label{lem:fwmp-sas}
    For all stochastic games \(\Game\) and for all \(\WindowLength \ge 1\), the objective \(\FWMPL\) satisfies the \(\SAS\)~property.
\end{lem}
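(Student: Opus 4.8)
The plan is to prove the implication of the $\SAS$ property directly: assuming $\TwoPlayerWinningRegion[\TwoPlayerGame]{\Adversary}{\FWMPLBar} = \Vertices$, I would construct an almost-sure winning strategy for $\PlayerAdversary$ in the stochastic game $\Game$, thereby establishing $\ASWinningRegion[\Game]{\Adversary}{\FWMPLBar} = \Vertices$. Since the statement quantifies over all stochastic games and all $\WindowLength \ge 1$, proving it for an arbitrary $\Game$ automatically covers every subgame (a subgame is itself a stochastic game on which $\FWMPL$ restricts to a fixed window objective), which is what Theorem~\ref{thm:twop_to_stochastic} needs. The starting point is Theorem~\ref{thm:memoryPlayer2}: since $\PlayerAdversary$ wins $\FWMPLBar$ from every vertex of the non-stochastic game $\TwoPlayerGame$, he has a finite-memory winning strategy $\Strategy[\Adversary]^{\TwoP}$ of memory size $\Memory \le \abs{\Vertices} \cdot \WindowLength$ in $\TwoPlayerGame$. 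By Lemma~\ref{lem:open_windows_guarantee_general} applied to $\TwoPlayerGame$, every infix of length $N := \Memory \cdot \abs{\Vertices} \cdot \WindowLength$ of any play of $\TwoPlayerGame$ consistent with $\Strategy[\Adversary]^{\TwoP}$ contains an open window of length $\WindowLength$.

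Next I would transfer this strategy to $\Game$. Define $\Strategy[\Adversary]^{\AS}$ so that $\PlayerAdversary$ plays in blocks of $N$ consecutive steps: at the start of each block he resets the memory of $\Strategy[\Adversary]^{\TwoP}$ to its initial state and, for the next $N$ steps, plays $\Strategy[\Adversary]^{\TwoP}$ at his own vertices, feeding the actual (possibly random) successors at probabilistic vertices as input to the Mealy machine. The resetting is sound precisely because $\Strategy[\Adversary]^{\TwoP}$ is winning from \emph{every} vertex of $\TwoPlayerGame$, so restarting it at the vertex reached at a block boundary again yields a winning play of $\TwoPlayerGame$, to which Lemma~\ref{lem:open_windows_guarantee_general} applies.

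The core probabilistic step is the following. Let $p_{\min} > 0$ be the smallest transition probability occurring at any probabilistic vertex of $\Game$. Fix a block and condition on the prefix reaching its starting vertex. If every probabilistic vertex visited during the block happens to move to exactly the successor that $\Strategy[\Adversary]^{\TwoP}$ prescribes for it in $\TwoPlayerGame$, then the resulting length-$N$ infix is a play of $\TwoPlayerGame$ consistent with $\Strategy[\Adversary]^{\TwoP}$, and hence contains an open window of length $\WindowLength$ by the lemma above. Since the block has at most $N$ steps, it visits at most $N$ probabilistic vertices, so this favourable event has conditional probability at least $p_{\min}^{N}$, a bound uniform over blocks and over histories. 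Consequently, conditioned on the entire past, each block produces an open window of length $\WindowLength$ with probability at least the fixed constant $p_{\min}^{N} > 0$.

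Finally I would conclude with the conditional (Lévy) form of the second Borel--Cantelli lemma for dependent events: since each block yields an open window of length $\WindowLength$ with conditional probability bounded below by $p_{\min}^{N} > 0$, with probability $1$ infinitely many blocks do so. A play containing infinitely many open windows of length $\WindowLength$ does not satisfy $\FWMPL$, i.e., it satisfies $\FWMPLBar$; hence $\Strategy[\Adversary]^{\AS}$ is almost-sure winning for $\FWMPLBar$ from every vertex, giving $\ASWinningRegion[\Game]{\Adversary}{\FWMPLBar} = \Vertices$ and the $\SAS$ property. I expect the main obstacle to be the bookkeeping that the block-reset strategy genuinely re-enters the consistent-play regime of Lemma~\ref{lem:open_windows_guarantee_general} at each boundary, together with making the conditional Borel--Cantelli step rigorous so that a uniform per-block success probability lifts to almost-sure infinite recurrence of open windows.
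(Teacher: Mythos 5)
Your proof is correct in substance, and it rests on exactly the two pillars the paper's own proof uses --- Theorem~\ref{thm:memoryPlayer2} (a finite-memory strategy \(\Strategy[\Adversary]^{\TwoP}\) of size \(\Memory\) winning from every vertex of \(\TwoPlayerGame\)) and Lemma~\ref{lem:open_windows_guarantee_general} --- but your reset mechanism differs. You reset the Mealy machine unconditionally every \(N = \Memory \cdot \abs{\Vertices} \cdot \WindowLength\) steps; the paper's Construction~\ref{con:reset-strategy-mealy-machine-adversary} instead resets lazily, only when the random successor drives the machine along a transition that is \emph{unreachable} under consistent play in \(\TwoPlayerGame\) (deviations that happen to coincide with some other consistent history trigger no reset). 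The lazy scheme has one payoff your version forgoes: the resulting machine uses a subset of the states of \(M_{\Adversary}^{\TwoP}\), so \(\PlayerAdversary\)'s memory in the stochastic game is no more than in the non-stochastic game, which is what backs the \(\abs{\Vertices} \cdot \WindowLength\) bound for \(\PlayerAdversary\) in Theorem~\ref{thm:posfwmp_algorithm_correctness}; your block strategy needs an additional counter modulo \(N\), so it proves the \(\SAS\) lemma (where memory size is irrelevant) but not the memory-transfer claim the paper extracts from the same construction. Two smaller points. First, inside a block, after a deviation the Mealy machine's outputs may be meaningless (e.g., not an out-neighbour of the current vertex), so your strategy needs an explicit fallback --- play arbitrarily until the block boundary; this is harmless since your analysis conditions only on deviation-free blocks, but it must be stated for the strategy to be well defined. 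Second, your concluding step via the conditional (L\'evy) form of Borel--Cantelli is sound and neatly sidesteps the independence issue the paper itself flags in a footnote to Lemma~\ref{lem:bwmp-sas}; the paper's FWMP proof instead uses the elementary geometric computation \(\sum_{i \ge 0}(1-p^{N})^{i}\, p^{N} = 1\), and you could equally just chain your uniform conditional bounds to see that the probability that \(k\) consecutive blocks all fail is at most \((1-p_{\min}^{N})^{k} \to 0\). It is worth noting that your fixed-length-phase design is essentially the structure the paper deploys for the harder \(\BWMP\) case in Lemma~\ref{lem:bwmp-sas}.
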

\begin{proof}
    We need to show that 
    if \(\TwoPlayerWinningRegion[\TwoPlayerGame]{\Adversary}{\FWMPLBar} = \Vertices\), then  \(\ASWinningRegion[\Game]{\Adversary}{\FWMPLBar} = \Vertices\). 
    
    If \(\TwoPlayerWinningRegion[\TwoPlayerGame]{\Adversary}{\FWMPLBar} = \Vertices\), 
    then from Theorem~\ref{thm:memoryPlayer2}, there exists a finite-memory strategy \(\Strategy[\Adversary]^{\TwoP}\) (say, with memory \(\Memory\)) of \(\PlayerAdversary\) that is winning for objective \(\FWMPLBar\) from every vertex in \(\TwoPlayerGame\).
    Given such a strategy, we construct below a strategy \(\Strategy[\Adversary]^{\AS}\) of \(\PlayerAdversary\) in the stochastic game \(\Game\) that is almost-sure winning for \(\FWMPLBar\) from every vertex in \(\Game\).
    
    In \(\TwoPlayerGame\), \(\PlayerAdversary\) chooses the successor of vertices in \(\VerticesAdversary \union \VerticesRandom\) according to the strategy~\(\Strategy[\Adversary]^{\TwoP}\). 
    Since \(\Strategy[\Adversary]^{\TwoP}\) is a winning strategy, \(\PlayerAdversary\) can satisfy the \(\FWMPLBar\) objective irrespective of \(\PlayerMain\)'s strategy.
    In the stochastic game \(\Game\), however, \(\PlayerAdversary\) has less control.
    He can only choose successors for vertices in~\(\VerticesAdversary\), while the successors for vertices in~\(\VerticesRandom\) are chosen according to the probability function~\(\ProbabilityFunction\) that is specified in the game. 
    It is possible that for a probabilistic vertex, the successor chosen by the distribution is not what \(\PlayerAdversary\) would have chosen, resulting in a potentially worse outcome for him.
    We use the fact that \(\FWMPLBar\) is a \Buchi-like objective (\(\PlayerAdversary\) would like to \emph{always eventually} see an open window  of length~\(\WindowLength\)), and that \(\Strategy[\Adversary]^{\TwoP}\) is winning from every vertex to show that despite having control over fewer vertices, \(\PlayerAdversary\) has a strategy \(\Strategy[\Adversary]^{\AS}\) that is almost-sure winning for the \(\FWMPLBar\) objective from every vertex in the stochastic game~\(\Game\).
    
    Let \(\Play = v_0 v_1 \cdots\) be an outcome in the stochastic game \(\Game\) when \(\PlayerAdversary\) follows the strategy \(\Strategy[\Adversary]^{\TwoP}\), i.e., for all \(v_i\) in \(\Play\) such that \(v_i \in \VerticesAdversary\), we have that \(v_{i+1} = \Strategy[\Adversary]^{\TwoP}(v_0v_1\cdots v_i)\). 
    For all probabilistic vertices \(v_j \in \VerticesRandom\) in the play \(\Play\), if the successor vertex of \(v_j\) chosen by the probability distribution is not equal to the successor vertex \(\Strategy[\Adversary]^{\TwoP}(v_0 v_1 \cdots v_j)\) of \(v_j\) given by the strategy \(\Strategy[\Adversary]^{\TwoP}\), 
    i.e., if \(v_{j+1} \ne \Strategy[\Adversary]^{\TwoP}(v_0 v_1 \cdots v_j)\), then we say that a \emph{deviation} from the strategy \(\Strategy[\Adversary]^{\TwoP}\) occurs in~\(\Play\) at \(v_j\).
    Note that the prefix \(v_0v_1\cdots v_j v_{j+1}\) with the deviation does never appears in any play in \(\TwoPlayerGame\) that is consistent with \(\Strategy[\Adversary]^{\TwoP}\).
    
    Some deviations may cause the outcome to be losing for \(\PlayerAdversary\). 
    Therefore, starting with the strategy \(\Strategy[\Adversary]^{\TwoP}\), we construct a strategy \(\Strategy[\Adversary]^{\AS}\) that mimics \(\Strategy[\Adversary]^{\TwoP}\) as long as no such deviations occur, and \emph{resets} otherwise, i.e., the strategy forgets the prefix of the play before the deviation. 
    We call the strategy \(\Strategy[\Adversary]^{\AS}\) a \emph{reset strategy}.
    We see in Construction~\ref{con:reset-strategy-mealy-machine-adversary}, given a Mealy machine \(M_\Adversary^{\TwoP}\) that defines \(\Strategy[\Adversary]^{\TwoP}\), how to construct a Mealy machine \(M_{\Adversary}^{\AS}\) that defines the reset strategy~\(\Strategy[\Adversary]^{\AS}\).
    In the construction, we show that the memory size of \(\Strategy[\Adversary]^{\AS}\) is no more than that of \(\Strategy[\Adversary]^{\TwoP}\). 
    Therefore, all games with objective \(\FWMPL\) satisfy the \(\SAS\) property.
\end{proof}

In Example~\ref{exa:reset-strategy-example}, we see an example of a stochastic game \(\Game\) along with a strategy \(\Strategy[\Adversary]^{\TwoP}\) that is winning for \(\PlayerAdversary\) from all vertices in the adversarial game \(\TwoPlayerGame\).
We show that this strategy \(\Strategy[\Adversary]^{\TwoP}\) need not be almost-sure winning for \(\PlayerAdversary\) in the stochastic game \(\Game\) and then give an intuition on how to use \(\Strategy[\Adversary]^{\TwoP}\) to construct a reset strategy \(\Strategy[\Adversary]^{\AS}\) that is almost-sure winning from all vertices in \(\Game\).
In Construction~\ref{con:reset-strategy-mealy-machine-adversary}, we formally show how to obtain a Mealy machine that defines \(\Strategy[\Adversary]^{\AS}\) from a Mealy machine that defines \(\Strategy[\Adversary]^{\TwoP}\) without adding any new states. 

\begin{exa}%
\label{exa:reset-strategy-example}
    \figurename~\ref{fig:reset-strategy-running-example-game} shows a stochastic game \(\Game\) with objective \(\overline{\FWMP(3)}\) for \(\PlayerAdversary\).
    The edges \((v_2, v_4)\) and \((v_3, v_5)\) have negative payoffs and all other edges have zero payoff.
    For each probabilistic vertex \(v \in \VerticesRandom\), we have that the probability function \(\ProbabilityFunction(v)\) is a uniform distribution, 
    i.e., we have:
    \(\ProbabilityFunction(v_2)(v_4) = \ProbabilityFunction(v_2)(v_5) = \frac{1}{2}\), 
    \(\ProbabilityFunction(v_3)(v_4) = \ProbabilityFunction(v_3)(v_5) = \frac{1}{2}\), 
    and 
    \(\ProbabilityFunction(v_6)(v_7) =  \ProbabilityFunction(v_6)(v_8) = \frac{1}{2}\). 
    
    \figurename~\ref{fig:reset-strategy-example-mealy-machine-input} shows a Mealy machine \(M_{\Adversary}^{\TwoP}\) defining a strategy \(\Strategy[\Adversary]^{\TwoP}\) that is winning for \(\overline{\FWMP(3)}\) from all vertices in the adversarial game \(\TwoPlayerGame\). 
    In figures, for states \(q_i\), \(q_j\) of the Mealy machine and for vertices \(v\), \(v'\) of the game, an edge from state \(q_i\) to \(q_j\) with label \(v / v'\) denotes that the next state of the Mealy machine is \(\Delta(q_i, v) = q_j\) and the next vertex is \(\delta(q_i, v) = v'\).
    To see that \(\Strategy[\Adversary]^{\TwoP}\) is a winning strategy, note that each time the token reaches \(v_1\), we have that \(\PlayerMain\) may move the token from \(v_1\) to either \(v_2\) or \(v_3\). 
    The strategy \(\Strategy[\Adversary]^{\TwoP}\) moves the token 
    from \(v_2\) and \(v_3\) to \(v_4\) and \(v_5\) respectively, ensuring that a window opens.
    Then, when the token reaches \(v_6\), the strategy always moves the token to \(v_8\) and then back to \(v_1\).
    The Mealy machine never moves the token to \(v_7\).
    If the game begins in \(v_7\), then the token is moved to \(v_8\) and then to \(v_1\) and the token never goes to \(v_7\) after that.
    Each time the token reaches \(v_1\), the token must move to \(v_2\) or \(v_3\). 
    Since there are no edges with positive payoff, the window starting at \(v_2\) or \(v_3\) never closes, and in particular, remains open for \(3\) steps. 
    Since the token reaches \(v_1\) infinitely often, by following this strategy, \(\PlayerAdversary\) ensures for all strategies of \(\PlayerMain\), the outcome contains infinitely many open windows of length \(3\) and thus, the strategy \(\Strategy[\Adversary]^{\TwoP}\) is winning for \(\PlayerAdversary\) for \(\overline{\FWMP(3)}\) in \(\TwoPlayerGame\). 
    
    \begin{figure}[t]
        \centering
        \scalebox{0.9}{
        \begin{tikzpicture}
            \node[state] (v1) {\(v_1\)};
            \node[random, draw, above right of=v1, yshift=-.4cm] (v2) {\(v_2\)};
            \node[random, draw, below right of=v1, yshift=.4cm] (v3) {\(v_3\)};
            \node[square, draw, right of=v2] (v4) {\(v_4\)};
            \node[square, draw, right of=v3] (v5) {\(v_5\)};
            \node[random, draw, below right of=v4, yshift=.3cm] (v6) {\(v_6\)};
            \node[square, draw, right of=v6] (v7) {\(v_7\)};
            \node[square, draw, below of=v6] (v8) {\(v_8\)};
            \draw 
                  (v1) edge[below right] node[yshift=.1cm]{} (v2)
                  (v1) edge[above right] node[yshift=-.1cm]{} (v3)
                  (v2) edge[above] node{\(-1\)} (v4)
                  (v2) edge[below] node[below right, yshift=-.2cm]{} (v5)
                  (v3) edge[below] node[left]{} (v4)
                  (v3) edge[below] node{\(-1\)} (v5)
                  (v4) edge[below] node[above]{} (v6)
                  (v5) edge[below] node{} (v6)
                  (v6) edge[below] node{} (v7)
                  (v6) edge[right] node{} (v8)
                  (v7) edge[below] node{} (v8)
                  (v8) edge[below, bend left=40] node{} (v1)
                  (v8) edge[loop right] node{} (v8)
                ;
        \end{tikzpicture}
        }
        \caption{The game \(\Game\) with objective \(\overline{\FWMP(3)}\) for \(\PlayerAdversary\) from Example~\ref{exa:reset-strategy-example}. All edges except \((v_2, v_4)\) and \((v_3, v_5)\) have payoff \(0\).
        For all probabilistic vertices \(v \in \VerticesRandom\), the probability function \(\ProbabilityFunction(v)\) is a uniform distribution over the out-neighbours \(E(v)\) of \(v\).
        }
        \label{fig:reset-strategy-running-example-game}
    \end{figure}
    \begin{figure}[t]
        \centering
        \scalebox{0.8}{
        \begin{tikzpicture}
            \node[state, initial] (q0) {\(q_0\)};
            \node[state, right of=q0] (q1) {\(q_1\)};
            \node[state, right of=q1, xshift=+15mm] (q2) {\(q_2\)};
            \node[state, right of=q2, xshift=+15mm] (q3) {\(q_3\)};
            \node[state, right of=q3] (q4) {\(q_4\)};
            \node[state, right of=q4] (q5) {\(q_5\)};
            \draw 
                  (q0) edge[above] node{\(v_1 / \epsilon\)} (q1)
                  (q0) edge[loop above] node{\(v_8/v_1\)} (q0)
                  (q0) edge[above, bend right=29, align=center, pos=0.55] node {\(v_2/v_4\), \(v_3/v_5\)} (q2)
                  (q0) edge[above, bend right=33, align=center, pos=0.55] node {\(v_4/v_6\), \(v_5/v_6\)} (q3)
                  (q0) edge[above, bend right=41, align=center, pos=0.55] node {\(v_6/v_8\), \(v_7/v_8\)} (q4)
                  (q1) edge[above, align=center] node {\(v_2/v_4\), \(v_3/v_5\)} (q2)
                  (q2) edge[above, align=center] node {\(v_4/v_6\), \(v_5/v_6\)} (q3)
                  (q3) edge[above] node{\(v_6/ v_8\)} (q4)
                  (q4) edge[above] node{\(v_7/v_8\)} (q5)
                  (q4) edge[below, bend right=37, pos=0.5] node{\(v_8/v_1\)} (q0)
                  (q5) edge[loop right] node{\(v_8/v_8\)} (q5)
                ;
        \end{tikzpicture}
        }
        \caption{Mealy machine \(M_{\Adversary}^{\TwoP}\) defining a strategy \(\Strategy[\Adversary]^{\TwoP}\) that is winning from all vertices in \(\TwoPlayerGame\) for \(\overline{\FWMP(3)}\).}
        \label{fig:reset-strategy-example-mealy-machine-input}
    \end{figure}
    
    Observe that the strategy \(\Strategy[\Adversary]^{\TwoP}\) is not almost-sure winning for \(\PlayerAdversary\) from any vertex in the stochastic game \(\Game\). 
    If \(\PlayerAdversary\) follows the strategy \(\Strategy[\Adversary]^{\TwoP}\) in \(\Game\), then the probability that he wins \(\overline{\FWMP(3)}\) is less than 1. 
    This is because when the token reaches \(v_6\), then with probability \(\frac{1}{2}\), it moves to \(v_7\). Once that happens, the state of the Mealy machine changes to~\(q_5\) and the strategy moves the token to \(v_8\) and keeps it there forever.
    No new windows open, and the outcome is losing for \(\overline{\FWMP(3)}\).
    Hence, if \(\PlayerAdversary\) follows the strategy \(\Strategy[\Adversary]^{\TwoP}\) in~\(\Game\), then with positive probability, he does not win \(\overline{\FWMP(3)}\). 
    
    This is not an issue in the non-stochastic game \(\TwoPlayerGame\) since the \(q_4 \xrightarrow{v_7} q_5\) transition is never taken in \(\TwoPlayerGame\) as long as \(\PlayerAdversary\) plays according to \(\Strategy[\Adversary]^{\TwoP}\).
    If \(\PlayerAdversary\) plays according to \(\Strategy[\Adversary]^{\TwoP}\), then in any transition that changes the state of the Mealy machine to \(q_4\), the token moves to~\(v_8\) by that transition. 
    There does not exist any prefix \(\Prefix\) in \(\TwoPlayerGame\) that is consistent with \(\Strategy[\Adversary]^{\TwoP}\) that causes the Mealy machine to take the \(q_4 \xrightarrow{v_7} q_5\) transition. 
    We call such transitions that cannot be taken in \(M_{\Adversary}^{\TwoP}\) in \(\TwoPlayerGame\) when \(\PlayerAdversary\) plays according to \(\Strategy[\Adversary]^{\TwoP}\) \emph{unreachable}. 
    We see that there are no reachable transitions that change the state of \(M_{\Adversary}^{\TwoP}\) to \(q_5\), and thus, the outgoing transition \(q_5 \xrightarrow{v_8} q_5\) from \(q_5\) is also unreachable.
    One can verify that all transitions in \(M_{\Adversary}^{\TwoP}\) other than the two mentioned above are reachable in \(\TwoPlayerGame\).
    
    If \(\PlayerAdversary\) follows the strategy \(\Strategy[\Adversary]^{\TwoP}\) in the stochastic game \(\Game\), then a deviation may occur at a vertex in \(\VerticesRandom\) and the Mealy machine \(M_{\Adversary}^{\TwoP}\) may follow a transition that is otherwise unreachable in \(\TwoPlayerGame\).
    If an unreachable transition is taken, then we cannot guarantee that the output of the Mealy machine will result in a play that is winning for \(\PlayerAdversary\).
    For example, when the token moves from \(v_6\) to \(v_7\), the Mealy machine takes the unreachable \(q_4 \xrightarrow{v_7} q_5\) transition, which as we saw above results in an outcome that is not winning for the \(\overline{\FWMP(3)}\) objective.
    
    Since we do not know how the Mealy machine \(M_{\Adversary}^{\TwoP}\) behaves on taking unreachable transitions, we design the Mealy machine \(M_{\Adversary}^{\AS}\) that defines the almost-sure winning strategy in \(\Game\)
    to reset instead of taking an unreachable transition. 
    For instance, we define the transition in \(M_{\Adversary}^{\AS}\) from state~\(q_4\) on reading vertex \(v_7\) to be as if the game started from \(v_7\).
    The Mealy machine \(M_{\Adversary}^{\TwoP}\) on reading \(v_7\) from the initial state, outputs \(v_8\) and changes its state to \(q_4\). 
    Therefore, we want the same behaviour in \(M_{\Adversary}^{\AS}\) from~\(q_4\), i.e.,  on reading vertex~\(v_7\) from~\(v_4\), the Mealy machine outputs~\(v_8\) and update its state to~\(q_4\). 
    We add the necessary reset transitions in this manner for all states. 
    For every state~\(q\), for every vertex \(v\), if there is a reachable transition \(q \xrightarrow{v} q'\) from~\(q\) on reading~\(v\), then we retain the same transition in \(M_{\Adversary}^{\AS}\), i.e., with the same output and same next state.
    If there is no reachable transition from~\(q\) on reading~\(v\), then we add a reset transition. 
    We go to the same state and output the same vertex that would be output from~\(q_0\) on reading~\(v\). 
    (For all vertices in \(\VerticesRandom\), we change the output of the Mealy machine to \(\epsilon\) since \(\PlayerAdversary\) does not control these vertices in the stochastic game \(\Game\).)
    This gives us a complete Mealy machine. 
    
    \begin{figure}[t]
        \centering
        \scalebox{0.8}{
        \begin{tikzpicture}
            \node[state, initial] (q0) {\(q_0\)};
            \node[state, right of=q0] (q1) {\(q_1\)};
            \node[state, right of=q1, xshift=+15mm] (q2) {\(q_2\)};
            \node[state, right of=q2, xshift=+15mm] (q3) {\(q_3\)};
            \node[state, right of=q3, xshift=+15mm] (q4) {\(q_4\)};
            \draw 
                  (q0) edge[above] node{\(v_1 / \epsilon\)} (q1)
                  (q0) edge[loop above] node{\(v_8/v_1\)} (q0)
                  (q0) edge[above, bend right=27, align=center, pos=0.6] node {\(v_2/\epsilon\), \(v_3/\epsilon\)} (q2)
                  (q0) edge[above, bend right=30, align=center, pos=0.6] node {\(v_4/v_6\), \(v_5/v_6\)} (q3)
                  (q0) edge[above, bend right=36, align=center, pos=0.6] node {\(v_6/\epsilon\), \(v_7/v_8\)} (q4)
                  (q1) edge[above, align=center] node {\(v_2/\epsilon\), \(v_3/\epsilon\)} (q2)
                  (q2) edge[above, align=center] node {\(v_4/v_6\), \(v_5/v_6\)} (q3)
                  (q3) edge[below] node{\(v_6/ \epsilon\)} (q4)
                  (q4) edge[below, bend right=42, pos=0.75] node{\(v_8/v_1\)} (q0)
                  (q4) edge[below, bend right=41, pos=0.7] node{\(v_1/\epsilon\)} (q1)
                  (q4) edge[below, bend right=41] node{\(v_2/\epsilon, v_3/\epsilon\)} (q2)
                  (q4) edge[below, bend right=45, pos=0.5] node[yshift=-0.3mm]{\(v_4/v_6\), \(v_5/v_6\)} (q3)
                  (q4) edge[loop below] node{\(v_6/\epsilon\), \(v_7/v_8\)} (q4)
                ;
        \end{tikzpicture}
        }
        \caption{Part of the Mealy machine \(M_{\Adversary}^{\AS}\) defining a reset strategy that is almost-sure winning from all vertices in \(\Game\).
        Reset transitions out of \(q_1\), \(q_2\), and \(q_3\) have been omitted from the figure.}
        \label{fig:reset-strategy-example-mealy-machine-output}
    \end{figure}
    
    Finally, after defining the new transitions, we see that there are no transitions that lead to~\(q_5\). 
    It is an unreachable state and we delete it. 
    \figurename~\ref{fig:reset-strategy-example-mealy-machine-output} shows some of the transitions in the Mealy machine \(M_{\Adversary}^{\AS}\) obtained after the resetting. 
    The figure excludes all unreachable transitions, and shows reset transitions out of \(q_4\). 
    There exist reset transitions out of states \(q_1\), \(q_2\) and \(q_3\) as well, but we omit them in the figure for the sake of clarity. 
    \qed
\end{exa}

Now, we formally state a procedure to construct an almost-sure winning strategy in \(\Game\) from a given winning strategy in \(\TwoPlayerGame\) and show the correctness of this procedure.

\begin{construction}[Reset strategy]%
\label{con:reset-strategy-mealy-machine-adversary}
    Let \(\Strategy[\Adversary]^{\TwoP}\) be a strategy of \(\PlayerAdversary\) that is winning for \(\FWMPLBar\) from every vertex in \(\TwoPlayerGame\), and let \(M_{\Adversary}^{\TwoP} = (Q_\Adversary^{\TwoP}, q_0,  \Vertices, \Vertices \union \{\epsilon\}, \Delta_{\Adversary}^{\TwoP}, \delta_{\Adversary}^{\TwoP})\) be a Mealy machine that defines the strategy \(\Strategy[\Adversary]^{\TwoP}\), 
    where the set of states is \(Q_\Adversary^{\TwoP}\),
    the initial state is \(q_0\),
    the input alphabet is \(\Vertices\), 
    the output alphabet is \(\Vertices \union \{\epsilon\}\),
    the transition function is \(\Delta_\Adversary^{\TwoP} \colon Q_\Adversary^{\TwoP} \times \Vertices \to Q_\Adversary^{\TwoP}\), and the output function is \(\delta_\Adversary^{\TwoP} \colon Q_\Adversary^{\TwoP} \times \Vertices \to \Vertices \union \{\epsilon\}\). 
    Since the strategy \(\Strategy[\Adversary]^{\TwoP}\) is winning for \(\FWMPLBar\) in \(\TwoPlayerGame\) irrespective of the initial vertex of the game,  the transition and output functions are defined from the initial state of the Mealy machine \(M_{\Adversary}^{\TwoP}\) for all vertices in the game, that is, \(\Delta_{\Adversary}^{\TwoP}(q_0, v)\)  and \(\delta_{\Adversary}^{\TwoP}(q_0, v)\) are defined for all \(v \in \Vertices\). 
    
    From this, we give a construction of a Mealy machine \(M_\Adversary^{\AS} = (Q_\Adversary, q_0, \Vertices, \Vertices \union \{\epsilon\}, \Delta_\Adversary, \delta_\Adversary)\) with \(Q_{\Adversary} \subseteq Q_{\Adversary}^{\TwoP}\) that defines the reset strategy \(\Strategy[\Adversary]^{\AS}\) for \(\PlayerAdversary\) in the stochastic game \(\Game\). 
    We begin with all the states of \(M_{\Adversary}^{\TwoP}\) and then over the course of the construction, delete some states that are not needed.
    The initial state of \(M_{\Adversary}^{\AS}\) is \(q_0\), the same as the initial state of \(M_{\Adversary}^{\TwoP}\).
    The input and output alphabets of \(M_{\Adversary}^{\AS}\) are the same as that of \(M_{\Adversary}^{\TwoP}\). 
    It remains to define the transition function \(\Delta_{\Adversary}^{\TwoP}\) and the update function \(\delta_{\Adversary}^{\TwoP}\). 
    
    We begin by computing all the transitions in \(M_{\Adversary}^{\TwoP}\) that are reachable from the initial state~\(q_0\).
    For all states \(q_1, q_2 \in Q_{\Adversary}^{\TwoP}\) and vertices \(v \in \Vertices\),  the transition  \(q_1 \xrightarrow{v} q_2 \) is \emph{reachable} in~\(M_{\Adversary}^{\TwoP}\) from~\(q_0\) if there exists a prefix \(\Prefix \cdot v\) in \(\TwoPlayerGame\) consistent with \(\Strategy[\Adversary]^{\TwoP}\) such that \(\hat{\Delta}_{\Adversary}^{\TwoP}(q_0, \Prefix) = q_1\) and \(\Delta_{\Adversary}^{\TwoP}(q_1, v) = q_2\).
    
    We have that for all \(v \in \Vertices\), the transition 
    \(q_0 \xrightarrow{v} \Delta_{\Adversary}^{\TwoP}(q_0, v)\)
    is reachable. 
    Moreover, for all transitions \(q \xrightarrow{v} q'\) that are reachable, we have the following:
    \begin{itemize}
        \item  if \(v \in \VerticesAdversary\), then 
        the transition from \(q'\) on input 
    \(\delta_{\Adversary}^{\TwoP}(q, v)\) is also reachable;
        \item if \(v \in \VerticesMain\), then for all vertices \(v' \in \Edges(v)\), the transition from \(q\) on input \(v'\) is also reachable.
    \end{itemize}
    \noindent %FIXED indent
    Since we do not know how \(M_{\Adversary}^{\TwoP}\) behaves along unreachable transitions, we exclude unreachable transitions in \(M_{\Adversary}^{\AS}\) and add reset transitions which we define now.
    
    For all \(q \in Q_{\Adversary}^{\TwoP}\) and all \(v \in \Vertices\), we define the transition function~\(\Delta_{\Adversary}\):
    \[
        \Delta_{\Adversary}(q, v) = 
        \begin{cases}
            \Delta_{\Adversary}^{\TwoP}(q, v)  & \text{if there exists } q' \in Q_{\Adversary}^{\TwoP} \text{ such that } q \xrightarrow{v} q' \text{ is reachable,} \\ 
            \Delta_{\Adversary}^{\TwoP}(q_0, v) & \text{otherwise.}
        \end{cases}
    \]
    The Mealy machine on a reset transition behaves in the way it would on reading \(v\) if it were in the initial state \(q_0\), that is, if the game began from \(v\). 
    This effectively resets the state of the Mealy machine. 
    For all \(q \in Q_{\Adversary}^{\TwoP}\) and all \(v \in \Vertices\), we define the output function \(\delta_{\Adversary}\):
    \[
        \delta_{\Adversary}(q, v) = 
        \begin{cases}
            \epsilon  & \text{if } v \in \VerticesRandom \union \VerticesMain, \\ 
            \delta_{\Adversary}^{\TwoP}(q, v)  & \text{if } v \in \VerticesAdversary \text{ and there exists } q' \in Q_{\Adversary}^{\TwoP} \text{ such that } q \xrightarrow{v} q' \text{ is reachable,} \\ 
            \delta_{\Adversary}^{\TwoP}(q_0, v) & \text{otherwise.}
        \end{cases}
    \]
    If \(v \in \VerticesRandom \union \VerticesMain\), then we have \(\delta_{\Adversary}(q, v)\) equal to \(\epsilon\) since \(\Strategy[\Adversary]^{\AS}\) is a \(\PlayerAdversaryDash\) strategy in the stochastic game \(\Game\) and is not defined for prefixes ending in vertices from \(\VerticesRandom \union \VerticesMain\).
    Otherwise, if \(v \in \VerticesAdversary\), then \(\delta_{\Adversary}(q, v)\) is defined in a similar manner as \(\Delta_{\Adversary}(q, v)\). 
    
    A state \(q'\) is unreachable if there does not exist \(q \in Q_{\Adversary}^{\TwoP}\) and \(v \in \Vertices\) such that the transition \(q \xrightarrow{v} q'\) is reachable.
    We delete the unreachable states so we have \(Q_{\Adversary} \subseteq Q_{\Adversary}^{\TwoP}\).
    Since the unreachable states do not have incoming reachable transitions, this does not delete any transition that is reachable. 
    Since the set of states in \(M_{\Adversary}^{\AS}\) is a subset of the set of states in \(M_{\Adversary}^{\TwoP}\), the memory size of \(\Strategy[\Adversary]^{\AS}\) is no greater than the memory size of \(\Strategy[\Adversary]^{\TwoP}\).
    This completes the construction of the Mealy machine \(M_{\Adversary}^{\AS}\) defining the reset strategy \(\Strategy[\Adversary]^{\AS}\).
    
    We now show that this strategy is almost-sure winning for \(\PlayerAdversary\) from all vertices. 
    For all \(q \in Q_{\Adversary}\) and \(v \in \Vertices\), the output function \(\Delta_{\Adversary}(q, v)\) and the transition function \(\delta_{\Adversary}(q, v)\) are defined, and hence, \(M_{\Adversary}^{\AS}\) is a complete Mealy machine. 
    Moreover, from the way \(M_{\Adversary}^{\AS}\) is constructed, we see that every transition in \(M_{\Adversary}^{\AS}\) is either reachable in \(M_{\Adversary}^{\TwoP}\) or is a reset transition.
    
    Suppose \(\PlayerAdversary\) plays in the stochastic game \(\Game\) according to \(M_{\Adversary}^{\AS}\) and this results in the prefix \(\Prefix \cdot v\) for some \(v \in \Vertices\).
    Let \(\Prefix' \cdot v\) be the infix obtained from \(\Prefix \cdot v\) by removing all vertices until the last occurrence of a reset transition in  \(M_{\Adversary}^{\AS}\). 
    In particular, if no reset transition occurs in \(M_{\Adversary}^{\AS}\) on reading \(\Prefix \cdot v\), then \(\Prefix' \cdot v\) is equal to \(\Prefix \cdot v\). 
    From the definition of resetting, we have that starting from the initial state \(q_0\) of \(M_{\Adversary}^{\AS}\), both \(\Prefix' \cdot v\) and \(\Prefix \cdot v\) take \(M_{\Adversary}^{\AS}\) to the same state, i.e., \(\hat{\Delta}_{\Adversary}(q_0, \Prefix' \cdot v) = \hat{\Delta}_{\Adversary}(q_0, \Prefix \cdot v) \).
    
    Since no reset transitions occur in \(M_{\Adversary}^{\AS}\) on reading \(\Prefix' \cdot v\), all transitions that occur are reachable if the Mealy machine \(M_{\Adversary}^{\TwoP}\) is used for the stochastic game \(\Game\).
    Therefore, on reading \(\Prefix' \cdot v\), the sequence of states visited in~\(M_{\Adversary}^{\TwoP}\) is the same as the sequence of states visited in \(M_{\Adversary}^{\AS}\). 
    In particular, the state of \(M_{\Adversary}^{\TwoP}\) on reading \(\Prefix' \cdot v\) is the same as the state of \(M_{\Adversary}^{\AS}\) on reading \(\Prefix' \cdot v\), which is also the same as the state of \(M_{\Adversary}^{\AS}\) on reading \(\Prefix \cdot v\).
    Thus, we have that \(\hat{\Delta}^{\TwoP}_{\Adversary}(q_0, \Prefix' \cdot v) = \hat{\Delta}_{\Adversary}(q_0, \Prefix \cdot v) \).
    Note that \(\Prefix' \cdot v\)  may contain deviations that \(M_{\Adversary}^{\AS}\) does not reset on.
    For instance, in Example~\ref{exa:reset-strategy-example}, if the token is on \(v_2\), then with positive probability, it moves to \(v_5\). 
    This is a deviation as \(M_{\Adversary}^{\TwoP}\) never moves the token from~\(v_2\) to~\(v_5\). 
    However, in doing so, the Mealy machine \(M_{\Adversary}^{\AS}\) does not follow any unreachable transition and does not reset. 
    Note that both \(M_{\Adversary}^{\AS}\) and \(M_{\Adversary}^{\TwoP}\), on reading the prefix \(v_1 v_2 v_5\) with a deviation, reach the same state \(q_3\).
    
    Given \(\Prefix' \cdot v\),  
    there exists a finite path \(\Prefix'' \cdot v\) of vertices without any deviations such that \(\hat{\Delta}_{\Adversary}^{\TwoP}(q_0, \Prefix' \cdot v) = \hat{\Delta}_{\Adversary}^{\TwoP}(q_0, \Prefix'' \cdot v) \).
    This is because the transition from state \(\hat{\Delta}_{\Adversary}^{\TwoP}(q_0, \Prefix') \) on input \(v\) is reachable in \(M_{\Adversary}^{\TwoP}\).
    Corresponding to the prefix \(v_1 v_2 v_5\) in Example~\ref{exa:reset-strategy-example}, we have that \(v_1 v_3 v_5\) is a finite path without deviations that takes \(M_{\Adversary}^{\AS}\) to the same state as \(v_1 v_2 v_5\).
    
    Thus, for every prefix \(\Prefix \cdot v\) of an outcome of~\(\Game\), there exists a finite path \(\Prefix'' \cdot v\) without deviations such that  \(\hat{\Delta}_{\Adversary}(q_0, \Prefix \cdot v) = \hat{\Delta}_{\Adversary}^{\TwoP}(q_0, \Prefix'' \cdot v) \).
    As long as no deviations occur,  the sequence of vertices seen after \(\Prefix \cdot v\) is the same irrespective of whether \(\PlayerAdversary\) uses the strategy \(\Strategy[\Adversary]^{\AS}\) or \(\Strategy[\Adversary]^{\TwoP}\).
    If a play in \(\Game\) continues for \(\Memory \cdot \abs{\Vertices} \cdot \WindowLength\) steps without deviating, then by Lemma~\ref{lem:open_windows_guarantee_general}, it contains an open window of length \(\WindowLength\). 
    From any point in the play, the probability that \(\Strategy[\Adversary]^{\AS}\) successfully copies \(\Strategy[\Adversary]^{\TwoP}\) for \(i\) steps   (that is, no deviations occur) is at least \(p^{i}\), where \(p\) is the minimum probability over all the edges in \(\Game\).
    It follows that from every point in the play, the probability that an open window of length \(\WindowLength\) occurs in the next \(\Memory \cdot \abs{\Vertices} \cdot \WindowLength\) steps is at least \(p^{\Memory \cdot \abs{\Vertices} \cdot \WindowLength}\). 
    Therefore, from every position in the play, the probability that an open window of length \(\WindowLength\) occurs eventually is at least \(\sum_{i \ge 0} (1 - p^{\Memory \cdot \abs{\Vertices} \cdot \WindowLength})^i \cdot p^{\Memory \cdot \abs{\Vertices} \cdot \WindowLength} = 1\).
    Thus, with probability~1, infinitely many open windows of length \(\WindowLength\) occur in the outcome, and the outcome satisfies \(\FWMPLBar\).
    Thus, all vertices in \(\Game\) are almost-sure winning for \(\PlayerAdversary\) for \(\FWMPLBar\).
    This concludes the construction of a reset strategy that is almost-sure winning for \(\PlayerAdversary\) from all vertices in the stochastic game.
    \qed
\end{construction}

We now construct a strategy \(\Strategy[\Adversary]^{\Pos}\) of \(\PlayerAdversary\) that is positive winning from all vertices in \(\PosWinningRegion[\Game]{\Adversary}{\FWMPLBar}\).
Let \(\WinningSet[i]{\Adversary}\) and \(\PosAttractorSet[i]{\Adversary}\) denote the sets \(\WinningSet{\Adversary}\) and \(\PosAttractorSet{\Adversary}\) computed in the \(i^{\text{th}}\)~recursive call of \(\ASWin_{\FWMPL}\) algorithm respectively. 
Here, \(\ASWin_{\FWMPL}\) is the algorithm obtained by instantiating \(\Objective\) to \(\FWMPL\) in Algorithm~\ref{alg:aswin}.
If the token is in \(\bigcup_i \WinningSet[i]{\Adversary}\), then \(\Strategy[\Adversary]^{\Pos}\) mimics \(\Strategy[\Adversary]^{\AS}\);  if the token is in \(\bigcup_i \PosAttractorSet[i]{\Adversary} \setminus \WinningSet[i]{\Adversary}\), then \(\Strategy[\Adversary]^{\Pos}\) is a positive-attractor strategy to \(\WinningSet[i]{\Adversary}\) which is memoryless.
Then, \(\Strategy[\Adversary]^{\Pos}\) is a positive winning strategy for \(\PlayerAdversary\) from all vertices in \( \PosWinningRegion[\Game]{\Adversary}{\FWMPL}\).

We have shown that for two-player stochastic games with \(\FWMPL\) objective, the memory requirements of optimal strategies of both players is no greater than that for non-stochastic games with the same objective.
\begin{rem}%
\label{rem:deterministic-strategy-reset-strategy}
    All plays consistent with the reset strategy of \(\PlayerAdversary\) described in Construction~\ref{con:reset-strategy-mealy-machine-adversary} are winning for \(\PlayerAdversary\).
    Thus, the reset strategy continues to be almost-sure winning even when \(\PlayerMain\) uses randomized strategies. 
    Since the reset strategy is a deterministic strategy, we have that deterministic strategies suffice for \(\PlayerAdversary\) for the positive and almost-sure winning of the \(\FWMPLBar\) objective.
\end{rem}
From~\cite{CDRR15}, we have that the satisfaction problem for the \(\FWMPL\) objective in non-stochastic games is in \(\PTime\).
Thus, from Theorem~\ref{thm:twop_to_stochastic}, Corollary~\ref{cor:quantitative_satisfaction_complexity}, and Lemma~\ref{lem:fwmp-sas}, we have the following.

\begin{thm}%
\label{thm:posfwmp_algorithm_correctness}
    Given a stochastic game \(\Game\), a window length \(\WindowLength \ge 1\), and a threshold \(p \in [0,1]\), for \(\FWMPL[\Game]\), the 
    positive and almost-sure satisfaction problems for \(\PlayerMain\) 
    are in \(\PTime\), and the quantitative satisfaction problem is in \(\NP \cap \coNP\).
    Moreover for optimal strategies, memory of size \(\WindowLength\) is sufficient for \(\PlayerMain\) and memory of size \(\abs{V} \cdot \WindowLength\) 
    is sufficient for \(\PlayerAdversary\).
\end{thm}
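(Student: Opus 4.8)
The plan is to assemble the machinery already developed rather than prove anything from scratch: the statement follows by feeding the polynomial-time non-stochastic solver of \cite{CDRR15} and the \(\SAS\)~property of \(\FWMPL\) into the generic reduction of Theorem~\ref{thm:twop_to_stochastic} and Corollary~\ref{cor:quantitative_satisfaction_complexity}, then reading off the memory bounds for both players. First I would verify the hypothesis of Theorem~\ref{thm:twop_to_stochastic}, namely that \(\FWMPL\) satisfies the \(\SAS\)~property in \emph{every} subgame of \(\Game\), not just in \(\Game\) itself. This is immediate from Lemma~\ref{lem:fwmp-sas}: any subgame \(\Game'\) of \(\Game\) is again a stochastic game, and the restriction of \(\FWMPL[\Game]\) to \(\Game'\) is exactly the fixed window objective \(\FWMPL[\Game']\) on \(\Game'\), so the lemma applies verbatim. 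With the hypothesis in hand, I would instantiate \(\TwoPlayerWin_{\FWMPL}\) in Theorem~\ref{thm:twop_to_stochastic} with the \(\PTime\) algorithm for \(\FWMPL\) on non-stochastic games from \cite{CDRR15}, so that \(\mathbb{C}\) is polynomial. The bounds \(\bigO(\abs{\Vertices} \cdot (\mathbb{C} + \abs{\Edges}))\) and \(\bigO(\abs{\Vertices}^2 \cdot (\mathbb{C} + \abs{\Edges}))\) then place positive and almost-sure satisfaction in \(\PTime\). For the quantitative problem I would apply Corollary~\ref{cor:quantitative_satisfaction_complexity} with an oracle \(B\) deciding the qualitative problems; since the previous step shows \(B \in \PTime\), the class \(\NP^{B} \cap \coNP^{B}\) collapses to \(\NP \cap \coNP\).

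For the memory bounds I would treat the two players separately. For \(\PlayerMain\), Theorem~\ref{thm:twop_to_stochastic} already guarantees that her optimal memory in the stochastic game does not exceed her optimal memory in the non-stochastic game, which is \(\WindowLength\) by Theorem~\ref{thm:MemoryPlayer1}. For \(\PlayerAdversary\) the generic reduction is silent, so here I would invoke the explicit reset-strategy construction: Construction~\ref{con:reset-strategy-mealy-machine-adversary} turns a finite-memory winning strategy of \(\PlayerAdversary\) in \(\TwoPlayerGame\) into an almost-sure winning strategy in \(\Game\) whose Mealy machine uses a subset of the original states, hence no extra memory; combined with Theorem~\ref{thm:memoryPlayer2} this gives memory \(\abs{\Vertices} \cdot \WindowLength\) for almost-sure winning. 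The positive-winning strategy \(\Strategy[\Adversary]^{\Pos}\) described right after the construction is this reset strategy together with a \emph{memoryless} positive-attractor strategy, so it uses the same memory. Finally, the memory clause of Corollary~\ref{cor:quantitative_satisfaction_complexity} propagates both bounds to the quantitative case, completing the statement.

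I expect the only genuine subtlety to be the \(\PlayerAdversary\) memory bound: Theorem~\ref{thm:twop_to_stochastic} explicitly does not control \(\PlayerAdversary\)'s memory, so one cannot simply cite it and must instead appeal to the specific reset construction (and note that passing from the almost-sure strategy to the positive and quantitative ones costs no additional memory). The remaining steps are bookkeeping; the one point worth stating carefully is that \(\SAS\) is needed in all subgames and that this is inherited from Lemma~\ref{lem:fwmp-sas} because \(\FWMPL\) restricts to a fixed window objective on each subgame.
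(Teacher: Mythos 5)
Your proposal is correct and assembles exactly the same ingredients as the paper: Lemma~\ref{lem:fwmp-sas} (noting it applies to every subgame since \(\FWMPL\) restricts to the fixed window objective there), Theorem~\ref{thm:twop_to_stochastic} instantiated with the \(\PTime\) solver of~\cite{CDRR15}, Corollary~\ref{cor:quantitative_satisfaction_complexity} for the \(\NP \cap \coNP\) bound, Theorem~\ref{thm:MemoryPlayer1} for \(\PlayerMain\)'s memory, and Theorem~\ref{thm:memoryPlayer2} together with the reset construction (Construction~\ref{con:reset-strategy-mealy-machine-adversary}) and the memoryless positive-attractor combination for \(\PlayerAdversary\)'s memory. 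You also correctly identified the one genuine subtlety --- that Theorem~\ref{thm:twop_to_stochastic} is silent on \(\PlayerAdversary\)'s memory and the reset strategy must be invoked explicitly --- which is precisely how the paper handles it.
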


\begin{exa}
    Consider the stochastic game \(\Game\) shown in \figurename~\ref{fig:posfwmp_example_stochastic},
    and objective \(\FWMPL\) with window length $\WindowLength = 2$. 
    \begin{figure}[t]
        \begin{subfigure}[b]{0.45\textwidth}
            \centering
            \begin{tikzpicture}
                \node[random, draw] (v1) {\(\Vertex[1]\)};
                \node[square, draw, left of=v1] (v2) {\(\Vertex[2]\)};
                \node[state, right of=v1] (v3) {\(\Vertex[3]\)};
                \draw 
                      (v1) edge[bend right, above] node{\(\EdgeValues{0}{.2}\)} (v2)
                      (v1) edge[auto] node{\(\EdgeValues{0}{.8}\)} (v3)
                      (v2) edge[bend right, below] node{\(\EdgeValues{-1}{}\)}  (v1)
                      (v2) edge[loop left] node{\(\EdgeValues{0}{}\)} (v2)
                      (v3) edge[loop right] node{\(\EdgeValues{0}{}\)}(v3)   
                ;
            \end{tikzpicture}
            \caption{A stochastic game \(\Game\) with three vertices.}
            \label{fig:posfwmp_example_stochastic}
        \end{subfigure}
        \hfill
        \begin{subfigure}[b]{0.45\textwidth}
            \centering
            \begin{tikzpicture}
                \node[square, draw] (v1) {\(\Vertex[1]\)};
                \node[square, draw, left of=v1] (v2) {\(\Vertex[2]\)};
                \node[state, right of=v1] (v3) {\(\Vertex[3]\)};
                \draw 
                      (v1) edge[bend right, above] node{\(\EdgeValues{0}{}\)} (v2)
                      (v1) edge[auto] node{\(\EdgeValues{0}{}\)} (v3)
                      (v2) edge[bend right, below] node{\(\EdgeValues{-1}{}\)}  (v1)
                      (v2) edge[loop left] node{\(\EdgeValues{0}{}\)} (v2)
                      (v3) edge[loop right] node{\(\EdgeValues{0}{}\)}(v3)   
                ;
            \end{tikzpicture}
            \caption{The non-stochastic game corresponding to \(\Game\).}
            \label{fig:poswmp1_example_adversarial}
        \end{subfigure}
        \centering
        \caption{For all \(\WindowLength \ge 1\), \(\PlayerMain\) can positively satisfy \(\FWMPL\) from every vertex in \(\Game\).}
        \label{fig:poswmp1_example}
    \end{figure}
    It is easy to see that all vertices are positively (even almost-surely) 
    winning for \(\PlayerMain\) in \(\Game\).
    We compute the positive winning region as follows. First, consider the
    non-stochastic game \(\TwoPlayerGame\) (\figurename~\ref{fig:poswmp1_example_adversarial}). The winning region for \(\PlayerMain\)
    in \(\TwoPlayerGame\) is $\{\Vertex[3]\}$, and we thus have that the
    \(\PlayerMain\) positive attractor of $\{\Vertex[3]\}$, which is $\{\Vertex[1],\Vertex[3]\}$ is positively winning. The complement of the positive attractor induces the subgame
    with a single vertex~$\Vertex[2]$, that can be solved recursively to get that \(\PlayerMain\)
    positively (even almost-surely) wins from $\Vertex[2]$. 
    Therefore, we conclude that \(\PlayerMain\) is positive winning from every vertex.
    \qed
\end{exa}

\begin{exa}
    Consider the stochastic game \(\Game\) shown in \figurename~\ref{fig:poswmp2_example}, and objective \(\FWMPL\) with window length \(\WindowLength = 2\). 
    \begin{figure}[t]
        \centering
        \begin{tikzpicture}
            \node[state] (v1) {\(\Vertex[1]\)};
            \node[random, draw, right of=v1] (v2) {\(\Vertex[2]\)};
            \node[square, draw, right of=v2, xshift=+5mm] (v3) {\(\Vertex[3]\)};
            \node[state, right of=v3, xshift=+5mm] (v4) {\(\Vertex[4]\)};
            \draw 
                  (v1) edge[loop left] node{\(\EdgeValues{-1}{}\)} (v1)
                  (v1) edge[auto] node{\(\EdgeValues{0}{}\)} (v2)
                  (v2) edge[auto] node{\(\EdgeValues{0}{.1}\)} (v3)
                  (v2) edge[bend right, below left, pos=0.2] node{\(\EdgeValues{0}{.9}\)} (v4)
                  (v3) edge[loop right] node{\(\EdgeValues{0}{}\)} (v3)
                  (v4) edge[loop right] node{\(\EdgeValues{-1}{}\)} (v4)
            ;
        \end{tikzpicture}
        \caption{\(\PlayerMain\) almost surely wins in \(\Game\) for the objective \(\FWMPObj{2}\) from \(\{\Vertex[3]\}\), while \(\PlayerAdversary\) positively wins from \(\{\Vertex[1], \Vertex[2], \Vertex[4]\}\).}
        \label{fig:poswmp2_example}
    \end{figure}
    We compute the almost-sure winning region for \(\PlayerMain\) by first computing the positive winning region for \(\PlayerAdversary\), which we do as follows. 
    Using \(\PosWin_{\FWMPL}\), the positive winning region for \(\PlayerMain\) in \(\Game\) is \(\{\Vertex[1], \Vertex[2], \Vertex[3]\}\). 
    The complement of this set, \(\{\Vertex[4]\}\), is the almost-sure winning region for \(\PlayerAdversary\) in \(\Game\). 
    The \(\PlayerAdversaryDash\) positive attractor of \(\{\Vertex[4]\}\) is \(\{\Vertex[2], \Vertex[4]\}\), and we can conclude that this set is positively winning for \(\PlayerAdversary\). The complement of the positive attractor induces the subgame with vertices \(\{\Vertex[1], \Vertex[3]\}\), which can be solved recursively to get that \(\PlayerAdversary\) positively (even almost-surely) wins from \(\{\Vertex[1]\}\) but does not win even positively from \(\{\Vertex[3]\}\). 
    Therefore, we conclude that \(\PlayerAdversary\) positively wins in the original stochastic game from \(\{\Vertex[1], \Vertex[2], \Vertex[4]\}\), and \(\PlayerMain\) almost surely wins from the complement \(\{\Vertex[3]\}\).
    \qed
\end{exa}

\subsection{Bounded window mean-payoff objective}%
\label{sub:bwmp}

We show that the \(\SAS\)~property holds for the objective \(\BWMP[\Game]\) for all stochastic games \(\Game\).

\begin{lem}%
\label{lem:bwmp-sas}
    For all stochastic games \(\Game\), the objective \(\BWMP\) satisfies the \(\SAS\)~property. 
\end{lem}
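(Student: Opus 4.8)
The plan is to reduce the \(\SAS\) property for \(\BWMP\) to the already-established \(\SAS\) property for \(\FWMPL\) (Lemma~\ref{lem:fwmp-sas}), applied at every window length, and then to glue the resulting almost-sure winning strategies of \(\PlayerAdversary\) into a single strategy for \(\overline{\BWMP}\). First I would record the set-theoretic identity \(\overline{\BWMP} = \bigcap_{\WindowLength \ge 1} \FWMPLBar\), which is immediate from \(\BWMP = \bigcup_{\WindowLength \ge 1}\FWMPL\). In particular \(\overline{\BWMP} \subseteq \FWMPLBar\) for every \(\WindowLength\), so any strategy of \(\PlayerAdversary\) winning \(\overline{\BWMP}\) also wins \(\FWMPLBar\). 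Hence the hypothesis \(\TwoPlayerWinningRegion[\TwoPlayerGame]{\Adversary}{\overline{\BWMP}} = \Vertices\) of the \(\SAS\) property yields \(\TwoPlayerWinningRegion[\TwoPlayerGame]{\Adversary}{\FWMPLBar} = \Vertices\) for every \(\WindowLength \ge 1\).

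I would then invoke Lemma~\ref{lem:fwmp-sas}: since \(\FWMPL\) satisfies the \(\SAS\) property, for each \(\WindowLength\) there is a finite-memory strategy \(\Strategy[\Adversary]^{\AS,\WindowLength}\) of \(\PlayerAdversary\) (the reset strategy of Construction~\ref{con:reset-strategy-mealy-machine-adversary}, obtained from the finite-memory two-player strategy of Theorem~\ref{thm:memoryPlayer2}) that is almost-sure winning for \(\FWMPLBar\) from every vertex of \(\Game\). The property I extract is that, from any vertex, \(\Strategy[\Adversary]^{\AS,\WindowLength}\) produces infinitely many open windows of length \(\WindowLength\) with probability~\(1\); in particular the first such window occurs after finitely many steps with probability~\(1\). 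Quantitatively, by Lemma~\ref{lem:open_windows_guarantee_general} together with the reset analysis, within the next \(\Memory_{\WindowLength} \cdot \abs{\Vertices} \cdot \WindowLength\) steps an open window of length \(\WindowLength\) appears with probability at least \(p^{\Memory_{\WindowLength} \cdot \abs{\Vertices} \cdot \WindowLength}\), where \(p\) is the least edge probability in \(\Game\) and \(\Memory_{\WindowLength}\) is the memory size of \(\Strategy[\Adversary]^{\AS,\WindowLength}\).

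Next I would build the almost-sure winning strategy \(\Strategy[\Adversary]^{\AS}\) for \(\overline{\BWMP}\) as an infinite-memory strategy that plays in phases indexed by \(k = 1, 2, 3, \ldots\): in phase \(k\), \(\PlayerAdversary\) plays \(\Strategy[\Adversary]^{\AS,k}\), with its memory reset so that the current vertex is treated as a fresh start, until an open window of length \(k\) has been completed, and then advances to phase \(k+1\). The economy of this scheme comes from the monotonicity of open windows: an open window of length \(k\) is simultaneously an open window of every length \(\WindowLength \le k\) at the same starting position, so it suffices to realise open windows of lengths \(1, 2, 3, \ldots\) at positions tending to infinity. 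Conditioned on reaching the start of phase \(k\) at any vertex, \(\Strategy[\Adversary]^{\AS,k}\) (being almost-sure winning for \(\FWMP(k)\)'s complement from every vertex) guarantees an open window of length \(k\) in finitely many steps with probability~\(1\); thus the probability that phase \(k\) fails to terminate is \(0\). Summing over the countably many phases, each such failure event having probability \(0\), shows that with probability~\(1\) all phases terminate. On every such play there are open windows of length \(k\) for all \(k\) at unboundedly large positions, so for every fixed \(\WindowLength\) there are infinitely many open windows of length \(\WindowLength\), placing the outcome in \(\bigcap_{\WindowLength}\FWMPLBar = \overline{\BWMP}\). Hence \(\ASWinningRegion[\Game]{\Adversary}{\overline{\BWMP}} = \Vertices\).

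The main obstacle is this combination step. The delicate points are \((i)\) that each \(\Strategy[\Adversary]^{\AS,k}\) is almost-sure winning from \emph{every} vertex, which is exactly what legitimises resetting it at an arbitrary current vertex when a new phase begins, and \((ii)\) the measure-theoretic passage from ``each phase terminates with probability~\(1\)'' to ``all phases terminate with probability~\(1\)'', which is a routine countable-additivity argument but must be stated carefully since the phases and their starting configurations are themselves random. I expect the construction to necessarily yield an infinite-memory strategy, in agreement with the known fact that \(\PlayerAdversary\) may require infinite memory for \(\BWMP\); it remains only to note that, by determinacy and the reduction of Theorem~\ref{thm:twop_to_stochastic}, the corresponding positive winning strategy for \(\PlayerAdversary\) is obtained in the usual way by composing \(\Strategy[\Adversary]^{\AS}\) with a memoryless positive-attractor strategy.
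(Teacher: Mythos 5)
Your proposal is correct, and its first half coincides with the paper's: the identity \(\BWMPBar = \bigcap_{\WindowLength \ge 1} \FWMPLBar\), the observation that the hypothesis yields \(\TwoPlayerWinningRegion[\TwoPlayerGame]{\Adversary}{\FWMPLBar} = \Vertices\) for every \(\WindowLength\), the per-length reset strategies \(\Strategy[\Adversary]^{\WindowLength}\) from Lemma~\ref{lem:fwmp-sas}, and an infinite-memory strategy that glues them in phases. Where you genuinely diverge is the probabilistic combination step: you let phase \(k\) run \emph{until} an open window of length \(k\) is completed and argue each phase terminates almost surely --- this is precisely the scheme of Lemma~2 of \cite{CHH09} for finitary Streett, which the paper explicitly contrasts with its own proof. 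The paper instead fixes the length of phase \(\Phase{\WindowLength}\) in advance at \(\Memory_{\WindowLength} \cdot \abs{\Vertices} \cdot \WindowLength \cdot \lceil 1/q(\WindowLength)\rceil\) steps with \(q(\WindowLength) = p^{\Memory_{\WindowLength} \cdot \abs{\Vertices} \cdot \WindowLength}\); by Lemma~\ref{lem:open_windows_guarantee_general} each of the \(\lceil 1/q(\WindowLength)\rceil\) blocks produces an open window of length \(\WindowLength\) with probability at least \(q(\WindowLength)\), so the event \(E_\WindowLength\) that the phase contains one fails with probability at most \((1-q(\WindowLength))^{\lceil 1/q(\WindowLength)\rceil} \le 1/e\), a bound uniform in \(\WindowLength\) and in the conditioning, and the paper then shows by an explicit disjoint decomposition (the events \(F_j\)) and countable additivity that only finitely many \(E_\WindowLength\) occurring has probability \(0\) --- deliberately avoiding the second Borel--Cantelli lemma, since independence of the \(E_\WindowLength\) is unclear. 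What each buys: the paper's fixed-length phases make the strategy's timing deterministic and reduce the measure theory to bounding a countable union of events each with an explicit numerical failure bound; your variable-length phases are conceptually lighter (no \(1/e\) estimate, no \(F_j\) bookkeeping, and via window monotonicity a single witness of length \(k\) per phase covers all \(\WindowLength \le k\)), but they shift the burden onto the conditioning issue you yourself flag in point \((ii)\): to make ``phase \(k\) terminates a.s.''\ rigorous you should decompose the event ``phase \(k\) starts with prefix \(\Prefix\) and never ends'' over the countably many finite prefixes \(\Prefix\), use that \(\Strategy[\Adversary]^{k}\) is almost-sure winning for \(\FWMPLBar\) (with \(\WindowLength = k\)) from \(\Last{\Prefix}\) against the shifted strategy of \(\PlayerMain\) --- so the first open window of length \(k\) arrives in finite time with conditional probability \(1\) --- and conclude each such event is null before taking the countable union over prefixes and over \(k\). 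With that step written out, your proof is sound and constitutes a valid alternative to the paper's argument.
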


\begin{proof}
    We need to show that for all stochastic games \(\Game\), if \(\TwoPlayerWinningRegion[\TwoPlayerGame]{\Adversary}{\BWMPBar} = \Vertices\), then  \(\ASWinningRegion[\Game]{\Adversary}{\BWMPBar} = \Vertices\). 
    Since every play that satisfies \(\BWMPBar\) also satisfies \(\FWMPLBar\) for all \(\WindowLength \ge 1\), we have that \(\TwoPlayerWinningRegion[\TwoPlayerGame]{\Adversary}{\BWMPBar}  = \Vertices\) implies \(\TwoPlayerWinningRegion[\TwoPlayerGame]{\Adversary}{\FWMPLBar} = \Vertices\). 
    It follows that for each \(\WindowLength \ge 1\), \(\PlayerAdversary\) has a finite-memory strategy (say, with memory \(\Memory_{\WindowLength}\)), that is winning for the \(\FWMPLBar\) objective from all vertices in \(\TwoPlayerGame\).
    For every such strategy, we construct a reset strategy \(\Strategy[\Adversary]^{\WindowLength}\) of memory size at most \(\Memory_{\WindowLength}\) as described in the proof of Lemma~\ref{lem:fwmp-sas} that is almost-sure winning for the \(\FWMPLBar\) objective from all vertices.
    We use these strategies to construct an infinite-memory strategy \(\Strategy[\Adversary]^{\AS}\) of \(\PlayerAdversary\) that is almost-surely winning for \(\BWMPBar\) from all vertices in the stochastic game~\(\Game\). 
    
    Let \(p\) be the minimum probability over all edges in the game, and for all \(\WindowLength \ge 1\), let \(q(\WindowLength)\) denote \(p^{\Memory_{\WindowLength} \cdot \abs{\Vertices} \cdot \WindowLength}\). 
    We partition a play of the game into phases \(\Phase{1}, \Phase{2}, \ldots\) such that for all \(\WindowLength \ge 1\), the length of phase \(\Phase{\WindowLength}\) is equal to \(\Memory_{\WindowLength} \cdot \abs{\Vertices} \cdot \WindowLength \cdot \left\lceil 1/q(\WindowLength)\right\rceil\).
    We define the strategy \(\Strategy[\Adversary]^{\AS}\) as follows: 
    if the game is in phase~\(\WindowLength\), then \(\Strategy[\Adversary]^{\AS}\) is \(\Strategy[\Adversary]^{ \WindowLength}\),
    the reset strategy that is almost-sure winning for \(\FWMPLBar\) in \(\Game\). 
    
    We show that \(\Strategy[\Adversary]^{\AS}\) is almost-sure winning for \(\PlayerAdversary\) for \(\BWMPBar\) in~\(\Game\).
    Let \(E_\WindowLength\) denote the event that phase \(\Phase{\WindowLength}\) contains an open window of length \(\WindowLength\). 
    Given a play \(\Play\), if \(E_\WindowLength\) occurs in~\(\Play\) for infinitely many \(\WindowLength \ge 1\), then for every suffix of \(\Play\) and for all \(\WindowLength \ge 1\), the suffix contains an open window of length \(\WindowLength\), and \(\Play\) satisfies \(\BWMPBar\). 
    For all \(\WindowLength \ge 1\), we compute the probability that \(E_\WindowLength\) occurs in the outcome. 
    For all \(\WindowLength \ge 1\), we can divide phase \(\WindowLength\) into \(\left\lceil 1/q(\WindowLength) \right\rceil\) blocks of length \(\Memory_{\WindowLength} \cdot \abs{\Vertices} \cdot \WindowLength\) each.
    If at least one of these blocks contains an open window of length~\(\WindowLength\), then the event \(E_\WindowLength\) occurs.
    It follows from the proof of Lemma~\ref{lem:fwmp-sas} that if \(\PlayerAdversary\) follows~\(\Strategy[\Adversary]^{\WindowLength}\), then the probability that there exists an open window of length \(\WindowLength\) in the next \(\Memory_{\WindowLength} \cdot \abs{\Vertices} \cdot \WindowLength\) steps is at least \(q(\WindowLength)\).
    Hence, the probability that none of the blocks in the phase contains an open window of length \(\WindowLength\) is at most \((1-q(\WindowLength))^{\left\lceil 1/q(\WindowLength) \right\rceil}\).
    Thus, the probability that \(E_\WindowLength\) occurs in phase \(\Phase{\WindowLength}\) is at least \( 1 - (1-q(\WindowLength))^{\left\lceil 1/q(\WindowLength)\right\rceil} > 1-\frac{1}{e} \approx 0.63 > 0\).
    It follows that with probability~$1$, for infinitely many values of \(\WindowLength \ge 1\), the event \(E_{\WindowLength}\) occurs in \(\Play\). 
    
    To show that \(E_\WindowLength\) occurs for infinitely many \(\WindowLength \ge 1\) in the outcome with probability~\(1\),%
    \footnote{%
        The sum \(\sum_{\WindowLength \ge 1} \Prob(E_\WindowLength)\) diverges to infinity.  If we can show that the events \(E_\WindowLength\) are independent, then by the second Borel-Cantelli lemma \cite{Durrett2010}, this would directly imply that the probability of infinitely many of \(E_\WindowLength\) occurring is 1. 
        However, we do not know if they are independent, so we are not able to apply the Borel-Cantelli lemma.
    }
    we show an equivalent statement: the probability that \(E_{\WindowLength}\) occurs for only finitely many values of \(\WindowLength \ge 1\) in the outcome is~\(0\). 
    Let \(F\) be the set of all plays consistent with \(\Strategy[\Adversary]^{\AS}\) in which only finitely many \(E_\WindowLength\) occur. 
    We construct countably many subsets \(F_0, F_1, \ldots\) of \(F\) as follows: 
    let~\(F_0\) be the set of all plays in \(F\) in which \(E_k\) does not occur for all \(k \ge 1\);
    and for all \(j \ge 1\), 
    let \(F_j\) consist of all plays in \(F\) in which \(E_j\) occurs, but for all \(k > j\), the events \(E_k\) do not occur (and for \(i < j\), the event \(E_i\) may or may not occur).
    Observe that \(\bigcup_{j \ge 0} F_j = F\) and \(F_i \intersection F_j = \emptyset\) for all \(i \ne j\). 
    
    For all \(k \ge 1\), the probability that \(E_k\) does not occur is at most \((1-q(k))^{\left\lceil 1/q(k) \right\rceil}\) which is at most 0.37, irrespective of whether any other \(E_j\)'s occur or not (again, this is because the probability that a block contains an open window of length \(\WindowLength\) is at least \(q(\WindowLength)\), independent of what happens in the rest of the play).
    For all \(j \ge 0\), in the event \(F_j\), for all \(k > j\), we have that \(E_k\) does not occur. 
    Since each \(E_k\) does not occur with probability at most 0.37, the probability of \(F_j\) is at most \(\prod_{k > j} (0.37)\), which is~\( 0\). 
    The event that finitely many \(E_\WindowLength\)'s occur is the countable union of disjoint events \(\bigcup_{j \ge 0} F_j\). 
    Since the probability measure of each~\(F_j\) is zero, and a countable sum of zero measure events has zero measure~\cite{Rudin87}, this implies that finitely many \(E_{\WindowLength}\) occur with probability zero. 
    Thus, the probability that \(E_{\WindowLength}\) occurs for infinitely many \(i \ge 1\) is~1.

    Hence, the objective \(\BWMPBar\) is satisfied with probability~\(1\) from all vertices in the stochastic game \(\Game\),
    and we have that \(\ASWinningRegion[\Game]{\Adversary}{\BWMPBar} = \Vertices\).
\end{proof}

Note that Lemma~2 in~\cite{CHH09} is similar to Lemma~\ref{lem:bwmp-sas} but refers to a different objective (finitary Streett instead of \(\BWMP\)).
However, the proofs have the following differences.
In our proof, each phase of a play lasts for a fixed predetermined length and we show that for all \(\WindowLength \ge 1\), in phase \(\WindowLength\), the probability that an open window of length \(\WindowLength\) occurs is at least 0.37, which is independent of \(\WindowLength\).  We use this to conclude that with probability 1, for infinitely many \(\WindowLength \ge 1\), phase \(\WindowLength\) contains an open window of length \(\WindowLength\), and thus with probability 1, the play satisfies \(\BWMPBar\). 
In the proof in \cite{CHH09}, a play continues to be in the \(\WindowLength^{\text{th}}\) phase until an open window of length \(\WindowLength\) appears.
The \(\WindowLength^{\text{th}}\) phase does not end until an open window of length \(\WindowLength\) is observed in the phase. 
They show that for each phase in the play, the phase ends with probability 1, and thus with probability 1, the play contains open windows of length \(\WindowLength\) for all \(\WindowLength \ge 1\).

Note that solving a non-stochastic game with the \(\BWMP\) objective is in \(\NP \intersection \coNP\) \cite{CDRR15}.
Thus by Corollary~\ref{cor:quantitative_satisfaction_complexity}, quantitative satisfaction for \(\BWMP\) is in \(\NP^{\NP \intersection \coNP} \intersection \coNP^{\NP \intersection \coNP}\). 
From \cite{Sch83}, we have that \(\NP^{\NP \intersection \coNP} = \NP\) and \(\coNP^{\NP \intersection \coNP} = \coNP\). 
To see this, suppose for an alphabet \(\Sigma\), if \(L \subseteq \Sigma^* \) is a language in \(\NP \intersection \coNP\), then for all \(x \in \Sigma^*\), there either exists a short witness for \(x\) belonging to \(L\) or a short witness for \(x\) not belonging to \(L\).
A nondeterministic Turing machine can guess one of these witnesses and verify in polynomial time whether \(x \in L\) or \(x \notin L\).
Hence, an \(\NP \intersection \coNP\) oracle can be simulated by an \(\NP\) machine, and we have \(\NP^{\NP \intersection \coNP} = \NP\). 
For an oracle \(\mathcal{A}\), a language \(L\) belongs to \(\coNP^{\mathcal{A}}\) if and only if its complement \(\overline{L}\) belongs to \(\NP^{\mathcal{A}}\). 
Since the quantitative satisfaction problem belongs to \(\coNP^{\mathcal{A}}\), its complement belongs to \(\NP^{\mathcal{A}}\), which is \(\NP\) for \(\mathcal{A}\) in \(\NP \intersection \coNP\), and thus, the value problem belongs to \(\coNP\).
Therefore, quantitative satisfaction of \(\BWMP\) is in \(\NP \intersection \coNP\).

Moreover, from~\cite{CDRR15}, \(\PlayerMain\) has a memoryless strategy and \(\PlayerAdversary\) needs infinite memory to play optimally in non-stochastic games with \(\BWMP\) objective.
From the proof of Lemma~\ref{lem:bwmp-sas}, 
by using the strategy \(\Strategy[\Adversary]^{\AS}\), \(\PlayerAdversary\) almost-surely wins \(\BWMPBar\) from all vertices in \(\ASWinningRegion[\Game]{\Adversary}{\BWMPBar}\).
We can construct a positive winning strategy \(\Strategy[\Adversary]^{\Pos}\) for \(\PlayerAdversary\) from all vertices in \(\PosWinningRegion[\Game]{\Adversary}{\BWMPBar}\) in a similar manner as done for the positive winning strategy for \(\FWMPLBar\) in Section~\ref{sub:fwmp-sas}.
Using similar reasoning as in Remark~\ref{rem:deterministic-strategy-reset-strategy} in Section~\ref{sub:fwmp-sas}, it follows that deterministic strategies suffice for \(\PlayerAdversary\) for the positive and almost-sure satisfaction of the \(\BWMPBar\) objective.
We summarize the results in the following theorem:
\begin{thm}
\label{thm:quantitative_bwmp}
    Given a stochastic game \(\Game\) and a threshold \(p \in [0,1]\), for \(\BWMP[\Game]\), the positive, almost-sure, and quantitative satisfaction for \(\PlayerMain\) are in \(\NP \cap \coNP\).
    Moreover, a memoryless strategy suffices for \(\PlayerMain\), while \(\PlayerAdversary\) requires an infinite memory strategy to play optimally.
\end{thm}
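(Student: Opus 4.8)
The plan is to derive all four claims—the complexity of the positive, almost-sure, and quantitative problems, together with the two memory bounds—by combining the reduction machinery of Section~\ref{sec:reducing_stochastic_games_to_two_player_games} with the known results for non-stochastic \(\BWMP\) games from~\cite{CDRR15}. The only ingredient specific to \(\BWMP\) is Lemma~\ref{lem:bwmp-sas}, which states that \(\BWMP\) satisfies the \(\SAS\)~property; since \(\BWMP\) restricted to any subgame is again a bounded window mean-payoff objective, the hypothesis of Theorem~\ref{thm:twop_to_stochastic} and Corollary~\ref{cor:quantitative_satisfaction_complexity} (the \(\SAS\)~property holds in every subgame) is satisfied, and everything else is assembly.

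First I would handle the two qualitative problems. By Theorem~\ref{thm:twop_to_stochastic} instantiated with \(\Objective = \BWMP\), deciding positive and almost-sure satisfaction in \(\Game\) amounts to running Algorithms~\ref{alg:poswin} and~\ref{alg:aswin}, which make only polynomially many calls to a subroutine \(\TwoPlayerWin_{\BWMP}\) solving the non-stochastic game \(\TwoPlayerGame\). Since solving non-stochastic \(\BWMP\) games is in \(\NP \cap \coNP\)~\cite{CDRR15}, each such call is a query to an \(\NP \cap \coNP\) oracle, so the whole procedure is a polynomial-time Turing reduction to \(\NP \cap \coNP\). The key fact I would invoke is that \(\NP \cap \coNP\) is closed under such reductions, i.e.\ \(\PTime^{\NP \cap \coNP} = \NP \cap \coNP\): for each oracle query one nondeterministically guesses either a short positive or a short negative witness and verifies it in polynomial time, which is exactly the witness-guessing argument the paper already uses for the Schöning collapse. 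This places positive and almost-sure satisfaction in \(\NP \cap \coNP\).

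Next I would treat the quantitative problem. With the two qualitative problems now in \(\NP \cap \coNP\), let \(B\) be an oracle for qualitative satisfaction, so \(B \in \NP \cap \coNP\). Corollary~\ref{cor:quantitative_satisfaction_complexity} gives that quantitative satisfaction is in \(\NP^{B} \cap \coNP^{B} \subseteq \NP^{\NP \cap \coNP} \cap \coNP^{\NP \cap \coNP}\), and applying \(\NP^{\NP \cap \coNP} = \NP\) and \(\coNP^{\NP \cap \coNP} = \coNP\) from~\cite{Sch83} collapses this to \(\NP \cap \coNP\), finishing the complexity part. For the memory bounds, the upper bound for \(\PlayerMain\) follows from the ``moreover'' clause of Theorem~\ref{thm:twop_to_stochastic}: her optimal memory in the stochastic game is no larger than in the non-stochastic game, where memoryless strategies suffice~\cite{CDRR15}; Corollary~\ref{cor:quantitative_satisfaction_complexity} then carries the memoryless bound from the qualitative to the quantitative problem. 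For \(\PlayerAdversary\), the upper bound is witnessed by the infinite-memory strategy \(\Strategy[\Adversary]^{\AS}\) built in the proof of Lemma~\ref{lem:bwmp-sas} (together with the analogous positive-winning strategy obtained exactly as in Section~\ref{sub:fwmp-sas}), while the matching lower bound comes for free: a non-stochastic game is the special case \(\VerticesRandom = \emptyset\), and \(\PlayerAdversary\) already requires infinite memory there~\cite{CDRR15}.

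I expect the main subtlety to be the bookkeeping needed to establish that the qualitative problems are themselves in \(\NP \cap \coNP\) \emph{before} feeding them as an oracle to Corollary~\ref{cor:quantitative_satisfaction_complexity}. Theorem~\ref{thm:twop_to_stochastic} is phrased in terms of running time \(\mathbb{C}\) of the non-stochastic solver rather than in terms of oracle access, so one must reinterpret its polynomially-many subroutine invocations as oracle queries and then argue the closure \(\PTime^{\NP \cap \coNP} = \NP \cap \coNP\) explicitly, so that chaining it with the Schöning collapse introduces no circularity.
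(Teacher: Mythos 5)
Your proposal is correct and follows essentially the same route as the paper: Lemma~\ref{lem:bwmp-sas} supplies the \(\SAS\) property, Theorem~\ref{thm:twop_to_stochastic} and Corollary~\ref{cor:quantitative_satisfaction_complexity} reduce everything to the non-stochastic \(\NP \cap \coNP\) result of~\cite{CDRR15}, the collapse \(\NP^{\NP \cap \coNP} = \NP\) and \(\coNP^{\NP \cap \coNP} = \coNP\) from~\cite{Sch83} finishes the complexity claims, and the memory bounds transfer exactly as you describe (memoryless for \(\PlayerMain\) via Theorem~\ref{thm:twop_to_stochastic} and Corollary~\ref{cor:quantitative_satisfaction_complexity}; infinite memory for \(\PlayerAdversary\) via the strategy \(\Strategy[\Adversary]^{\AS}\) of Lemma~\ref{lem:bwmp-sas} and the non-stochastic lower bound). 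Your explicit verification that the qualitative problems lie in \(\NP \cap \coNP\) via the closure \(\PTime^{\NP \cap \coNP} = \NP \cap \coNP\) makes precise a step the paper leaves implicit before feeding the oracle \(B\) into Corollary~\ref{cor:quantitative_satisfaction_complexity}, but it is the same witness-guessing argument the paper already uses, not a different approach.
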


\begin{rem}
Solving non-stochastic games with \(\BWMP\) objective is at least as hard as solving traditional mean-payoff games~\cite{CDRR15}. 
Since non-stochastic games are a special case of stochastic games, it follows that solving the positive, almost-sure, and quantitative satisfaction problems for stochastic games with \(\BWMP\) objective is at least as hard as solving traditional mean-payoff games. 
\end{rem}

\section*{Acknowledgment}
We thank Mickael Randour for pointing out reference~\cite{BHR16a}, making us aware of the bugs in the algorithms of~\cite{CDRR15}  and the correct version of these algorithms as presented here.

\bibliographystyle{alphaurl}
\bibliography{mybib}

\end{document}